\documentclass[nofootinbib,twocolumn,pra,letterpaper,longbibliography]{revtex4-2}
\usepackage{latexsym,amsmath,amssymb,amsfonts,graphicx,color,amsthm}
\usepackage{enumerate}
\usepackage{enumitem}
\usepackage{braket}
\usepackage{adjustbox}
\usepackage{footnote}
\usepackage[dvipsnames]{xcolor}
\usepackage{tipa}
\usepackage{textcomp}
\usepackage{fontenc}
\usepackage{mathrsfs}
\usepackage{color}
\usepackage{colortbl}
\usepackage{graphics,graphicx}
\usepackage{txfonts}
\usepackage{lipsum, babel}
\usepackage{bbm}

\usepackage[unicode=true,pdfusetitle, bookmarks=true,bookmarksnumbered=false,bookmarksopen=false, breaklinks=false,pdfborder={0 0 0},backref=false,colorlinks=false] {hyperref}
\hypersetup{ colorlinks,linkcolor=myurlcolor,citecolor=myurlcolor,urlcolor=myurlcolor}

\definecolor{myurlcolor}{rgb}{0,0,0.7}

\usepackage{float}
\usepackage{hyperref}
\usepackage[T1]{fontenc}

\newcommand{\cD}{\mathcal{D}}

\newcommand{\cH}{\mathcal{H}}
\newcommand{\cI}{\mathcal{I}}

\newcommand{\cK}{\mathcal{K}}
\newcommand{\cL}{\mathcal{L}}
\newcommand{\cM}{\mathcal{M}}
\newcommand{\cN}{\mathcal{N}}

\newcommand{\cS}{\mathcal{S}}

\newcommand{\Id}{\mathbbm{1}}
\newcommand{\tr}{\text{Tr}}

\newcommand{\Proj}{\mathsf P}

\newtheorem{theorem}{Theorem}
\newtheorem{proposition}{Proposition}
\newtheorem{lemma}{Lemma}

\newtheorem{definition}{Definition}

\newtheorem{remark}{Remark}

\begin{document}

\title{Informational completeness of qubit measurements and IC preservability of qubit channels: Characterization and Quantification}

\author{Jatin Ghai$^{1,2,}$}
\email{jghai98@gmail.com}

\author{Arindam Mitra$^{3,}$}
\email{am56@iitbbs.ac.in}
\email{arindammitra143@gmail.com}

\affiliation{$^1$Optics and Quantum Information Group, The Institute of Mathematical Sciences, C. I. T. Campus, Taramani, Chennai 600113, India.\\
$^2$Homi Bhabha National Institute, Training School Complex, Anushaktinagar, Mumbai 400094, India.\\
$^3$Department of Physics, School of Basic Sciences, Indian Institute of Technology Bhubaneswar, Odisha 752050, India.
}

\begin{abstract}
    Informationally complete (IC) measurements are a useful class of measurements, as their outcome statistics uniquely determine an unknown quantum state. Hence, they are important for certain tasks such as quantum state tomography, quantum process tomography, etc. In this work, we study the quantification of informational completeness for arbitrary quantum measurements by introducing and characterizing a faithful measure for it. We explicitly evaluate the informational completeness of qubit symmetric informationally complete (SIC) measurements and show that it is an upper bound for all qubit minimal informationally complete measurements. Furthermore, by introducing a faithful measure, we try to quantify and characterize the ability of an arbitrary quantum channel to preserve informational completeness of any IC measurement when the channel acts on it in the Heisenberg picture. We call this measure informational completeness-preservability (IC preservability) of quantum channels. After studying its properties, we establish its relation to another quantity, namely, the absolute output coherence of a quantum channel, which quantifies the minimum amount of coherence (w.r.t. an arbitrary incoherent basis) that can always be obtained from the output of that channel. Finally, we illustrate our results with an example from light-matter interaction scenarios. Thus, in this work, not only do we try to provide a quantitative framework for studying both the informational completeness of quantum measurements and the ability of quantum channels to preserve it, but we also try to offer key insight into the conceptual relation between informational completeness and quantum coherence. 
\end{abstract}

\maketitle

\section{Introduction}

Quantum measurements play a central role in quantum mechanics. They serve as a bridge between the abstract mathematical description of a quantum system and the actual information about the quantum system that can be accessed in practical experiments. Thus, the characterization of quantum systems is fundamentally based on the extraction of information through measurement. In many quantum information-processing tasks, such as quantum state tomography and quantum process tomography, it is necessary to perform measurements whose outcome statistics are sufficient to uniquely determine a completely unknown quantum state. This class of measurements is known as \emph{informational complete} (IC) measurements\cite{Ariano_IC_2004,busch_1991_informationally,Ariano_2005_IC}. For an unknown quantum state, the statistics generated by an informationally complete measurement contain sufficient information to uniquely determine that unknown state. Consequently, informationally complete measurements are necessary for both the foundational exploration of the quantum theory, as well as, for the practical side of quantum theory in experimental settings. Among informationally complete measurements, a special class of measurements is \emph{symmetric informationally complete} (SIC) measurements, also known as SIC-POVMs (Symmetric Informationally Complete Positive Operator-Valued Measures)\cite{Renes_SIC_Conj,samuel_2025_symmetric}. A SIC measurement in a $d$-dimensional Hilbert space consists of $d^2$ rank-one measurement operators with equal pairwise overlaps. Besides the convenient symmetrical structure, SIC-POVMS also have practical importance. They are very useful for quantum-state tomography\cite{Wooters_1989,Scott_2006}, and has applications in entanglement detection\cite{Shang_2018}, dimension witnessing\cite{Brunner_2013}, device-independent randomness generation\cite{Andersson_2018}, and the certification of quantum measurement devices\cite{Tavakoli_2020,Miron_2019}.

If white noise is mixed with an informationally complete measurement, it may remain informationally complete\cite{Mitra_2026}. But the addition of white noise makes it less robust. In other words, its ability to discriminate between quantum states is decreased. Therefore, a natural question is how to quantify the informational completeness of a given quantum measurement, and such an investigation has not been widely conducted yet to the best of our knowledge. Since neither informationally complete nor incomplete measurements form a convex set\cite{Heinosaari_book_QF,watrous_2018_theory}, measures such as robustness\cite{Roope_2015_Rob,Paul_2019_Rob} and weight\cite{Roope_2020_Weight} are generally not well-defined or well-behaved. Therefore, in this work, we are \emph{motivated} to quantify the informational completeness of quantum measurements. Now, when a quantum channel acts on quantum measurements in the Heisenberg picture, it may decrease the quality of informational completeness or destroy it. Therefore, it is important to study the informational completeness preservability of quantum channels, and such an investigation also serves as another \emph{motivation} behind this work. In this work, we also demonstrate the relation of informational completeness preservability with absolute output coherence of a quantum channel, which offers key insight into the conceptual relationship between informational completeness and quantum coherence. It should also be mentioned that apart from a few general results, we mostly restrict ourselves to qubit systems in this work.

The rest of the work is organized as follows. In Sec. \ref{Sec:Preliminaries} we discuss the necessary preliminaries. We discuss our main results in Sec. \ref{Sec:Main}. More specifically, in Sec. \ref{Sec:Info_comp} we study the quantification of \emph{information completeness} for quantum measurements and explore its properties. In Sec. \ref{Sec:Info_Comp_Pres}, we quantify the \emph{informational completeness preservability} of quantum channels and try to characterize it. In Sec. \ref{Sec:Coh_Pres_Pow}, we define and study the \emph{absolute output coherence} of a qubit channel and establish its relation with IC preservability of the same qubit channel (see Theorem \ref{Th:IC_Coh_rel}). In Sec. \ref{Sec:illustration}, we illustrate our results with an example from light-matter interaction scenarios. In Sec. \ref{Sec:Conclusion}, we summarize our results and discuss future directions.

\section{Preliminaries}\label{Sec:Preliminaries}

\subsection{Quantum Measurements and Channels}

A quantum measurement, or positive operator-valued measure (POVM), on a Hilbert space $\cH$ is described by a set of positive semidefinite operators
$A=\{A(x)\in\cL(\cH)\}_{x\in\Omega_A}$, satisfying the normalization condition $\sum_{x\in\Omega_M}A(x)=\mathbbm{1}_{\cH}$, where $\mathbbm{1}_{\cH}$ denotes the identity operator on $\cH$\cite{Heinosaari_book_QF}. The set $\Omega_A$ denotes the possible outcomes of the measurement, while $\cL(\cH)$ corresponds to the space of linear operators on $\cH$.  In this work, we restrict ourselves to finite-dimensional Hilbert spaces. Each operator $A(x)$ is referred to as a POVM element of the measurement $A$. A measurement $A$ is called projective if every POVM element satisfies $A(x)^2=A(x),~\forall~x\in\Omega_A$. In this work, only measurements with finitely many outcomes are considered. When a measurement $A$ is performed on a quantum state $\rho\in\cS(\cH)$, where $\cS(\cH)$ denotes the set of density operators on $\cH$, the probability of observing outcome $x$ is given by the Born rule $p(x)=\tr[\rho A(x)]$. We denote the set of all measurements on $\cH$ by $\mathscr{M}(\cH)$. For qubits specifically, we denote their Hilbert space as $\cH_Q$.

Quantum channels describe generic physically realisable deterministic transformations between quantum states. Mathematically, a quantum channel $\Lambda:\cL(\cH)\rightarrow\cL(\cK)$ is represented by a completely positive and trace-preserving (CPTP) linear map~\cite{Heinosaari_book_QF}. In the Schrödinger picture, the channel acts on density operators, mapping states on the Hilbert space $\cH$ to states on the Hilbert space $\cK$. With any linear map
$\Lambda:\cL(\cH)\rightarrow\cL(\cK)$, there exists
another associated map, known as its dual map $\Lambda^\dagger:\cL(\cK)\rightarrow\cL(\cH)$,
defined through the relation
$\tr[Y^\dagger\Lambda^\dagger(X)]=\tr[(\Lambda(Y))^\dagger X]$, for all $X\in\cL(\cK)$ and $Y\in\cL(\cH)$\cite{watrous_2018_theory}. The dual map corresponds to the description of the quantum channel in the Heisenberg picture and in this picture it transforms the measurements in $\cM(\cK)$ to the measurements in $\cM(\cH)$. More specifically, if $A=\{A(x)\in\cL(\cK)\}$ is a measurement on the output space of a channel $\Lambda$, then the transformed measurement on the input space in the Heisenberg picture is given by $\Lambda^\dagger(A):=\{\Lambda^\dagger(A(x))\}$. The set of all quantum channels from $\cL(\cH)$ to $\cL(\cK)$ is denoted by $\mathscr{C}(\cH,\cK)$. Given two linear maps $\Lambda_1:\cL(\cH_2)\rightarrow\cL(\cH_1)$ and 
$\Lambda_2:\cL(\cH)\rightarrow\cL(\cH_2)$,
their composition is written as $\Lambda_1\circ\Lambda_2$ and is defined by $(\Lambda_1\circ\Lambda_2)(\rho)
:=
\Lambda_1(\Lambda_2(\rho)),
~\forall~\rho\in\cL(\cH)$. If both maps are quantum channels, then $\Lambda_1\circ\Lambda_2\in\mathscr{C}(\cH,\cH_1)$. For two channels $\Lambda\in\mathscr{C}(\cH,\cK)$ and $\Lambda^\prime\in\mathscr{C}(\cH,\cK^\prime)$, we say that $\Lambda$ is a post-processing of $\Lambda^\prime$, if there exists a channel $\Theta\in\mathscr{C}(\cK^\prime,\cK)$ such that $\Lambda=\Theta\circ\Lambda^\prime$. In this case, we write $\Lambda^\prime\succeq_{postproc}\Lambda$. The relation $\succeq_{postproc}$ defines a preorder on quantum channels and is commonly referred to as the \emph{post-processing preorder}\cite{Heinosaari_incomp_chan}. A linear map $\Lambda:\cL(\cH)\rightarrow\cL(\cK)$ is said to be Hermitian-preserving trace preserving (HPTP) linear map if for arbitrary $X\in Herm(\cH)$ we have $\Lambda(X)\in Herm(\cK)$ and $\tr[\Lambda(X)]=\tr[X]$ \cite{watrous_2018_theory}. Here, $Herm(\cH)$ is a set of Hermitian matrices in $\cL(\cH)$. In the case, when $\Theta$ is just an HPTP map in the relation $\Lambda=\Theta\circ\Lambda^\prime$ the preorder is denoted as $\Lambda^\prime\succeq_{asymp}\Lambda$\cite{Mitra_2026}. Given a quantum channel $\Lambda$, a linear map $\mathcal{N}:\cL(\cK)\rightarrow\cL(\cK^{\prime})$ is  called a statistical morphism if for any POVM $A=\{A(x)\}\in\mathscr{M}(\cK^{\prime})$, there exists another POVM $A_{\cN}=\{A_{\cN}(x)\in\mathscr{M}(\cK)\}$ such that~\cite{Buscemi_2018_rev_data, Buscemi_2012_comparison,Buscemi_2016_degradable} for all $\rho\in\cS(\cH)$
    \begin{align}
    \tr[\mathcal{N}\circ\Lambda(\rho)A(x)]
        =
        \tr[\Lambda(\rho)A_{\cN}(x)].\label{Eq:Stat_morph}
    \end{align}

\subsection{Informationally Complete Measurements}

Informationally complete (IC) measurements are an important class of quantum measurements\cite{Ariano_IC_2004,busch_1991_informationally,Ariano_2005_IC}. Their defining feature is that for an arbitrary quantum state, the measurement statistics is sufficient to uniquely determine that underlying quantum state. In other words, for an informationally complete measurement $A=\{A(x)\}$, if  $\tr[\rho A(x)]=\tr[\omega A(x)]~\forall x\in \Omega_A$ hold for a pair of quantum states $\rho,\omega\in\cS(\cH)$ then $\rho=\omega$ must hold by definition. For a quantum system described by a $d$-dimensional Hilbert space $\cH$, an IC measurement requires at least $d^2$ POVM elements. Formally, a POVM $A=\{A(x)\}_{x=1}^N$ where $N\geq d^2$, is informationally complete if its elements span the full operator space $\cL(\cH)$. Consequently, any operator $X\in\cL(\cH)$ has a decomposition of the form
\begin{equation}
    X=\sum_{x=1}^{N} c_x A(x).
    \label{Eq:IC_POVM_Exp}
\end{equation}
 In general, all coefficients $\{c_x\}$ need not be non-negative, even when $X$ itself is positive semidefinite. In general, the POVM elements $\{A(x)\}$ of an IC measurement form an overcomplete basis of $\cL(\cH)$, implying that the decomposition in Eq.~\eqref{Eq:IC_POVM_Exp} is not unique. If a measurement is not informationally complete, it is informationally incomplete. A particularly important subclass of IC measurements is \emph{minimal informationally complete measurements}, for which the number of POVM elements is exactly $d^2$. In this case, the POVM elements are linearly independent and form a basis for $\cL(\cH)$, ensuring that the expansion in Eq.~\eqref{Eq:IC_POVM_Exp} is unique. Minimal informationally complete measurements are useful for describing the dynamics of finite dimensional quantum systems using probability distribution\cite{yashin_2020_minimal}.

\subsection{Symmetric informationally complete measurements}

Within minimal informationally complete measurements, another important subclass is formed by \emph{symmetric informationally complete} (SIC) POVMs. A SIC POVM $A$ in dimension $d$ consists of $d^2$ subnormalized rank-one operators \cite{Renes_SIC_Conj} $A(x)=\frac{1}{d}\ket{\psi_x}\bra{\psi_x}, \forall x\in\{1,\ldots,d^2\}$, satisfying the symmetry condition $|\langle \psi_x | \psi_y \rangle|^2=\frac{1}{d+1},
\qquad \forall~x\neq y$, together with the normalisation relation $\sum_x A(x)=\mathbbm{1}_{\cH}$. Since a SIC POVM contains exactly $d^2$ linearly independent POVM elements, it is minimal by construction. SIC POVMs are conjectured to exist in every finite dimension\cite{Renes_SIC_Conj,samuel_2025_symmetric}. Explicit analytical constructions are known for several dimensions\cite{Appleby_SIC_2017,Appleby_SIC_2018,Appleby_SIC_2022,Bengtsson_SIC_2025}, while numerical solutions for higher-dimensional cases have also been reported in the literature\cite{Appleby_SIC_2018,Scott_SIC_num}.

For qubit systems ($d=2$), a SIC POVM consists of four subnormalized rank-one operators $M(x)=\frac{1}{2}\ket{\psi_x}\bra{\psi_x}, \forall x=\{1,2,3,4\}$, where the corresponding pure states form the vertices of a regular tetrahedron on the Bloch sphere. In terms of Bloch vectors, an SIC measurement can be expressed as
\begin{align}
    A(x)=\frac{1}{4}(\mathbbm{1}+\textbf{r}(x).\boldsymbol{\sigma}),\label{Eq:SIC_Bloch_rep}
\end{align}
where the Bloch vectors $\textbf{r}(x)\in\mathbbm{R}^3$ for $x=1,2,3,4$ satisfy $\textbf{r}(x).\textbf{r}(y)=-\frac{1}{3}$. Furthermore, since $\sum_x A(x)=\mathbbm{1}$, we obtain $\sum_x \textbf{r}(x)=0$.

One convenient choice of such Bloch vectors following the vertices of a regular tetrahedron is the following:
\begin{align}
    &\textbf{r}(1)=\frac{1}{\sqrt{3}}(1,1,1),\qquad
    \textbf{r}(2)=\frac{1}{\sqrt{3}}(1,-1,-1),\label{Eq:SIC_vec12}\\
    &\textbf{r}(3)=\frac{1}{\sqrt{3}}(-1,1,-1),\qquad
    \textbf{r}(4)=\frac{1}{\sqrt{3}}(-1,-1,1).\label{Eq:Sic_vec34}
\end{align}



\subsection{Qubit Channel Parametrization}

In the Bloch sphere picture, a qubit density matrix can be written as $\rho=\frac{1}{2}(\mathbbm{1}+\textbf{r}.\boldsymbol{\sigma})\in\cS(\cH_Q)$, where $\textbf{r}=(r_1,r_2,r_3)$ is the Bloch vector of the quantum state $\rho$ and it satisfies $|\textbf{r}|\leq 1$. Also, $\boldsymbol{\sigma}=(\sigma_1,\sigma_2,\sigma_3)$ is the vector of three Pauli matrices.  One can define a $1\times 4$ vector $\boldsymbol{\nu}_{\rho}$  associated with the quantum state $\rho$ such that $(\boldsymbol{\nu}_{\rho}):=(1,\textbf{r})$. Then, an arbitrary linear map $\Lambda:\cL(\cH)\rightarrow\cL(\cK)$ acting on qubit states can be represented by a real $4\times 4$ matrix $\mathbf{T}_\Lambda$ through
\begin{align}
\boldsymbol{\nu}_{\Lambda(\rho)}=\mathbf{T}_\Lambda (\boldsymbol{\nu}_{\rho})=(\mathbf{T}_\Lambda \boldsymbol{\nu}_{\rho}^T)^T,
\end{align}

where $T$ denotes the transpose. If $\Lambda$ is a CPTP map, this matrix has the block form\cite{braun_2014_universal,Ruskai_2001,Ruskai_2002}
\begin{align}
\textbf{T}_\Lambda=
\begin{pmatrix}
1 & 0_{1\times 3}\\[2mm]
\textbf{t}^T_{\Lambda} & M_\Lambda
\end{pmatrix},
\end{align}
where $M_\Lambda\in \mathbb{R}^{3\times 3}$, $\textbf{t}_\Lambda=(t_1,t_2,t_3)\in\mathbb{R}^3$.  
Equivalently, the action on the Bloch vector can be written as:
\begin{align}
\textbf{r}^{\prime} = M_\Lambda (\textbf{r}) + \textbf{t}_\Lambda.
\end{align}
where $\textbf{r}^{\prime}$ is the block vector of $\Lambda(\rho)$.
For any given matrix $M_\Lambda$, there exist two orthogonal matrices $O_1$ and $O_2$ such that singular value decomposition of $M_{\Lambda}$ can be written as
\begin{align}
    M_{\Lambda}=O_1M_{\Lambda_D}O_2.
\end{align}
For any $3\times3$ orthogonal matrix $O$, either $O$ (proper rotation) or $-O$ (improper rotation) is in $SO(3)$. Accordingly, $M_{\Lambda}=O_1M_{\Lambda_D}O_2$ with $M_{\Lambda_D}=D$ if both $O_1$ and $O_2$ are in $SO(3)$, or, $M_{\Lambda_D}=-D$ if only one of $O_1$ and $O_2$ is in $SO(3)$ where
\begin{align}
D=
\begin{pmatrix}
\lambda_1 & 0 & 0\\
0 & \lambda_2 & 0\\
0 & 0 & \lambda_3
\end{pmatrix}.
\end{align}
Here  $\lambda_1,\lambda_2,\lambda_3$ are the signed singular values of $\Lambda$. Since every rotation in $SO(3)$ corresponds to a unitary transformation on the qubit Hilbert space, the channel $\Lambda$ can be expressed as
\begin{align}
\Lambda=\mathcal{U}\circ\Lambda_{D}\circ\mathcal{V},
\end{align}
where $\mathcal{U}(\rho)=U\rho U^\dagger$, and $\mathcal{V}(\rho)=V\rho V^\dagger$, for suitable unitary operators $U$ and $V$, and $\Lambda_{D}$ is the quantum channel with diagonal $M_{\Lambda_D}$.

\subsection{Characterization of Quantum Coherence}

For a fixed reference basis $\{\ket{i}\}$ of a Hilbert space $\cH$, a quantum state is said to contain no \emph{coherence} if it is diagonal in that basis. Explicitly, incoherent states are of the form $\omega=\sum_i p_i \ket{i}\bra{i}$, where $p_i\geq0$ and $\sum_i p_i=1$ and the reference basis $\{\ket{i}\}$ is known as incoherent basis\cite{Stretslov_2015,Baum_2014}. Any quantum state that cannot be expressed in this form possesses a non-zero amount of coherence.

It is important to emphasize that quantum coherence is basis-dependent. A state that is incoherent with respect to one reference basis may become coherent when represented in a different basis.

If $\mathcal{I}$ denotes the set of all incoherent density operators, then a CPTP map $\Lambda$ is called an incoherent operation if it has Kraus decomposition
\begin{equation}
\Lambda(\rho)=\sum_i K_i \rho K_i^\dagger,
\end{equation}
such that,
\begin{equation}
\frac{K_i\rho K_i^\dagger}{\tr[K_i\rho K_i^\dagger]} \in \mathcal{I},\quad \forall \rho\in\cI.
\end{equation}

Any valid measure of coherence $\mathcal{C}(\rho)$, under the standard axiomatic approach to resource theories, is postulated to satisfy the following properties\cite{Stretslov_2015,Stretslov_2017_Coh_Res}:     (1) $\mathcal{C}(\rho)\ge 0$, with $\mathcal{C}(\rho)=0$ iff $\rho\in\mathcal{I}$, (2) $\mathcal{C}(\rho)\geq \mathcal{C}(\Lambda(\rho))$ for every incoherent operation  $\Lambda$, (3) $\mathcal{C}(\rho)\geq \sum_i p_i\, \mathcal{C}(\tilde{\rho}_i)$ where $p_i=\operatorname{Tr}\left(K_i\rho K_i^\dagger\right)$ with
$\tilde{\rho}_i=\frac{K_i\rho K_i^\dagger}{p_i}$, and the Kraus operators satisfy $\frac{K_i\omega K_i^\dagger}{\tr[K_i\omega K_i^\dagger]} \in \mathcal{I} \forall \omega\in\cI.$, (4) $\mathcal{C}(\sum_i p_i\rho_i)\le \sum_i p_i \mathcal{C}(\rho_i)$.

Any functional satisfying the above properties is referred to as a \emph{coherence monotone}. Several such measures have been proposed in the literature\cite{Stretslov_2017_Coh_Res}. One of such measures is the family of distance-based coherence measures, defined as
\begin{align}
    \mathcal{C}(\rho)
    =
    \min_{\omega\in\mathcal{I}}
    \mathcal{D}(\rho,\omega),\label{Eq:distance_based_coherence_measure}
\end{align}
where $\mathcal{D}$ is a suitable distance measure between quantum states.

Generally, the distance $\mathcal{D}$ is required to be contractive under completely positive trace-preserving (CPTP) maps and to satisfy the joint convexity property. In other words, for any quantum channel $\Lambda$, and for arbitrary quantum states $\rho,\rho^{\prime},\omega,\omega^{\prime}\in\cS(\cH)$, $\mathcal{D}(\Lambda(\rho),\Lambda(\omega))\leq\mathcal{D}(\rho,\omega)$ and $p\mathcal{D}(\rho,\omega)+(1-p)\mathcal{D}(\rho^{\prime},\omega^{\prime})\geq\mathcal{D}(p\rho+(1-p)\rho^{\prime},p\omega+(1-p)\omega^{\prime})$ must hold.

This condition ensures that $C$ satisfies the above-mentioned properties.

\subsection{Optical Master Equation}

The interaction of a quantum system with the environment results in dissipation and decoherence. In the open system framework, after tracing out the environmental degrees of freedom, the system evolution is generally non-unitary. Such dynamics are typically irreversible and are governed by master equations.  The solution of such a master equation is called a dynamical map which is a set of quantum channels parametrized by time.

Both the quantum system and the environment evolve under the total Hamiltonian of the form
\begin{equation}
H_{\mathrm{tot}} = H_S + H_B + H_I,
\end{equation}
where $H_S$ is the system Hamiltonian, $H_B$ is the bath Hamiltonian, and $H_I$ is the interaction Hamiltonian. The interaction is of the form,
\begin{equation}
H_I = \sum_\alpha A_\alpha \otimes B_\alpha,
\end{equation}
where $A_\alpha$ are system operators and $B_\alpha$ are bath operators.
.

For a system weakly coupled to a large bath, under Born, Markov, and rotating-wave approximations, the evolution of the system is governed by the Gorini--Kossakowski--Sudarshan--Lindblad (GKSL) equation of the form\cite{breuer_2002_theory}
\begin{equation}
\frac{d\rho}{dt}=-i[H_{\mathrm{eff}},\rho]+\sum_k \gamma_k\left(L_k \rho L_k^\dagger-\frac12\{L_k^\dagger L_k,\rho\}\right),
\label{Eq:GKSL}
\end{equation}
where $H_{\mathrm{eff}}$ is an effective Hamiltonian\cite{breuer_2002_theory}, $L_k$ are Lindblad (or jump) operators, $\gamma_k\ge 0$ are decay rates, and $\{X,Y\}=XY+YX$ denotes the anticommutator.

A common model for the bath is a continuum of harmonic oscillators,
\begin{equation}
H_B = \sum_\mu \omega_\mu b_\mu^\dagger b_\mu,
\end{equation}
where $b_\mu$ and $b_\mu^\dagger$ are bosonic annihilation and creation operators satisfying
\begin{equation}
[b_\mu,b_\nu^\dagger]=\delta_{\mu\nu}, \qquad [b_\mu,b_\nu]=[b_\mu^\dagger,b_\nu^\dagger]=0.
\end{equation}

In quantum optical systems, a system is considered to be an atom or a molecule, while the environment is typically modeled as a continuum of electromagnetic field modes. The interaction between the system and the environment is given by dipole interaction as\cite{breuer_2002_theory}
\begin{align}
    H_I=-\mathbf{D}.\mathbf{E},
\end{align}
where $\mathbf{D}$ is the dipole operator of the system while $\mathbf{E}$ is the electric field operator of the environment. The interaction between the system and the reservoir gives rise to irreversible processes such as spontaneous emission, absorption, and dephasing. 

For example, consider a two-level system with ground states $\ket{0}$ and excited state $\ket{1}$. The Pauli operators can be written as
\begin{align}
    \sigma_1=\ket{1}\bra{0}+\ket{1}\bra{0},&\quad \sigma_2=-i\ket{1}\bra{0}+i\ket{1}\bra{0},\nonumber\\
    \quad\sigma_3=&\ket{1}\bra{1}-\ket{0}\bra{0}
\end{align}.
The system Hamiltonian is thus given as $H_S=\frac{1}{2}\omega_{10}\sigma_3$, where $\omega_{10}$ is the transition frequency for the given two-level system. Neglecting the free evolution part, the optical master equation can be written as
\begin{align}
\frac{d\rho(t)}{dt}=\gamma_0 (N+1)\left(\sigma_- \rho(t)\sigma_+ -\frac{1}{2}\left\{\sigma_+\sigma_-,\rho(t)\right\}\right)\nonumber\\
+\gamma_0 N\left(\sigma_+ \rho(t)\sigma_- -\frac{1}{2}\left\{\sigma_-\sigma_+,\rho(t)\right\}\right).\label{Eq:Opt_Mas}
\end{align}

Here, $N=\left(\exp\left(\frac{\hbar\omega_{10}}{k_B T}\right)-1\right)^{-1}$ is the Planck distribution, where \(k_B\) denotes the Boltzmann constant, \(T\) is the temperature of the thermal bath and
\begin{align}
    \sigma_+=\frac{1}{2}(\sigma_1+i\sigma_2),\qquad\sigma_-=\frac{1}{2}(\sigma_1-i\sigma_2)
\end{align}

The parameter \(\gamma_0\) is the spontaneous emission rate, and
\begin{equation}
\gamma=\gamma_0(2N+1)
\end{equation}
is the total emission rate, incorporating both thermally induced emission and absorption processes. For more detail, we refer the reader to \cite{breuer_2002_theory}. 

\section{Main Results}\label{Sec:Main}

In this section, we present the main results of this work. 
\subsection{Quantifying informational completeness quantum measurements}\label{Sec:Info_comp}
Here, we investigate the quantification of the informational completeness of a quantum measurement. We begin with the following definition:
\begin{definition}
Given a quantum measurement 
$A=\{A(x)\}_{x=1}^n\in\mathscr{M}(\cH)$, we define its informational completeness as
\begin{align}
    \mathfrak{D}(A):=\inf_{\substack{\rho,\rho^{\prime}\in\cS(\cH) \\\rho\neq\rho^{\prime}}}\frac{\sum_{x\in\Omega_A}\Big|\tr[\rho A(x)]-\tr[\rho^{\prime} A(x)]\Big|}{||\rho-\rho^{\prime}||_1},\label{Eq:Def_IC_Power}
\end{align}
where $||.||_1$ denotes the trace norm.
\end{definition}
In Eq. \eqref{Eq:Def_IC_Power}, infimum is used  instead of minimum as $\rho^{\prime}$ can vary in the set $\cS(\cH)\setminus\rho$ which is not close and therefore, cannot be compact where the symbol ``$\setminus$" denotes the different between two sets. Now, we begin with proving the unitary invariance of the above measure.
\begin{lemma}
   For an arbitrary measurement $A\in\mathscr{M}(\cH)$, its informational completeness $\mathfrak{D}(A)$ is unitary invariant, i.e.,
\begin{align}
    \mathfrak{D}(U^{\dagger}AU)=\mathfrak{D}(A),
\end{align}
for any unitary matrix $U$.\label{Lem:IC_Pow_Unitary_Inv}
\end{lemma}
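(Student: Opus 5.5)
The plan is a change of variables in the infimum of Eq.~\eqref{Eq:Def_IC_Power}. Write $U^{\dagger}AU:=\{U^{\dagger}A(x)U\}_{x\in\Omega_A}$. This is again a valid measurement: each element is positive semidefinite, and $\sum_x U^{\dagger}A(x)U=U^{\dagger}\mathbbm{1}U=\mathbbm{1}$. So $\mathfrak{D}(U^{\dagger}AU)$ is well defined.

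The first step is to rewrite the numerator. Fix a pair $\rho\neq\rho^{\prime}$ in $\cS(\cH)$. Cyclicity of the trace gives $\tr[\rho\, U^{\dagger}A(x)U]=\tr[U\rho U^{\dagger}A(x)]$, and likewise for $\rho^{\prime}$. Set $\sigma:=U\rho U^{\dagger}$ and $\sigma^{\prime}:=U\rho^{\prime}U^{\dagger}$. The numerator evaluated for $U^{\dagger}AU$ at $(\rho,\rho^{\prime})$ then equals the numerator evaluated for $A$ at $(\sigma,\sigma^{\prime})$.

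The second step is the denominator. The trace norm is unitarily invariant, so $||\rho-\rho^{\prime}||_1=||U(\rho-\rho^{\prime})U^{\dagger}||_1=||\sigma-\sigma^{\prime}||_1$. Hence the whole ratio for $U^{\dagger}AU$ at $(\rho,\rho^{\prime})$ coincides with the ratio for $A$ at $(\sigma,\sigma^{\prime})$.

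The third step, which is the only point needing care, is to check that the map $(\rho,\rho^{\prime})\mapsto(U\rho U^{\dagger},U\rho^{\prime}U^{\dagger})$ is a bijection of the feasible set $\{(\rho,\rho^{\prime})\in\cS(\cH)^2:\rho\neq\rho^{\prime}\}$ onto itself. Conjugation by $U$ preserves positivity and trace, so it maps $\cS(\cH)$ into itself. It is injective, so $\rho\neq\rho^{\prime}$ implies $\sigma\neq\sigma^{\prime}$. Its inverse is conjugation by $U^{\dagger}$, which has the same properties, so the map is surjective. The two families of ratios over which the infima are taken are therefore identical as sets of real numbers, and the infima agree. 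This gives $\mathfrak{D}(U^{\dagger}AU)=\mathfrak{D}(A)$.

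The argument works with infima directly, so it needs no attainment or compactness.
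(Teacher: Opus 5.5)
Your proposal is correct and follows essentially the same route as the paper. Both use cyclicity of the trace for the numerator, unitary invariance of the trace norm for the denominator, and the fact that $\rho\mapsto U\rho U^{\dagger}$ is a bijection of $\cS(\cH)$ preserving $\rho\neq\rho^{\prime}$, so the two infima range over the same set of values. Your explicit check that $U^{\dagger}AU$ is a valid measurement is a small addition the paper leaves implicit.
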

\begin{proof}
    From Eq. \eqref{Eq:Def_IC_Power}, we can write
    \begin{align}
         \mathfrak{D}(U^{\dagger}AU)=&\inf_{\substack{\rho,\rho^{\prime}\in\cS(\cH) \\\rho\neq\rho^{\prime}}}\frac{\sum_{x\in\Omega_A}\Big|\tr[\rho U^{\dagger}A(x)U]-\tr[\rho^{\prime} U^{\dagger}A(x)U]\Big|}{||\rho-\rho^{\prime}||_1}\nonumber\\
         =&\inf_{\substack{\rho,\rho^{\prime}\in\cS(\cH) \\\rho\neq\rho^{\prime}}}\frac{\sum_{x\in\Omega_A}\Big|\tr[U\rho U^{\dagger}A(x)]-\tr[U\rho^{\prime} U^{\dagger}A(x)]\Big|}{||\rho-\rho^{\prime}||_1}\nonumber\\
         =&\inf_{\substack{\rho,\rho^{\prime}\in\cS(\cH) \\\rho\neq\rho^{\prime}}}\frac{\sum_{x\in\Omega_A}\Big|\tr[U\rho U^{\dagger}A(x)]-\tr[U\rho^{\prime} U^{\dagger}A(x)]\Big|}{||U(\rho-\rho^{\prime})U^{\dagger}||_1}\nonumber
    \end{align} 
    \begin{align}
         =&\inf_{\substack{\rho,\rho^{\prime}\in\cS(\cH) \\\rho\neq\rho^{\prime}}}\frac{\sum_{x\in\Omega_A}\Big|\tr[U\rho U^{\dagger}A(x)]-\tr[U\rho^{\prime} U^{\dagger}A(x)]\Big|}{||U\rho U^{\dagger}-U\rho^{\prime}U^{\dagger}||_1},
    \end{align}
    where in the second line we have used the cyclicity of the trace, in the third line we have used the fact that the trace norm of an operator is invariant under unitary conjugation. Now, an arbitrary $\rho\in\cS(\cH)$, \emph{if and only if} $\rho_U:=U\rho U^{\dagger}\in\cS(\cH)$ and $\rho-\rho^{\prime}\neq0\iff  \rho_U-\rho^{\prime}_U\neq0$  for arbitrary $\rho,\rho^{\prime}\in\cS(\cH)$, and for an arbitrary unitary matrix $U$, as unitaries are invertible. Thus, we can write
    \begin{align}
        \mathfrak{D}(U^{\dagger}AU)=&\inf_{\substack{\rho_U,\rho_U^{\prime}\in\cS(\cH) \\\rho_U\neq\rho_U^{\prime}}}\frac{\sum_{x\in\Omega_A}\Big|\tr[\rho_U A(x)]-\tr[\rho_U^{\prime} A(x)]\Big|}{||\rho_U-\rho_U^{\prime}||_1}\nonumber\\
        =&\mathfrak{D}(A).
    \end{align}
\end{proof}
Now, we prove the faithfulness of this measure in the following proposition:
\begin{proposition}
     For a generic quantum measurement $B=\{B(x)\}\in\mathscr{M}(\cH)$, the informational completeness $\mathfrak{D}(B)\geq 0$ and $\mathfrak{D}(B)= 0$ if and only if $B$ is informationally incomplete. \label{Prop:IC_meas_positive}
\end{proposition}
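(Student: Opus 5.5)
Nonnegativity is immediate, since the ratio in Eq.~\eqref{Eq:Def_IC_Power} has a nonnegative numerator and a strictly positive denominator for every admissible pair $\rho\neq\rho^{\prime}$. For the equivalence, the first step is to rewrite the ratio in terms of the traceless Hermitian operator $X:=\rho-\rho^{\prime}$. It equals $f(X):=\sum_x|\tr[XA(x)]|/||X||_1$. This function is invariant under positive rescaling $X\mapsto cX$. Moreover, every nonzero traceless Hermitian $X$ is a positive multiple of a difference of states: write $X=X_+-X_-$ with $\tr X_+=\tr X_->0$, so $X=\tr[X_+](\rho-\rho^{\prime})$. Hence
\begin{align}
\mathfrak{D}(B)=\min_{X\in\mathcal{T}}\sum_x|\tr[XB(x)]|,\nonumber
\end{align}
where $\mathcal{T}$ is the set of traceless Hermitian $X$ with $||X||_1=1$. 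This set is compact, being a closed and bounded subset of a finite-dimensional real space. The objective is continuous, so the infimum is attained.

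\emph{If $B$ is informationally incomplete}, then by definition there exist states $\rho\neq\rho^{\prime}$ with $\tr[\rho B(x)]=\tr[\rho^{\prime}B(x)]$ for all $x$. For this pair the ratio is $0$, so $\mathfrak{D}(B)=0$. If one prefers to start from the spanning characterization, the argument is as follows. If the $B(x)$ do not span $\cL(\cH)$, they do not span $Herm(\cH)$ over the reals, so some nonzero Hermitian $Y$ is Hilbert--Schmidt orthogonal to every $B(x)$. Since $\sum_xB(x)=\Id$, we get $\tr Y=0$. Then $\rho=\Id/d+\epsilon Y$ and $\rho^{\prime}=\Id/d-\epsilon Y$ are distinct states for small $\epsilon>0$ and give identical statistics.

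\emph{Conversely}, suppose $\mathfrak{D}(B)=0$. By compactness there is $X_0\in\mathcal{T}$ with $\tr[X_0B(x)]=0$ for all $x$. Decompose $X_0=||X_0^+||_1(\rho-\rho^{\prime})$ as above, with $\rho\neq\rho^{\prime}$ because $X_0\neq0$. This gives two distinct states with the same outcome statistics, so $B$ is informationally incomplete. The step I expect to need the most care is the attainment of the infimum. The paper itself notes that the original domain $\{\rho\neq\rho^{\prime}\}$ is not compact, so one must pass through the scale-invariant reformulation on the unit trace-norm sphere of traceless Hermitian operators. If I wanted to avoid compactness, I would instead argue by contraposition. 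If $B$ is IC, the linear map $X\mapsto(\tr[XB(x)])_x$ is injective on traceless Hermitian operators. In finite dimensions it is therefore bounded below, $\sum_x|\tr[XB(x)]|\geq c||X||_1$ with $c>0$ by equivalence of norms, which gives $\mathfrak{D}(B)\geq c>0$.
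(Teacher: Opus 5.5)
Your proof is correct and follows the paper's route. The direction ``incomplete implies zero'' is immediate. For the converse, you rewrite $\mathfrak{D}(B)$ as a minimum over the compact set of traceless Hermitian operators of unit trace norm, so a vanishing minimum yields a nonzero $X$ with $\tr[XB(x)]=0$ for all $x$. Beyond the paper, you justify its unproved observation via the decomposition $X=X_+-X_-$, and you conclude incompleteness directly by exhibiting two distinct states with identical statistics, where the paper instead appeals to the non-spanning criterion.
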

\begin{proof}
    From Eq. \eqref{Eq:Def_IC_Power}, it is clear that both numerator and denominator are non-negative. Hence, $\mathfrak{D}(B)\geq 0$ for an arbitrary measurement $B$. Now, $B=\{B(x)\}$ is informationally incomplete then by definition there exists a pair of state $\rho,\rho^{\prime}\in\cS(\cH)$ such that $\tr[\rho A(x)]=\tr[\rho^{\prime} A(x)]~\forall x]\in \Omega_A$. Hence, the numerator $\sum_x|\tr[(\rho-\rho^{\prime})A(x)]|=0$. But as $\rho\neq\rho^{\prime}$ denominator cannot be zero. Hence, $\mathfrak{D}(B)=0$. 

    Now, we have to prove that if $\mathfrak{D}(B)=0$ then $B$ is informationally incomplete.  At first, we observe that if $X\in\cL(\cH)$ is a traceless Hermitian operator, i.e., $X^{\dagger}=X$ and $\tr[X]=0$ such that $||X||_1=1$ then there exists a pair of states $\rho,\rho^{\prime}\in\cS(\cH)$ such that $X=\frac{\rho-\rho^{\prime}}{||\rho-\rho^{\prime}||_1}$ where $\rho\neq \rho^{\prime}$. We use this \emph{observation} in the following steps:

    \begin{align}
        \mathfrak{D}(B)=&\inf_{\substack{\rho,\rho^{\prime}\in\cS(\cH) \\\rho\neq\rho^{\prime}}}\frac{\sum_x\Big|\tr[\rho B(x)]-\tr[\rho^{\prime} B(x)]\Big|}{||\rho-\rho^{\prime}||_1}\nonumber\\
        =&\inf_{\substack{\rho,\rho^{\prime}\in\cS(\cH) \\\rho\neq\rho^{\prime}}}\frac{\sum_x\Big|\tr[Y(\rho,\rho^{\prime}) B(x)]\Big|}{||Y(\rho,\rho^{\prime})||_1}\nonumber\\
        =&\inf_{\substack{\rho,\rho^{\prime}\in\cS(\cH) \\\rho\neq\rho^{\prime}}}\sum_x\Big|\tr[X(\rho,\rho^{\prime}) B(x)]\Big|\nonumber\\
        =&\min_{\substack{X\in\cL(\cH), X^{\dagger}=X \\\tr[X]=0,||X||_1=1}}\sum_x\Big|\tr[X B(x)]\Big|\label{Eq:IC_Power_min_rep}
    \end{align}
    where in the second equality $Y(\rho,\rho^{\prime})=\rho-\rho^{\prime}$, in third equality $X(\rho,\rho^{\prime})=\frac{Y(\rho,\rho^{\prime})}{||Y(\rho,\rho^{\prime})||_1}$ (therefore, $X(\rho,\rho^{\prime})^{\dagger}=X(\rho,\rho^{\prime})$, $\tr[X(\rho,\rho^{\prime})]=0$, and  $||X(\rho,\rho^{\prime})||_1=1$) and in fourth equality we have used above-mentioned observation and the fact that the set $\mathbbm{X}(\cH):=\{X\in\cL(\cH)| X^{\dagger}=X,\tr[X]=0,||X||_1=1\}$ is compact (therefore, one can replace infimum with minimum). Note that $||X||_1=1\Rightarrow X\neq 0 ~\forall X\in\mathbbm{X}(\cH)$. Now, if $\mathfrak{D}(B)=0$ then there exists an $X^*\in\mathbbm{X}(\cH)$ such that $\sum_x\Big|\tr[X^* B(x)]\Big|=0$ which implies $\tr[X^* B(x)]=0\forall~ x\in \Omega_B$ which implies that the set $\{B(x)\}$ cannot span $\cL(\cH)$ as $X\neq 0$. Hence, $B=\{B(x)\}$ is informationally incomplete.



\end{proof}

From Proposition \ref{Prop:IC_meas_positive}, we conclude that the measure $\mathfrak{D}$ is faithful and non-negative. Next, let us focus our attention on the qubit SIC measurements.
\begin{theorem}
    If a qubit measurement $A\in\mathscr{M}(\cH_Q)$ is symmetric informationally complete (SIC), then its informational completeness is
    \begin{align}
        \mathfrak{D}(A)=\frac{1}{\sqrt{6}}.
    \end{align}\label{Th:SIC_IC_Pow}
\end{theorem}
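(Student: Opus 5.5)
The plan is to use the minimum representation \eqref{Eq:IC_Power_min_rep} obtained in the proof of Proposition \ref{Prop:IC_meas_positive}, namely $\mathfrak{D}(A)=\min_{X\in\mathbbm{X}(\cH_Q)}\sum_x|\tr[XA(x)]|$, and reduce it to a finite-dimensional optimization over Bloch vectors.

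\textbf{Step 1: parametrize $\mathbbm{X}(\cH_Q)$.} For a qubit, a traceless Hermitian $X$ has the form $X=\frac{1}{2}\textbf{n}.\boldsymbol{\sigma}$ with $\textbf{n}\in\mathbbm{R}^3$. Its eigenvalues are $\pm|\textbf{n}|/2$, so $||X||_1=|\textbf{n}|$, and the constraint $||X||_1=1$ becomes $|\textbf{n}|=1$.

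\textbf{Step 2: compute the objective.} Using Eq. \eqref{Eq:SIC_Bloch_rep} and $\tr[(\textbf{a}.\boldsymbol{\sigma})(\textbf{b}.\boldsymbol{\sigma})]=2\,\textbf{a}.\textbf{b}$, we get $\tr[XA(x)]=\frac{1}{4}\,\textbf{n}.\textbf{r}(x)$. Hence
\begin{align}
\mathfrak{D}(A)=\frac{1}{4}\min_{|\textbf{n}|=1}\sum_{x=1}^4 |s_x|,\qquad s_x:=\textbf{n}.\textbf{r}(x).\nonumber
\end{align}
No appeal to a specific tetrahedron (or to Lemma \ref{Lem:IC_Pow_Unitary_Inv}) is needed, because only two SIC identities enter.
\begin{itemize}
\item $\sum_x\textbf{r}(x)=0$ gives $\sum_x s_x=0$.
\item $\sum_x \textbf{r}(x)\textbf{r}(x)^T=\frac{4}{3}\Id_3$, which follows from $\sum_x A(x)$ together with $\textbf{r}(x).\textbf{r}(y)=-\frac13$, gives $\sum_x s_x^2=\frac{4}{3}$ for every unit $\textbf{n}$.
\end{itemize}
The second identity is the step I would verify most carefully. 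One check: the Gram matrix $G$ with entries $\textbf{r}(x).\textbf{r}(y)$ satisfies $G^2=\frac{4}{3}G$, so the frame operator equals $\frac{4}{3}$ on the span of the $\textbf{r}(x)$, which is $\mathbbm{R}^3$.

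\textbf{Step 3: lower bound.} Let $L=\sum_x|s_x|$. Since $\sum_x s_x=0$, the positive and negative parts each sum to $L/2$. Therefore $\max_x|s_x|\le L/2$, and
\begin{align}
\frac{4}{3}=\sum_x s_x^2\le \max_x|s_x|\,\sum_x|s_x|\le \frac{L^2}{2}.\nonumber
\end{align}
This gives $L\ge 2\sqrt{2/3}$, and hence $\mathfrak{D}(A)\ge\frac{1}{4}\cdot 2\sqrt{2/3}=\frac{1}{\sqrt6}$.

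\textbf{Step 4: attainability.} The bound is tight when $s=(t,-t,0,0)$ up to permutation. Take $\textbf{n}\propto\textbf{r}(1)-\textbf{r}(2)$, normalized to unit length; this is orthogonal to $\textbf{r}(3)$ and $\textbf{r}(4)$. For the explicit tetrahedron \eqref{Eq:SIC_vec12}--\eqref{Eq:Sic_vec34}, this is $\textbf{n}=(0,1,1)/\sqrt2$, which gives $s=(\sqrt{2/3},-\sqrt{2/3},0,0)$ and $L=2\sqrt{2/3}$.

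In general, $|\textbf{r}(1)-\textbf{r}(2)|^2=\frac{8}{3}$ and $\textbf{n}.\textbf{r}(1)=\frac{(1+1/3)}{\sqrt{8/3}}=\sqrt{2/3}$, so the value is attained for every SIC. This proves $\mathfrak{D}(A)=\frac{1}{\sqrt6}$.

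I expect the only real obstacle to be recognizing the sparse minimizer. Naive guesses, namely $\textbf{n}$ along a coordinate axis or along some $\textbf{r}(x)$, give the larger values $\frac{1}{\sqrt3}$ and $\frac12$. The $\ell_1$–$\ell_2$ inequality in Step 3 is what certifies optimality.
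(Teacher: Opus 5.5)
Your proof is correct and follows essentially the same route as the paper. Both reduce to $\frac{1}{4}\min_{|\hat{\textbf{d}}|=1}\sum_x|\hat{\textbf{d}}.\textbf{r}(x)|$, use $\sum_x s_x=0$ and $\sum_x s_x^2=\frac{4}{3}$ to get $\sum_x s_x^2\le L^2/2$ by splitting into positive and negative parts, and saturate the bound with a direction orthogonal to two of the Bloch vectors. The only difference is that you derive the frame identity and the optimal direction intrinsically from the Gram matrix. The paper instead invokes Lemma~\ref{Lem:IC_Pow_Unitary_Inv} and the unitary equivalence of qubit SICs to fix the explicit tetrahedron, then checks both by coordinates with $\hat{\textbf{d}}=(1,-1,0)/\sqrt{2}$, so your version is slightly more self-contained.
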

\begin{proof}
Let $\textbf{s}$ and $\textbf{s}^{\prime}$ denote the Bloch vectors corresponding to the states $\rho$ and $\rho^{\prime}$ in Eq. \eqref{Eq:Def_IC_Power}, respectively. Then
\begin{align}
    \rho-\rho^{\prime}
    =&\frac{1}{2}(\textbf{s}-\textbf{s}^{\prime}).\boldsymbol{\sigma}=\frac{\textbf{d}.\boldsymbol{\sigma}}{2},\label{Eq:Den_Diff}
\end{align}
where $\textbf{d}=\textbf{s}-\textbf{s}^{\prime}\in\mathbbm{R}^3$. Consequently,
\begin{align}
    ||\rho-\rho^{\prime}||_1
    =&\tr\Big[\sqrt{(\rho-\rho^{\prime})^{\dagger}(\rho-\rho^{\prime})}\Big]\nonumber\\
    =&\frac{\tr[\sqrt{(\textbf{d}.\boldsymbol{\sigma})(\textbf{d}.\boldsymbol{\sigma})}]}{2}\nonumber\\
    =&\frac{\tr[|\textbf{d}|\Id_{Q}]}{2}=|\textbf{d}|.\label{Eq:IC_power_def_denom}
\end{align}

Next, using Eqs. \eqref{Eq:SIC_Bloch_rep} and \eqref{Eq:Den_Diff}, we obtain
\begin{align}
    \tr[(\rho-\rho^{\prime})A(x)]
    =&\frac{\tr[\sqrt{(\textbf{d}.\boldsymbol{\sigma})(\mathbbm{1}+\textbf{r}(x).\boldsymbol{\sigma})}]}{8},\quad \forall~x\nonumber\\
    =&\frac{\tr[\sqrt{(\textbf{d}.\boldsymbol{\sigma})(\textbf{r}(x).\boldsymbol{\sigma})}]}{8}\nonumber\\
    =&\frac{\textbf{d}.\textbf{r}(x)}{4},\quad \forall~x.\label{Eq:IC_power_def_num}
\end{align}

Substituting Eqs. \eqref{Eq:IC_power_def_denom} and \eqref{Eq:IC_power_def_num} into Eq. \eqref{Eq:Def_IC_Power}, we find
\begin{align}
    \mathfrak{D}(A)
    =&\inf_{\textbf{d}\neq0}\frac{1}{4}\frac{\sum_x\Big|\textbf{d}.\textbf{r}(x)\Big|}{|\textbf{d}|}\nonumber\\
    =&\min_{\hat{|\textbf{d}|}=1}\frac{1}{4}\sum_x\Big|\hat{\textbf{d}}.\textbf{r}(x)\Big|\label{Eq:Def_IC_Power1}
\end{align}

where $\hat{\textbf{d}}$ is the unit vector along $\textbf{d}$. For qubits, all SIC measurements are unitarily equivalent.

Therefore, from Lemma \ref{Lem:IC_Pow_Unitary_Inv}, without loss of generality, we may choose the SIC measurement as in Eqs. \eqref{Eq:SIC_vec12} and \eqref{Eq:Sic_vec34}.

Since $\sum_x \textbf{r}(x)=0$, we have
\begin{align}
    \sum_x\hat{\textbf{d}}.\textbf{r}(x)=0.
\end{align}
Separating the above sum into positive and negative contributions gives
\begin{align}
    \sum_{\substack{x \\\hat{\textbf{d}}.\textbf{r}(x)\geq0}}\hat{\textbf{d}}.\textbf{r}(x)
    +
    \sum_{\substack{x \\\hat{\textbf{d}}.\textbf{r}(x)<0}}\hat{\textbf{d}}.\textbf{r}(x)
    =0.\label{Eq:split_sum}
\end{align}

Define $\sum_{\substack{x \\\hat{\textbf{d}} \textbf{r}(x)\geq0}}\hat{\textbf{d}}.\textbf{r}(x):=p\geq 0$. Then Eq. \eqref{Eq:split_sum} implies $\sum_{\substack{x \\\hat{\textbf{d}}.\textbf{r}(x)<0}}\hat{\textbf{d}}.\textbf{r}(x)=-p$, and hence
\begin{align}
    \sum_x\Big|\hat{\textbf{d}}.\textbf{r}(x)\Big|=2p.\label{Eq:mod_sum_bound}
\end{align}

Now consider
\begin{align}
\sum_x\Big(\hat{\textbf{d}}.\textbf{r}(x)\Big)^2
=&\sum_{i,j}\sum_x\hat{\textbf{d}}_i\hat{\textbf{d}}_j\textbf{r}_i(x)\textbf{r}_j(x)\nonumber\\
=&\sum_x\Big(\sum_i \hat{\textbf{d}}_i^2(\textbf{r}_i(x))^2
+2\sum_{\substack{i,j\\i<j}}
\hat{\textbf{d}}_i\hat{\textbf{d}}_j
\textbf{r}_i(x)\textbf{r}_j(x)\Big).\label{Eq:square_sum_exp}
\end{align}

As $|\hat{\textbf{d}}|=1$, using Eqs. \eqref{Eq:SIC_vec12} and \eqref{Eq:Sic_vec34} in Eq. \eqref{Eq:square_sum_exp}, we obtain
\begin{align}
 \sum_x\Big(\hat{\textbf{d}}.\textbf{r}(x)\Big)^2
 =&\sum_x\frac{|\hat{\textbf{d}}|^2}{3}=\frac{4}{3}.\label{Eq:squared_sum}
\end{align}

For the positive contributions we have
\begin{align}
    \sum_{\substack{x \\\hat{\textbf{d}}.\textbf{r}(x)\geq0}}
    \Big(\hat{\textbf{d}}.\textbf{r}(x)\Big)^2
    \leq&\,
    \Big(
    \sum_{\substack{x \\\hat{\textbf{d}}.\textbf{r}(x)\geq0}}
    \hat{\textbf{d}}.\textbf{r}(x)
    \Big)^2=\,p^2.\label{Eq:squared_sum_exp_pos}
\end{align}

Similarly,
\begin{align}
     \sum_{\substack{x \\\hat{\textbf{d}}.\textbf{r}(x)<0}}
     \Big(\hat{\textbf{d}}.\textbf{r}(x)\Big)^2
     \leq&
     \Big(
     \sum_{\substack{x \\\hat{\textbf{d}}.\textbf{r}(x)<0}}
     \hat{\textbf{d}}.\textbf{r}(x)
     \Big)^2=p^2.\label{Eq:squared_sum_exp_neg}
\end{align}

Separating the sum in L.H.S. in Eq. \eqref{Eq:squared_sum} into posiitve and negative contributions and substituting Eqs. \eqref{Eq:squared_sum_exp_pos} and \eqref{Eq:squared_sum_exp_neg} in it we conclude that
\begin{align}
    p^2\geq\frac{2}{3}.
\end{align}
Using Eq. \eqref{Eq:mod_sum_bound}, we obtain the bound
\begin{align}
    \sum_x\Big|\hat{\textbf{d}}.\textbf{r}(x)\Big|
    =2p
    \geq\frac{2\sqrt{2}}{\sqrt{3}},
    \qquad
    \forall~\hat{\textbf{d}},~|\hat{\textbf{d}}|=1.
\end{align}

Consequently, Eq. \eqref{Eq:Def_IC_Power1} yields
\begin{align}
    \mathfrak{D}(A)
    =\min_{\hat{|\textbf{d}|}=1}\frac{1}{4}\sum_x\Big|\hat{\textbf{d}}.\textbf{r}(x)\Big|
    \geq\frac{1}{\sqrt{6}}.\label{Eq:IC_power_bound}
\end{align}

We now show that this bound is achievable. Choosing
\begin{align}
    \hat{\textbf{d}}=\frac{1}{\sqrt{2}}(1,-1,0),
\end{align}
and using Eqs. \eqref{Eq:SIC_vec12} and \eqref{Eq:Sic_vec34}, we obtain
\begin{align}
    \frac{1}{4}\sum_x\Big|\hat{\textbf{d}}.\textbf{r}(x)\Big|
    =&\frac{1}{4\sqrt{6}}
    (|1-1|+|1+1|+|-1-1|+|-1+1|)\nonumber\\
    =&\frac{1}{\sqrt{6}}.
\end{align}

Thus, Eq. \eqref{Eq:IC_power_bound} is saturated, and hence the IC power of a qubit SIC measurement is
\begin{align}
    \mathfrak{D}(A)=\frac{1}{\sqrt{6}}.
\end{align}
\end{proof}

With the informational completeness of a qubit SIC measurement established, we now turn to the case of general minimal informationally complete qubit measurements. In particular, we prove the following theorem.

\begin{theorem}
    Let $A\in\mathscr{M}(\cH_Q)$ be any four-outcome qubit measurement. Then its informational completeness satisfies
    \begin{align}
        \mathfrak{D}(A)\leq\frac{1}{\sqrt{6}},
    \end{align}
    where inequality is saturated when $A$ is an SIC qubit measurement. \label{Th:minimal_IC_SIC}
\end{theorem}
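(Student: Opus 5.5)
The plan is to reduce $\mathfrak{D}(A)$ to a finite-dimensional geometric optimisation over Bloch-type vectors, exactly as in the proof of Theorem \ref{Th:SIC_IC_Pow}, and then to bound that optimisation from above by exhibiting a good direction. The saturation claim is already given by Theorem \ref{Th:SIC_IC_Pow}, so only the inequality needs proof.

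\emph{Reduction.} Write each effect as $A(x)=a_x\Id+\textbf{m}(x).\boldsymbol{\sigma}$ for $x=1,\dots,4$. Positivity gives $a_x\geq|\textbf{m}(x)|$. Normalisation gives $\sum_x a_x=1$ and $\sum_x\textbf{m}(x)=0$. Repeating the computation of Eqs. \eqref{Eq:IC_power_def_denom}--\eqref{Eq:Def_IC_Power1} gives
\begin{align}
\mathfrak{D}(A)=\min_{|\hat{\textbf{d}}|=1}\sum_{x=1}^{4}\big|\hat{\textbf{d}}.\textbf{m}(x)\big|.
\end{align}
For a SIC we have $\textbf{m}(x)=\textbf{r}(x)/4$, which recovers $1/\sqrt{6}$. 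If $A$ is not informationally complete, then $\mathfrak{D}(A)=0$ by Proposition \ref{Prop:IC_meas_positive}, so we may assume the $\textbf{m}(x)$ span $\mathbbm{R}^3$.

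\emph{Choice of a test direction.} Upper bounds come from plugging in specific $\hat{\textbf{d}}$. Guided by the SIC minimiser $\hat{\textbf{d}}=(1,-1,0)/\sqrt2$, which is orthogonal to $\textbf{r}(1)$ and $\textbf{r}(4)$, I would take $\hat{\textbf{d}}$ orthogonal to two of the vectors, say $\textbf{m}(i)$ and $\textbf{m}(j)$. Let $\{k,l\}$ be the complementary pair. Since $\textbf{m}(k)+\textbf{m}(l)=-\textbf{m}(i)-\textbf{m}(j)$, we get $\hat{\textbf{d}}.\textbf{m}(l)=-\hat{\textbf{d}}.\textbf{m}(k)$, and therefore
\begin{align}
\sum_x|\hat{\textbf{d}}.\textbf{m}(x)|=2\,\frac{|\det(\textbf{m}(i),\textbf{m}(j),\textbf{m}(k))|}{|\textbf{m}(i)\times\textbf{m}(j)|}=\frac{12\,V}{|\textbf{m}(i)\times\textbf{m}(j)|},
\end{align}
where $V$ is the volume of the tetrahedron spanned by $\textbf{m}(i),\textbf{m}(j),\textbf{m}(k)$. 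This volume is the same for every choice of three of the vectors, because their sum is zero. Hence
\begin{align}
\mathfrak{D}(A)\leq\frac{12V}{\max_{i<j}|\textbf{m}(i)\times\textbf{m}(j)|}.
\end{align}

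\emph{Main obstacle.} It remains to show $12V\leq\frac{1}{\sqrt6}\max_{i<j}|\textbf{m}(i)\times\textbf{m}(j)|$ under the constraints $\sum_x|\textbf{m}(x)|\leq1$ and $\sum_x\textbf{m}(x)=0$. The ratio is scale invariant, so the constraint $\sum_x|\textbf{m}(x)|\leq 1$ matters only through the shape.
\begin{itemize}
\item The first attempt is a comparison bound. Write $V=\frac16|\textbf{m}(i)\times\textbf{m}(j)|\,h_k$, where $h_k$ is the distance of $\textbf{m}(k)$ from the plane of $\textbf{m}(i),\textbf{m}(j)$. Then the bound reads $\min_{\{i,j\}}2h_k\leq 1/\sqrt6$, with the pair chosen to maximise the cross product.
\item For the worst case, I would use symmetry together with a Lagrange-multiplier argument on the configuration space of four vectors summing to zero. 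The aim is to show that the extremal configuration has equal lengths $1/4$ and equal pairwise angles, i.e.\ the SIC.
\item Alternatively, one can average over the three available test directions for a fixed pair partition, and use Hadamard/AM--GM type inequalities relating $V$, the face areas and the edge lengths $|\textbf{m}(x)|$. The regular tetrahedron is the equality case of these inequalities.
\end{itemize}

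If the pairwise-orthogonal test directions turn out to be insufficient for very unbalanced configurations, I would enlarge the family to directions orthogonal to a single $\textbf{m}(i)$. In that case, since the remaining three values sum to zero, one gets $\sum_x|\hat{\textbf{d}}.\textbf{m}(x)|=2\max_{x\neq i}|\hat{\textbf{d}}.\textbf{m}(x)|$. This can then be minimised over the circle orthogonal to $\textbf{m}(i)$.
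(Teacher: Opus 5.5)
Your reduction to $\mathfrak{D}(A)=\min_{|\hat{\textbf{d}}|=1}\sum_x|\hat{\textbf{d}}.\textbf{m}(x)|$ is correct. So is your formula for directions orthogonal to $\textbf{m}(i),\textbf{m}(j)$.

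These directions are in fact optimal. $\sum_x|\hat{\textbf{d}}.\textbf{m}(x)|$ is the support function of the zonotope $\sum_x[-\textbf{m}(x),\textbf{m}(x)]$, and its minimum on the sphere is attained at a facet normal. Hence $\mathfrak{D}(A)=12V/\max_{i<j}|\textbf{m}(i)\times\textbf{m}(j)|$ exactly, and enlarging the family of test directions (your last paragraph) cannot help.

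Consequently, the step you call the main obstacle is the entire theorem, and the proposal does not prove it. The three bullets are strategies, not arguments. Two further problems:
\begin{itemize}
\item $12V/|\textbf{m}(i)\times\textbf{m}(j)|$ is homogeneous of degree one, not scale invariant. The length constraint is exactly what must enter, and the scale-free target is $12V\le\frac{1}{\sqrt6}\big(\sum_x|\textbf{m}(x)|\big)\max_{i<j}|\textbf{m}(i)\times\textbf{m}(j)|$.
\item A Lagrange-multiplier search is poorly suited here. The objective is non-differentiable exactly at the SIC, where all six cross products tie, and the real difficulty is configurations with unequal lengths $|\textbf{m}(x)|$, which none of the bullets addresses.
\end{itemize}

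The paper uses the same test directions. It bounds $2|\hat{\textbf{d}}.\textbf{m}(k)|=|\hat{\textbf{d}}.(\textbf{m}(k)-\textbf{m}(l))|\le|\textbf{m}(k)-\textbf{m}(l)|$, so $\mathfrak{D}(A)\le\delta:=\min_{k<l}|\textbf{m}(k)-\textbf{m}(l)|$. It then averages $|\textbf{m}(k)-\textbf{m}(l)|^2$ uniformly over the six pairs. In your normalization that only yields $\mathfrak{D}(A)\le\sqrt{\tfrac23\sum_x a_x^2}$. Since $\sum_x a_x=1$ forces $\sum_x a_x^2\ge\frac14$, with equality only when all $a_x=\frac14$, this equals $\frac{1}{\sqrt6}$ only in the balanced case. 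The paper's last line substitutes the equality-case values $a(x)=\frac12$, so importing that step would not close your gap.

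What does close it is an average adapted to the $\ell^1$ constraint. With $\ell_x=|\textbf{m}(x)|>0$, weight the pair $\{k,l\}$ by $1/(\ell_k\ell_l)$:
\[
\delta^2\sum_{k<l}\frac{1}{\ell_k\ell_l}\le\sum_{k<l}\frac{|\textbf{m}(k)-\textbf{m}(l)|^2}{\ell_k\ell_l}=\sum_{k<l}\Big(\frac{\ell_k}{\ell_l}+\frac{\ell_l}{\ell_k}\Big)+4-\Big|\sum_x\hat{\textbf{m}}(x)\Big|^2 .
\]
Drop the last term and multiply through by $\ell_1\ell_2\ell_3\ell_4$. This gives $\delta^2\sum_{a<b}\ell_a\ell_b\le4\,\ell_1\ell_2\ell_3\ell_4+\sum_a\ell_a^2\sum_{b<c;\,b,c\neq a}\ell_b\ell_c$. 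So $\delta^2\le\frac16\big(\sum_x\ell_x\big)^2$ follows once this right-hand side is at most $\frac16\big(\sum_x\ell_x\big)^2\sum_{a<b}\ell_a\ell_b$. Expanding, that is equivalent to
\[
\sum_{a\neq b}\ell_a^3\ell_b+2\sum_{a<b}\ell_a^2\ell_b^2\ \ge\ \sum_a\ell_a^2\sum_{b<c;\,b,c\neq a}\ell_b\ell_c+12\,\ell_1\ell_2\ell_3\ell_4 .
\]
This follows from AM--GM, via $\sum_{a\neq b}\ell_a^3\ell_b\ge2\sum_{a<b}\ell_a^2\ell_b^2\ge\sum_a\ell_a^2\sum_{b<c;\,b,c\neq a}\ell_b\ell_c$ and $\sum_{a<b}\ell_a^2\ell_b^2\ge6\,\ell_1\ell_2\ell_3\ell_4$.

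The weighted step never uses $\sum_x\textbf{m}(x)=0$, so vanishing lengths follow by continuity. You then get $\mathfrak{D}(A)\le\delta\le\frac{1}{\sqrt6}\sum_x|\textbf{m}(x)|\le\frac{1}{\sqrt6}$.
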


\begin{proof}
Consider a general four outcome qubit measurement qubit measurement $A=\{A(x)\}_{x=1}^4$, which can be written as
\begin{align}
    A(x)=\frac{1}{2}\Big(a(x)+\textbf{b}(x).\boldsymbol{\sigma}\Big),
    ~\forall x.\label{Eq:minIC_Bloch_rep}
\end{align}

As $A$ is a valid measurement, the parameters must satisfy
\begin{align}
    \sum_x a(x)=2, ~\sum_x\textbf{b}(x)=0, ~|\textbf{b}(x)|\leq a(x), ~\forall ~x.\label{Eq:bloch_vec_prop}
\end{align}

Note that if $A$ is informationally incomplete $\mathfrak{D}(A)=0$ and hence, the theorem is trivially satisfied. Therfore, we assume $A$ to be a four outcome (i.e., minimal) informationally complete qubit measurement. Now, substituting Eq. \eqref{Eq:minIC_Bloch_rep} into Eq. \eqref{Eq:Def_IC_Power}, and following steps analogous to the SIC case, we obtain
\begin{align}
    \mathfrak{D}(A)
    =
    \min_{\hat{|\textbf{d}|}=1}
    \frac{1}{2}
    \sum_x\Big|\hat{\textbf{d}}.\textbf{b}(x)\Big|=:\frac{\Delta}{2}.\label{Eq:IC_power_minIC}
\end{align}

Clearly, here, $\Delta=\min_{\hat{|\textbf{d}|}=1}\sum_x\Big|\hat{\textbf{d}}.\textbf{b}(x)\Big|$. Now choose $\hat{\textbf{d}}$ such that
\begin{align}
   \hat{\textbf{d}} \perp \textbf{b}(i),\textbf{b}(j),
\end{align}
for some $i,j\in\{1,2,3,4\}$. Then $\hat{\textbf{d}}.\textbf{b}(i)=0$, and $\hat{\textbf{d}}.\textbf{b}(j)=0$.

Using $\sum_x\textbf{b}(x)=0$, we obtain
\begin{align}
    0
    =&\sum_x \hat{\textbf{d}}.\textbf{b}(x)\nonumber\\
    =&\hat{\textbf{d}}.\textbf{b}(k)+\hat{\textbf{d}}.\textbf{b}(l)\nonumber\\
    \Longrightarrow&
    \hat{\textbf{d}}.\textbf{b}(k)
    =
    -\hat{\textbf{d}}.\textbf{b}(l),
\end{align}
where $k,l\in\{1,2,3,4\}\setminus\{i,j\}$.

Hence,
\begin{align}
    \sum_x \Big|\hat{\textbf{d}}.\textbf{b}(x)\Big|
    =2\Big|\hat{\textbf{d}}.\textbf{b}(k)\Big|.
\end{align}

Therefore,
\begin{align}
    \Delta
    \leq&
    \sum_x \Big|\hat{\textbf{d}}.\textbf{b}(x)\Big|=2\Big|\hat{\textbf{d}}.\textbf{b}(k)\Big|=\Big|\hat{\textbf{d}}.(\textbf{b}(k)-\textbf{b}(l))\Big|\nonumber\\
    \leq&|\hat{\textbf{d}}||\textbf{b}(k)-\textbf{b}(l)|=|\textbf{b}(k)-\textbf{b}(l)|,
\end{align}
where the second inequality follows from the Cauchy--Schwarz inequality and in the last equality, we have used $|\hat{\textbf{d}}|=1$. Since this holds for arbitrary $(k,l)$,
\begin{align}
    \Delta\leq\min_{k<l}|\textbf{b}(k)-\textbf{b}(l)|.\label{Eq:Delta_diff_rel}
\end{align}

Next, observe that as for $k<l$, six $(k,l)$ pairs are possible, we have
\begin{align}
    \min_{k<l}|\textbf{b}(k)-\textbf{b}(l)|^2
    \leq
    \frac{1}{6}
    \sum_{\substack{k,l\\k<l}}
    |\textbf{b}(k)-\textbf{b}(l)|^2.\label{Eq:min_comp_diff}
\end{align}

Expanding the right-hand side gives
\begin{align}
    \sum_{\substack{k,l\\k<l}}
    |\textbf{b}(k)-\textbf{b}(l)|^2
    =&
    \sum_{\substack{k,l\\k<l}}
    (|\textbf{b}(k)|^2+|\textbf{b}(l)|^2-2\textbf{b}(k).\textbf{b}(l))\nonumber\\
    =&3\sum_{x=1}^4|\textbf{b}(x)|^2
    -2\sum_{\substack{k,l\\k<l}}
    \textbf{b}(k).\textbf{b}(l).\label{Eq:diff_exp}
\end{align}

Moreover,
\begin{align}
    0
    =
    \Big|\sum_x\textbf{b}(x)\Big|^2
    =&
    \sum_x|\textbf{b}(x)|^2
    +2\sum_{\substack{k,l\\k<l}}
    \textbf{b}(k).\textbf{b}(l)\nonumber\\
    \Longrightarrow&
    \sum_{\substack{k,l\\k<l}}
    \textbf{b}(k).\textbf{b}(l)
    =
    -\frac{\sum_x|\textbf{b}(x)|^2}{2}.\label{Eq:prod_def}
\end{align}

Substituting Eq. \eqref{Eq:prod_def} into Eq. \eqref{Eq:diff_exp}, and then using Eq. \eqref{Eq:bloch_vec_prop}, we obtain
\begin{align}
    \sum_{\substack{k,l\\k<l}}
    |\textbf{b}(k)-\textbf{b}(l)|^2
    =&4\sum_{x=1}^4|\textbf{b}(x)|^2\nonumber\\
    \leq&4\sum_{x=1}^4(a(x))^2.\label{Eq:diff_ineq}
\end{align}

Combining Eqs. \eqref{Eq:IC_power_minIC}, \eqref{Eq:Delta_diff_rel}, \eqref{Eq:min_comp_diff}, and \eqref{Eq:diff_ineq}, we obtain
\begin{align}
    \mathfrak{D}(A)
    =
    \frac{\Delta}{2}
    \leq
    \frac{1}{\sqrt{6}}
    \sqrt{\sum_{x=1}^4(a(x))^2}.\label{Eq:minIC_power_bound}
\end{align}

We now determine when the inequalities are saturated. From Eq. \eqref{Eq:bloch_vec_prop}, we observe that the inequality in Eq. \eqref{Eq:diff_ineq} is saturated only if
\begin{align}
    |\textbf{b}(x)|=a(x),
    \qquad \forall ~x.\label{Eq:first_inq_sat}
\end{align}

The inequality in Eq. \eqref{Eq:min_comp_diff} is saturated only when
\begin{align}
    &|\textbf{b}(k)-\textbf{b}(l)|
    :=d,
    \qquad \forall~k\neq l\nonumber\\
\Rightarrow&|\textbf{b}(k)|^2+|\textbf{b}(l)|^2
    -2\textbf{b}(k).\textbf{b}(l)
    =d^2,
    \qquad \forall~k\neq l.\label{Eq:diff_exp_eq}
\end{align}

Since $\sum_x\textbf{b}(x)=0$, for arbitrary $k$ we have
\begin{align}
    \textbf{b}(k).\sum_x\textbf{b}(x)
    =&0,
    \qquad\forall~k\nonumber\\
    \Big|\textbf{b}(k)\Big|^2
    +
    \sum_{\substack{l\\l\neq k}}
    \textbf{b}(k).\textbf{b}(l)
    =&0,
    \qquad\forall~k.
\end{align}

Using Eq. \eqref{Eq:diff_exp_eq}, this becomes
\begin{align}
     \Big|\textbf{b}(k)\Big|^2
     +
     \sum_{\substack{l\\l\neq k}}
     \frac{
     |\textbf{b}(k)|^2+|\textbf{b}(l)|^2-d^2
     }{2}
     =0,
     \qquad\forall~k\nonumber\\
     |\textbf{b}(k)|^2
     =
     \frac{
     3d^2-\sum_x|\textbf{b}(x)|^2
     }{4},
     \qquad\forall~k.\label{Eq:Bloch_vec_eq}
\end{align}

From Eq. \eqref{Eq:Bloch_vec_eq}, we conclude that $|\textbf{b}(k)|=|\textbf{b}(l)|~\forall k,l$. Therefore, Eqs. \eqref{Eq:bloch_vec_prop} and \eqref{Eq:first_inq_sat}, imply
\begin{align}
    |\textbf{b}(x)|=a(x)=\frac{1}{2},
    \qquad\forall~x.\label{Eq:Bloch_vec_b_eq_a}
\end{align}

Again using Eq. \eqref{Eq:Bloch_vec_eq}, we obtain
\begin{align}
    \frac{1}{4}
    =&\frac{3d^2-1}{4}\nonumber\\
    d=&\sqrt{\frac{2}{3}}.
\end{align}

Substituting this into Eq. \eqref{Eq:diff_exp_eq} yields
\begin{align}
    \textbf{b}(k).\textbf{b}(l)
    =&-\frac{1}{12},
    \qquad \forall~k\neq l.
\end{align}

Since $|\textbf{b}(x)|=a(x)$ for every $x$, we may write
\begin{align}
    \textbf{b}(x)=a(x)\hat{\textbf{b}}(x),
    \qquad\forall ~x.\label{Eq:bloch_vec_scal}
\end{align}

Therefore,
\begin{align}
    -\frac{1}{12}
    =&\textbf{b}(k).\textbf{b}(l),
    \qquad \forall~k\neq l\nonumber\\
    =&a(k)a(l)\hat{\textbf{b}}(k).\hat{\textbf{b}}(l),
    \qquad \forall~k\neq l\nonumber\\
    =&\frac{1}{4}\hat{\textbf{b}}(k).\hat{\textbf{b}}(l),
    \qquad \forall~k\neq l\nonumber\\
    \Longrightarrow&
    \hat{\textbf{b}}(k).\hat{\textbf{b}}(l)
    =
    -\frac{1}{3},
    \qquad \forall~k\neq l.\label{Eq:SIC_inner_prod}
\end{align}

Finally, from Eqs. \eqref{Eq:minIC_power_bound} and \eqref{Eq:Bloch_vec_b_eq_a} we conclude that
\begin{align}
    \mathfrak{D}(A)=\frac{\Delta}{2}\leq\frac{1}{\sqrt{6}}.\label{Eq:minIC_up_bound}
\end{align}

Furthermore, substituting Eqs. \eqref{Eq:Bloch_vec_b_eq_a}, \eqref{Eq:bloch_vec_scal} in Eq. \eqref{Eq:minIC_Bloch_rep} we get that the measurement that can achieve this bound is of the form
\begin{align}
     A(x)=\frac{1}{4}\Big(\mathbbm{1}+\hat{\textbf{b}}(x).\boldsymbol{\sigma}\Big),
    ~\forall x,
\end{align}
which, along with Eq. \eqref{Eq:SIC_inner_prod}, is an SIC qubit measurement (see Eq. \eqref{Eq:SIC_Bloch_rep}).

Theorem \ref{Th:SIC_IC_Pow} indeed confirms this that for qubit SIC measurements $\mathfrak{D}(A)=\frac{1}{\sqrt{6}}$, and hence, the bound in Eq. \eqref{Eq:minIC_up_bound} is tight.
\end{proof}

Theorem \ref{Th:minimal_IC_SIC} is consistent with the known fact that when the given state is completely unknown, for quantum state tomography, SICs are the best among minimal ICs.

\subsection{Informational completeness preservability of Quantum Channels}\label{Sec:Info_Comp_Pres}
In this section, we explore the quantification of the informational completeness preservability of a quantum channel. We begin with the following definition:
\begin{definition}
    Given a quantum channel $\Lambda\in\mathscr{C}(\cH,\cH)$, we define its informational completeness preservability (or IC- preservability) as
    \begin{align}
        \tilde{\mathfrak{D}}(\Lambda):=
        \sup_{\substack{A=\{A(x)\}\\
        A\in\mathscr{M}(\cH)}}
        \inf_{\substack{\rho,\rho^{\prime}\in\cS(\cH)\\\rho\neq\rho^{\prime}}}
        \frac{
        \sum_x\Big|
        \tr[\rho \Lambda^{\dagger}(A(x))]
        -
        \tr[\rho^{\prime} \Lambda^{\dagger}(A(x))]
        \Big|
        }{
        ||\rho-\rho^{\prime}||_1
        }.
        \label{Eq:Def_IC_Pres_Power}
    \end{align}
\end{definition}
Note that Eq. \eqref{Eq:Def_IC_Pres_Power} can equivalently be written as 
\begin{equation}
    \tilde{\mathfrak{D}}(\Lambda)=\sup_{\substack{A=\{A(x)\}, A\in\mathscr{M}(\cH)}}\mathfrak{D}(\Lambda^{\dagger}(A)).\label{Eq:Def_IC_Pres_Power_simplified}
\end{equation}
\begin{remark}
\rm{From Eq. \eqref{Eq:Def_IC_Pres_Power_simplified}, using Lemma \ref{Lem:IC_Pow_Unitary_Inv} and the fact that for any given measurement $A=\{A(x)\}\in\mathscr{M}(\cH)$ and a unitary conjugation channel $\mathcal{V}$, $\mathcal{V}(A)=\{VA(x)V^{\dagger}\}$, where $V$ is a unitary matrix, is also a valid measurement, it immediately follows that the IC-preservibility of a quantum channel is invariant under unitary pre and post-processing. That is,
\begin{align}
    \tilde{\mathfrak{D}}(\Lambda)
    =
    \tilde{\mathfrak{D}}(\mathcal{U}\circ\Lambda\circ\mathcal{V}),
\end{align}
where $\mathcal{U}$ and $\mathcal{V}$ denote unitary conjugation channels.}\label{Rem:IC-Pres_Pow_Unitary_Inv}    
\end{remark}

\begin{proposition}
     For a generic quantum channel $\Lambda\in\mathscr{C}(\cH,\cH)$, its IC-preservability $\tilde{\mathfrak{D}}(\Lambda)\geq 0$ and $\tilde{\mathfrak{D}}(\Lambda)= 0$ if and only if $\Lambda^{\dagger}(A(x)$ is informationally incomplete for all $A=\{A(x)\}\in\mathscr{M}(\cH)$. \label{Prop:IC_pres_pow_positive}
\end{proposition}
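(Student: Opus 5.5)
The plan is to reduce everything to Proposition \ref{Prop:IC_meas_positive} through the reformulation in Eq. \eqref{Eq:Def_IC_Pres_Power_simplified}, namely $\tilde{\mathfrak{D}}(\Lambda)=\sup_{A\in\mathscr{M}(\cH)}\mathfrak{D}(\Lambda^{\dagger}(A))$. This needs one preliminary check: for every $A=\{A(x)\}\in\mathscr{M}(\cH)$, the family $\Lambda^{\dagger}(A)=\{\Lambda^{\dagger}(A(x))\}$ is again a valid measurement in $\mathscr{M}(\cH)$.
\begin{itemize}
\item Positivity holds because $\Lambda^{\dagger}$ is positive, being the dual of a completely positive map.
\item Normalization, $\sum_x\Lambda^{\dagger}(A(x))=\Lambda^{\dagger}(\Id)=\Id$, holds because $\Lambda$ is trace preserving, so $\Lambda^{\dagger}$ is unital. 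This follows directly from the defining relation $\tr[Y^\dagger\Lambda^\dagger(X)]=\tr[(\Lambda(Y))^\dagger X]$ with $X=\Id$.
\end{itemize}
With this, $\mathfrak{D}(\Lambda^{\dagger}(A))$ is the informational completeness of a genuine measurement, and Proposition \ref{Prop:IC_meas_positive} applies to each term.

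Nonnegativity then follows at once. Each $\mathfrak{D}(\Lambda^{\dagger}(A))\geq 0$ by Proposition \ref{Prop:IC_meas_positive}, and a supremum of nonnegative numbers over the nonempty set $\mathscr{M}(\cH)$ is nonnegative. Hence $\tilde{\mathfrak{D}}(\Lambda)\geq 0$.

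For the characterization of the zero value, both directions use the same elementary fact: a supremum of a family of nonnegative reals vanishes if and only if every member vanishes.
\begin{itemize}
\item \emph{If} $\Lambda^{\dagger}(A)$ is informationally incomplete for every $A\in\mathscr{M}(\cH)$, then Proposition \ref{Prop:IC_meas_positive} gives $\mathfrak{D}(\Lambda^{\dagger}(A))=0$ for all $A$. Therefore $\tilde{\mathfrak{D}}(\Lambda)=\sup_A 0=0$.
\item \emph{Conversely}, if $\tilde{\mathfrak{D}}(\Lambda)=0$, then $0\leq\mathfrak{D}(\Lambda^{\dagger}(A))\leq\tilde{\mathfrak{D}}(\Lambda)=0$ for each $A$. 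So $\mathfrak{D}(\Lambda^{\dagger}(A))=0$, and the \emph{only if} part of Proposition \ref{Prop:IC_meas_positive} shows that $\Lambda^{\dagger}(A)$ is informationally incomplete.
\end{itemize}
This argument never requires the supremum over $\mathscr{M}(\cH)$ to be attained, which is fortunate because that set (with unbounded numbers of outcomes) is not obviously compact.

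The only step needing any care, and the one I expect to be the main obstacle, is the well-definedness check. One must confirm that $\Lambda^{\dagger}(A)$ lies in $\mathscr{M}(\cH)$, so that Proposition \ref{Prop:IC_meas_positive} is legitimately applicable. One must also confirm that the outcome set $\Omega_A$ stays finite, which is immediate since $\Lambda^{\dagger}$ acts elementwise. Everything else is a direct inheritance from the faithfulness of $\mathfrak{D}$.
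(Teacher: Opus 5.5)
Your proof is correct and follows the paper's argument. Like the paper, you rewrite $\tilde{\mathfrak{D}}(\Lambda)=\sup_{A}\mathfrak{D}(\Lambda^{\dagger}(A))$, reduce both claims to the faithfulness of $\mathfrak{D}$ in Proposition~\ref{Prop:IC_meas_positive}, and use that a supremum of nonnegative numbers vanishes if and only if every term vanishes. Your explicit check that $\Lambda^{\dagger}$ is positive and unital, so that $\Lambda^{\dagger}(A)\in\mathscr{M}(\cH)$, fills in a step the paper only asserts.
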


\begin{proof}
    As $\Lambda$ is a quantum channel, $\Lambda^{\dagger}(A)$ is a valid measurement for any valid measurement $A\in\mathscr{M}(\cH)$. From Eq. \eqref{Eq:Def_IC_Pres_Power_simplified} and Proposition \ref{Prop:IC_meas_positive} we observe that
    \begin{align}
        \hat{\mathfrak{D}}(\Lambda)=&\sup_{\substack{A=\{A(x)\}, A\in\mathscr{M}(\cH)}}\mathfrak{D}(\Lambda^{\dagger}(A))\nonumber\\
        \geq&\mathfrak{D}(\Lambda^{\dagger}(B))\nonumber\\
        \geq&0,
    \end{align}
where $B\in\mathscr{M}(\cH)$ is an arbitrary measurement. Next, if $\hat{\mathfrak{D}}(\Lambda)=0$ then as $\hat{\mathfrak{D}}(\Lambda)\geq0$, we get $\mathfrak{D}(\Lambda^{\dagger}(A))=0$ for any $A\in\mathscr{M}(\cH)$, From Proposition \ref{Prop:IC_meas_positive} we can conclude that $\Lambda^{\dagger}(A)$ is informationally incomplete for any $A\in\mathcal{M}(\cH)$. Conversely, if $\Lambda^{\dagger}(A)$ is informationally incomplete for any $A\in\mathcal{M}(\cH)$ then, again from Proposition \ref{Prop:IC_meas_positive} we get $\mathfrak{D}(\Lambda^{\dagger}(A))=0$ for any $A\in\mathcal{M}(\cH)$. Hence, from Eq. \eqref{Eq:Def_IC_Pres_Power_simplified}, we have $\hat{\mathfrak{D}}(\Lambda)=0$.
\end{proof}

Proposition \ref{Prop:IC_pres_pow_positive}, establishes the faithfulness and non-negativity of $\tilde{\mathfrak{D}}(\Lambda)$. 

Next, we establish the following monotonicity properties of the IC-preservability.

\begin{proposition}
Let $\Lambda_1\in\mathscr{C}(\cH,\cH)$ and $\Lambda_2\in\mathscr{C}(\cH,\cH)$ be two quantum channels.
\begin{enumerate}[label=P \arabic*.]
    \item If $\Lambda_2\succeq_{postproc}\Lambda_1$, then
    \begin{align}
        \tilde{\mathfrak{D}}(\Lambda_2)\geq\tilde{\mathfrak{D}}(\Lambda_1).
    \end{align}\label{Eq:IC_Pres_Pow_Post_Proc}

    \item If there exists a statistical morphism $\mathcal{N}$ such that $\Lambda_1=\mathcal{N}\circ\Lambda_2$,
    then
    \begin{align}
        \tilde{\mathfrak{D}}(\Lambda_2)\geq\tilde{\mathfrak{D}}(\Lambda_1).
    \end{align}\label{Eq:IC_Pres_Pow_Stat_Morph}

    \item If $\Lambda_2\succeq_{asymp}\Lambda_1$, then
    \begin{align}
        \tilde{\mathfrak{D}}(\Lambda_2)=0\Longrightarrow\tilde{\mathfrak{D}}(\Lambda_1)=0.
    \end{align}\label{Eq:IC_Pres_Pow_Post_Proc}
\end{enumerate}
\end{proposition}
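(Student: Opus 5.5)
All three parts follow from Eq.~\eqref{Eq:Def_IC_Pres_Power_simplified}. In each case the strategy is to show that every measurement obtainable as $\Lambda_1^{\dagger}(A)$ is also obtainable as $\Lambda_2^{\dagger}(A')$ for some valid measurement $A'$ (for P1 and P2), or to reduce to faithfulness (for P3). Once that is established, the supremum defining $\tilde{\mathfrak{D}}(\Lambda_2)$ runs over a set that contains all values $\mathfrak{D}(\Lambda_1^{\dagger}(A))$, and the inequality is immediate.

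\emph{P1.} If $\Lambda_1=\Theta\circ\Lambda_2$ with $\Theta$ a channel, then $\Lambda_1^{\dagger}=\Lambda_2^{\dagger}\circ\Theta^{\dagger}$. Take any $A\in\mathscr{M}(\cH)$ and set $A':=\Theta^{\dagger}(A)$. Since $\Theta^{\dagger}$ is positive and unital, $A'\in\mathscr{M}(\cH)$. Then $\mathfrak{D}(\Lambda_1^{\dagger}(A))=\mathfrak{D}(\Lambda_2^{\dagger}(A'))\leq\tilde{\mathfrak{D}}(\Lambda_2)$. Taking the supremum over $A$ gives the claim.

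\emph{P2.} The argument is the same, but it uses Eq.~\eqref{Eq:Stat_morph} in place of the dual map. For any $A$, the statistical morphism supplies a POVM $A_{\cN}$ such that $\tr[\rho\,\Lambda_1^{\dagger}(A(x))]=\tr[\Lambda_1(\rho)A(x)]=\tr[\Lambda_2(\rho)A_{\cN}(x)]=\tr[\rho\,\Lambda_2^{\dagger}(A_{\cN}(x))]$ for all states $\rho$. By linearity in $\rho$, the numerator in Eq.~\eqref{Eq:Def_IC_Power} is identical for $\Lambda_1^{\dagger}(A)$ and $\Lambda_2^{\dagger}(A_{\cN})$ for every pair $\rho,\rho'$. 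The denominator does not depend on the measurement. Hence $\mathfrak{D}(\Lambda_1^{\dagger}(A))=\mathfrak{D}(\Lambda_2^{\dagger}(A_{\cN}))\leq\tilde{\mathfrak{D}}(\Lambda_2)$, and we take the supremum.

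\emph{P3.} Here $\Theta$ is only HPTP, so $\Theta^{\dagger}(A)$ need not be a POVM and the argument above fails. I will use faithfulness instead (Propositions~\ref{Prop:IC_meas_positive} and~\ref{Prop:IC_pres_pow_positive}), arguing by contrapositive. Suppose $\tilde{\mathfrak{D}}(\Lambda_1)>0$. Then some $\Lambda_1^{\dagger}(A)$ is IC. In the proof of Proposition~\ref{Prop:IC_meas_positive}, this means there is no nonzero traceless Hermitian $X$ with $\tr[X\Lambda_1^{\dagger}(A(x))]=0$ for all $x$. Equivalently, $\tr[\Lambda_1(X)A(x)]\neq 0$ for some $x$ whenever $X\neq0$, so $\Lambda_1$ is injective on traceless Hermitian operators and $\Lambda_1(X)\neq0$ for all such $X\neq0$. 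Because $\Lambda_1=\Theta\circ\Lambda_2$, it follows that $\Lambda_2(X)\neq0$ for every nonzero traceless Hermitian $X$.

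The remaining step, which I expect to be the main obstacle, is the converse: given that $\Lambda_2$ kills no nonzero traceless Hermitian operator, produce $B\in\mathscr{M}(\cH)$ such that $\Lambda_2^{\dagger}(B)$ is IC. I plan to take $B$ to be any IC measurement on the output space, for instance one with a SIC-type or otherwise spanning set of elements. Then $\tr[X\Lambda_2^{\dagger}(B(x))]=\tr[\Lambda_2(X)B(x)]$, and this cannot vanish for all $x$, because $\Lambda_2(X)$ is a nonzero Hermitian operator and the elements $B(x)$ span $\cL(\cH)$. Therefore $\Lambda_2^{\dagger}(B)$ is IC, so $\tilde{\mathfrak{D}}(\Lambda_2)\geq\mathfrak{D}(\Lambda_2^{\dagger}(B))>0$, which contradicts the hypothesis $\tilde{\mathfrak{D}}(\Lambda_2)=0$.

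In writing this up I need to check two things: that traceless Hermitian inputs map to Hermitian outputs, which holds since channels are Hermitian preserving; and that IC measurements exist on $\cH$, for example via a basis of positive operators normalized with an identity-dominating sum.
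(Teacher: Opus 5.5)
Your proposal is correct and follows the paper's proof. P1 and P2 are the same argument as in the paper: you embed the measurements $\Lambda_1^{\dagger}(A)$ into the set $\{\Lambda_2^{\dagger}(B)\}$. P3 is the contrapositive of the paper's argument with the same ingredients: an IC measurement on the output detects a nonzero traceless Hermitian $X$ with $\Lambda_2(X)=0$, and then $\Lambda_1(X)=\Theta(\Lambda_2(X))=0$. There are two cosmetic differences. You get the kernel inclusion from linearity of $\Theta$ alone, where the paper invokes Theorem 3 of the cited work. You also finish via the faithfulness propositions, whereas the paper finishes via the bound $\tilde{\mathfrak{D}}(\Lambda_1)\leq\|\Lambda_1(\rho-\rho')\|_1/\|\rho-\rho'\|_1$.
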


\begin{proof}
    We start with proving the first statement. Suppose that
    \begin{align}
        \Lambda_2\succeq_{postproc}\Lambda_1.
    \end{align}
    Then, there exists a quantum channel $\Phi$ such that
    \begin{align}
        \Lambda_1=\Phi\circ\Lambda_2.
    \end{align}

    Using Eq. \eqref{Eq:Def_IC_Pres_Power}, we obtain
    \begin{align}
        &\tilde{\mathfrak{D}}(\Lambda_1)\nonumber\\
        =&
        \sup_{\substack{A=\{A(x)\}\\A\in\mathscr{M}(\cH)}}
        \inf_{\substack{\rho,\rho^{\prime}\\\rho\neq\rho^{\prime}}}
        \frac{
        \sum_x\Big|
        \tr[\rho \Lambda_1^{\dagger}(A(x))]
        -
        \tr[\rho^{\prime} \Lambda_1^{\dagger}(A(x))]
        \Big|
        }{
        ||\rho-\rho^{\prime}||_1
        }\nonumber\\
        =&
        \sup_{\substack{A=\{A(x)\}\\A\in\mathscr{M}(\cH)}}
        \inf_{\substack{\rho,\rho^{\prime}\\\rho\neq\rho^{\prime}}}
        \frac{
        \sum_x\Big|
        \tr[\rho \Lambda_2^{\dagger}\circ\Phi^{\dagger}(A(x))]
        -
        \tr[\rho^{\prime} \Lambda_2^{\dagger}\circ\Phi^{\dagger}(A(x))]
        \Big|
        }{
        ||\rho-\rho^{\prime}||_1
        }\nonumber\\
        =&
        \sup_{\substack{A=\{A(x)\}\\A\in\mathscr{M}(\cH)}}
        \inf_{\substack{\rho,\rho^{\prime}\\\rho\neq\rho^{\prime}}}
        \frac{
        \sum_x\Big|
        \tr[\rho \Lambda_2^{\dagger}(\Phi^{\dagger}(A(x)))]
        -
        \tr[\rho^{\prime} \Lambda_2^{\dagger}(\Phi^{\dagger}(A(x)))]
        \Big|
        }{
        ||\rho-\rho^{\prime}||_1
        }\nonumber\\
        \leq&
        \sup_{\substack{B=\{B(x)\}\\B\in\mathscr{M}(\cH)}}
        \inf_{\substack{\rho,\rho^{\prime}\\\rho\neq\rho^{\prime}}}
        \frac{
        \sum_x\Big|
        \tr[\rho \Lambda_2^{\dagger}(B(x))]
        -
        \tr[\rho^{\prime} \Lambda_2^{\dagger}(B(x))]
        \Big|
        }{
        ||\rho-\rho^{\prime}||_1
        }\nonumber\\
        =&\tilde{\mathfrak{D}}(\Lambda_2)
    \end{align}
    where the inequality follows from the fact that $\{\Phi^{\dagger}(A)\}$ is itself a valid measurement and from the fact that the set $\{\Phi^{\dagger}(A)|A\in\mathscr{M}(\cH)\}\subseteq\mathscr{M}(\cH)$. Clearly, the equality is achieved whenever $\Phi$ is a unitary channel.

    We now prove the second statement. Suppose that $\mathcal{N}$ is a statistical morphism satisfying
    \begin{align}
        \Lambda_1=\mathcal{N}\circ\Lambda_2.
    \end{align}
    
    Then, from Eq. \eqref{Eq:Stat_morph} we have,
    \begin{align}
        &\tilde{\mathfrak{D}}(\Lambda_1)\nonumber\\
        =&
        \sup_{\{A(x)\}}
        \inf_{\substack{\rho,\rho^{\prime}\\
        \rho\neq\rho^{\prime}}}
        \frac{
        \sum_x\Big|
        \tr[\rho \Lambda_1^{\dagger}(A(x))]
        -
        \tr[\rho^{\prime} \Lambda_1^{\dagger}(A(x))]
        \Big|
        }{
        ||\rho-\rho^{\prime}||_1
        }\nonumber\\
        =&
        \sup_{\{A(x)\}}
        \inf_{\substack{\rho,\rho^{\prime}\\\rho\neq\rho^{\prime}}}
        \frac{
        \sum_x\Big|
        \tr[\rho \Lambda_2^{\dagger}\circ\mathcal{N}^{\dagger}(A(x))]
        -
        \tr[\rho^{\prime} \Lambda_2^{\dagger}\circ\mathcal{N}^{\dagger}(A(x))]
        \Big|
        }{
        ||\rho-\rho^{\prime}||_1
        }\nonumber\\
        =&
        \sup_{\{A(x)\}}
        \inf_{\substack{\rho,\rho^{\prime}\\\rho\neq\rho^{\prime}}}
        \frac{
        \sum_x\Big|
        \tr[\mathcal{N}\circ\Lambda_2(\rho) A(x)]
        -
        \tr[\mathcal{N}\circ\Lambda_2(\rho^{\prime})A(x)]
        \Big|
        }{
        ||\rho-\rho^{\prime}||_1
        }\nonumber\\
        =&\sup_{\{A(x)\}}
        \inf_{\substack{\rho,\rho^{\prime}\\\rho\neq\rho^{\prime}}}
        \frac{
        \sum_x\Big|
        \tr[\Lambda_2(\rho) A_{\cN}(x)]
        -
        \tr[\Lambda_2(\rho^{\prime})A_{\cN}(x)]
        \Big|
        }{
        ||\rho-\rho^{\prime}||_1
        }\nonumber\\
        \leq&
        \sup_{\{B(x)\}}
        \inf_{\substack{\rho,\rho^{\prime}\\\rho\neq\rho^{\prime}}}
        \frac{
        \sum_x\Big|
        \tr[\Lambda_2(\rho) B(x)]
        -
        \tr[\Lambda_2(\rho^{\prime}) B(x)]
        \Big|
        }{
        ||\rho-\rho^{\prime}||_1
        }\nonumber\\
        =&\tilde{\mathfrak{D}}(\Lambda_2)\label{Eq:Mat_forms}
    \end{align}
    where the inequality follows from the fact that the set $\{A_{\cN}|A\in\mathscr{M}(\cH)\}\subseteq\mathscr{M}(\cH)$.

    Finally, let us proceed with the proof of the third statement.

    Let us assume that $\tilde{\mathfrak{D}}(\Lambda_2)=0$. Then from Eq. \eqref{Eq:Def_IC_Pres_Power} it implies that for any informationally complete measurement $A=\{A(x)\}$ there exist as a pair of states $\rho,\rho^{\prime}\,\text{where}\,\rho\neq\rho^{\prime}$, such that
    \begin{align}
        &\sum_x\Big|
        \tr[\Lambda_2(\rho) A(x)]
        -\tr[\Lambda_2(\rho^{\prime}) A(x)]
        \Big|=0
        \nonumber\\
        \Longrightarrow&\tr[\Lambda_2(\rho-\rho^{\prime}) A(x)]=0\qquad\forall\,x\nonumber\\\Longrightarrow&\Lambda_2(\rho-\rho^{\prime})=0,
    \end{align}
where third line follows form the fact that $A=\{A(x)\}$ is informationally complete. From Eq. \eqref{Eq:IC_Power_min_rep} it is clear that such $\rho$ and $\rho^{\prime}$ always exist as \emph{inf} in Eq. \eqref{Eq:Def_IC_Power} can be replaced with minimum, and hence, is achievable. Thus we can conclude that $||\Lambda_2(\rho-\rho^{\prime})||_1=0$. Now, as $\Lambda_2\succeq_{asymp}\Lambda_1$, from \cite{Mitra_2026} (see Theorem 3) it implies that
\begin{align}
    ||\Lambda_1(\rho-\rho^{\prime})||_1=0.
\end{align}
Considering that fact that $\tilde{\mathfrak{D}}(\Lambda_1)$ is upper-bounded as
\begin{align}
    \tilde{\mathfrak{D}}(\Lambda_1)\leq\inf_{\substack{\rho,\rho^{\prime}\\\rho\neq\rho^{\prime}}}
        \frac{
        ||\Lambda_1(\rho-\rho^{\prime})||_1
        }{
        ||\rho-\rho^{\prime}||_1
        },
\end{align}
and $\tilde{\mathfrak{D}}(\Lambda_1)\geq0$, we can conclude that $\tilde{\mathfrak{D}}(\Lambda_1)=0$.
    
    This completes the proof.
\end{proof}
Next, we investigate bounds on the IC-preservability of quantum channels. In order to do that, we prove two following important lemmas

\begin{lemma}
    For any qubit channel $\Lambda$, the IC-preservability is upper bounded as,
    \begin{align}
        \tilde{\mathfrak{D}}(\Lambda)\leq\min_{i\in\{1,2,3\}}|\lambda_i|=:\lambda_{min},
    \end{align}
    where $\lambda_1,\lambda_2, \lambda_3$ are the signed singular values associated with the given qubut channel $\Lambda$.\label{Lem:IC_Pres_up_Bound}
\end{lemma}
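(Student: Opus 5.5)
We bound $\mathfrak{D}(\Lambda^{\dagger}(A))$ for every measurement $A$ by exhibiting one particular pair of states, and then take the supremum. By Remark \ref{Rem:IC-Pres_Pow_Unitary_Inv} we may replace $\Lambda$ by $\Lambda_D$ from the decomposition $\Lambda=\mathcal{U}\circ\Lambda_D\circ\mathcal{V}$. Then $M_{\Lambda_D}=\pm D$ is diagonal and the signed singular values are unchanged. Without loss of generality, let $|\lambda_3|=\lambda_{min}$ (otherwise relabel the axes).

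The key step is to use the duality
\begin{align}
\tr[\rho\Lambda^{\dagger}(A(x))]=\tr[\Lambda(\rho)A(x)].
\end{align}
This rewrites the numerator in Eq. \eqref{Eq:Def_IC_Pres_Power} as $\sum_x|\tr[(\Lambda(\rho)-\Lambda(\rho^{\prime}))A(x)]|$.

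Next we bound this numerator for an arbitrary POVM. Write $Y=\Lambda(\rho)-\Lambda(\rho^{\prime})$, which is Hermitian, and let $Y=Y_+-Y_-$ be its Jordan decomposition. Since each $A(x)\geq 0$ and $\sum_x A(x)=\Id$, we get
\begin{align}
\sum_x|\tr[YA(x)]|\leq\sum_x\big(\tr[Y_+A(x)]+\tr[Y_-A(x)]\big)=||Y||_1 .
\end{align}
Hence, for every $A$ and every admissible pair,
\begin{align}
\mathfrak{D}(\Lambda^{\dagger}(A))\leq\frac{||\Lambda(\rho-\rho^{\prime})||_1}{||\rho-\rho^{\prime}||_1}.
\end{align}
This is the same upper bound already used in the proof of the third monotonicity statement.

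Now we choose the pair. Take $\rho,\rho^{\prime}$ with Bloch vectors $\textbf{s}=\hat{e}_3$ and $\textbf{s}^{\prime}=-\hat{e}_3$. Then $\textbf{d}=2\hat{e}_3$, and by Eq. \eqref{Eq:IC_power_def_denom} we have $||\rho-\rho^{\prime}||_1=2$. The translation part $\textbf{t}_\Lambda$ cancels in the difference, so $\Lambda(\rho)-\Lambda(\rho^{\prime})=\frac{1}{2}(M_{\Lambda_D}\textbf{d})\cdot\boldsymbol{\sigma}=\pm\lambda_3\sigma_3$. The same computation as in Eq. \eqref{Eq:IC_power_def_denom} then gives $||\Lambda(\rho-\rho^{\prime})||_1=|M_{\Lambda_D}\textbf{d}|=2|\lambda_3|$.

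Therefore $\mathfrak{D}(\Lambda_D^{\dagger}(A))\leq|\lambda_3|=\lambda_{min}$ for every $A\in\mathscr{M}(\cH_Q)$. Taking the supremum over $A$ in Eq. \eqref{Eq:Def_IC_Pres_Power_simplified}, and applying the unitary invariance, yields $\tilde{\mathfrak{D}}(\Lambda)\leq\lambda_{min}$.

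The only delicate step is the POVM bound $\sum_x|\tr[YA(x)]|\leq||Y||_1$. It is elementary, but it is exactly where positivity and normalization of $A$ enter.

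The reduction to $\Lambda_D$ could even be skipped. Choosing $\textbf{d}$ along the right singular vector of $M_\Lambda$ associated with the smallest singular value gives $|M_\Lambda\textbf{d}|=\lambda_{min}|\textbf{d}|$ directly.
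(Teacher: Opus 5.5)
Your proposal is correct and follows essentially the same route as the paper: reduce to $\Lambda_D$ by unitary invariance, pass to the Schr\"odinger picture via duality, bound the numerator by $\|\Lambda_D(\rho-\rho^{\prime})\|_1$ so that the supremum over $A$ drops out, and evaluate the trace-norm contraction along the axis of the smallest $|\lambda_i|$. The differences are cosmetic: you exhibit the single pair with Bloch vectors $\pm\hat{e}_3$ instead of computing the full infimum over $\textbf{d}$, and you spell out the Jordan-decomposition proof of $\sum_x|\tr[YA(x)]|\leq\|Y\|_1$, which the paper only asserts.
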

\begin{proof}
    Without loss of generality, we can assume that the signed singular values associated with the given channel $\Lambda$ satisfies the ordering $|\lambda_1|\geq|\lambda_2|\geq|\lambda_3|$. Therefore, $\lambda_{min}=|\lambda_3|$.
    Any qubit quantum channel $\Lambda$ admits a decomposition of the form\cite{braun_2014_universal}
    \begin{align}
        \Lambda=\mathcal{U}_{\Lambda}\circ\Lambda_D\circ\mathcal{V}_{\Lambda},\label{Eq:Channel_decomp}
    \end{align}
    where $\mathcal{U}_{\Lambda}$ and $\mathcal{V}_{\Lambda}$ are unitary conjugate channels and $\Lambda_D$ is the quantum channel with diagonal $M_{\Lambda_D}$. Accordingly, any qubit state $\rho$ with Bloch vector $\textbf{s}$ transforms under $\Lambda$, up to unitary rotations, as
    \begin{align}
        \textbf{s}\mapsto M_{\Lambda_D}(\textbf{s})+\textbf{t}.\label{Eq:Bloch_vec_trans_chan}
    \end{align}
    Here,
    \begin{align}
        M_{\Lambda_D}=
        \begin{pmatrix}
        \lambda_1 & 0 & 0 \\
        0 & \lambda_2 & 0 \\
        0 & 0 & \lambda_3
        \end{pmatrix},
        \qquad
        \textbf{t}=
        \begin{pmatrix}
        t_1 \\
        t_2 \\
        t_3
        \end{pmatrix}.
    \end{align}

    Using the unitary invariance of the IC-preservability, we obtain
    \begin{align}
        \tilde{\mathfrak{D}}(\Lambda)
        &=
        \tilde{\mathfrak{D}}(\Lambda_D)\nonumber\\
        &=
        \sup_{A=\{A(x)\}}\inf_{\substack{\rho,\rho^{\prime}\\\rho\neq\rho^{\prime}}}
        \frac{
        \sum_x\Big|\tr[\rho \Lambda_D^{\dagger}(A(x))]-\tr[\rho^{\prime} \Lambda_D^{\dagger}(A(x))]
        \Big|
        }{
        ||\rho-\rho^{\prime}||_1
        }\nonumber\\
        &=\sup_{A=\{A(x)\}}\inf_{\substack{\rho,\rho^{\prime}\\\rho\neq\rho^{\prime}}}\frac{\sum_x\Big|\tr[\Lambda_D(\rho) (A(x))]-\tr[\Lambda_D(\rho^{\prime})(A(x))]\Big|}{||\rho-\rho^{\prime}||_1}\nonumber\\
        &\leq\sup_{A=\{A(x)\}}\inf_{\substack{\rho,\rho^{\prime}\\\rho\neq\rho^{\prime}}}\frac{||\Lambda_D(\rho)-\Lambda_D(\rho^{\prime})||_1}{||\rho-\rho^{\prime}||_1}\nonumber\\
        &=\inf_{\substack{\rho,\rho^{\prime}\\\rho\neq\rho^{\prime}}}\frac{||\Lambda_D(\rho)-\Lambda_D(\rho^{\prime})||_1}{||\rho-\rho^{\prime}||_1},
    \end{align}
    where in the first inequality we have used the fact that the numerator is always less than the trace distance, and in the last equality we have used the fact that the objective function is independent of $A$.
    
   If $\textbf{s}$ and $\textbf{s}^{\prime}$ are the Bloch vectors corresponding to $\rho$ and $\rho^{\prime}$, respectively, and $\textbf{d}:=\textbf{s}-\textbf{s}^{\prime}$ Eq. \eqref{Eq:IC_power_def_denom} and Eq. \eqref{Eq:Bloch_vec_trans_chan} imply
\begin{align}
    \inf_{\substack{\rho,\rho^{\prime}\\\rho\neq\rho^{\prime}}}
    \frac{||\Lambda_D(\rho)-\Lambda_D(\rho^{\prime})||_1}{||\rho-\rho^{\prime}||_1}&=\inf_{\substack{\textbf{s},\textbf{s}^{\prime}\\ \textbf{s}\neq\textbf{s}^{\prime}}}
    \frac{|M_{\Lambda_D}(\textbf{s}-\textbf{s}^{\prime})|}{|\textbf{s}-\textbf{s}^{\prime}|}\nonumber\\
    &=\inf_{k>0}\min_{|\textbf{d}|=k}
    \frac{|M_{\Lambda_D}(\textbf{d})|}{|\textbf{d}|}\nonumber\\
    &=\inf_{k>0}\min_{|\textbf{d}|=k}
    \frac{\sqrt{\lambda_1^2d_1^2+\lambda_2^2d_2^2+\lambda_3^2d_3^2}}{k},
\end{align}

where $\textbf{d}=(d_1,d_2,d_3)$. Since $|\lambda_1|\geq|\lambda_2|\geq|\lambda_3|$, the minimum value $|\lambda_3|$ is attained for vector of the form $\textbf{d}=(0,0,k)$ and we observe that the minimum value is independent of $|\textbf{d}|=k$. Therefore,
\begin{align}
    \tilde{\mathfrak{D}}(\Lambda)
    &\leq
    \inf_{\substack{\rho,\rho^{\prime}\\\rho\neq\rho^{\prime}}}
    \frac{
    ||\Lambda_D(\rho)-\Lambda_D(\rho^{\prime})||_1
    }{
    ||\rho-\rho^{\prime}||_1
    }\nonumber\\
    &=\inf_{k>0}|\lambda_3|=|\lambda_3|=\lambda_{min}.\label{Eq:IC_Pres_Pow_Up_Bound}
\end{align}
\end{proof}
\begin{lemma}
     For any qubit channel $\Lambda$, the IC-preservability is lower-bounded as
    \begin{align}
        \tilde{\mathfrak{D}}(\Lambda)\geq\frac{\max(0,\lambda_{min}-|\textbf{t}|)}{\sqrt{6}}.
    \end{align}\label{Lem:IC_pres_lower_bound}
\end{lemma}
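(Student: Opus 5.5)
The plan is to lower-bound the supremum in Eq. \eqref{Eq:Def_IC_Pres_Power_simplified} by a single good choice: take $A$ to be a qubit SIC measurement, as in Eqs. \eqref{Eq:SIC_Bloch_rep}, \eqref{Eq:SIC_vec12} and \eqref{Eq:Sic_vec34}. This gives $\tilde{\mathfrak{D}}(\Lambda)\geq\mathfrak{D}(\Lambda^{\dagger}(A))$, and it remains to evaluate the right-hand side. Since $\Lambda$ is a channel, $\Lambda^{\dagger}(A)$ is a valid measurement, so there is no validity issue.

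\textbf{Step 1 (the difference of probabilities).} For states $\rho,\rho^{\prime}$ with Bloch vectors $\textbf{s},\textbf{s}^{\prime}$, put $\textbf{d}=\textbf{s}-\textbf{s}^{\prime}$. Then
\begin{align}
\tr[(\rho-\rho^{\prime})\Lambda^{\dagger}(A(x))]=\tr[(\Lambda(\rho)-\Lambda(\rho^{\prime}))A(x)].
\end{align}
By the Bloch representation $\textbf{s}\mapsto M_\Lambda\textbf{s}+\textbf{t}_\Lambda$, we have $\Lambda(\rho)-\Lambda(\rho^{\prime})=\frac{1}{2}(M_\Lambda\textbf{d}).\boldsymbol{\sigma}$. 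The same computation as in Eq. \eqref{Eq:IC_power_def_num} therefore yields $\frac{1}{4}(M_\Lambda\textbf{d}).\textbf{r}(x)$. The translation $\textbf{t}_\Lambda$ cancels in the difference. By Eq. \eqref{Eq:IC_power_def_denom} the denominator is $|\textbf{d}|$, so
\begin{align}
\mathfrak{D}(\Lambda^{\dagger}(A))=\min_{|\hat{\textbf{d}}|=1}\frac{1}{4}\sum_x\big|(M_\Lambda\hat{\textbf{d}}).\textbf{r}(x)\big|.
\end{align}

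\textbf{Step 2 (reuse of the SIC estimate).} The proof of Theorem \ref{Th:SIC_IC_Pow} shows that $\sum_x|\hat{\textbf{u}}.\textbf{r}(x)|\geq 2\sqrt{2}/\sqrt{3}$ for every unit vector $\hat{\textbf{u}}$. By homogeneity this gives
\begin{align}
\sum_x|\textbf{u}.\textbf{r}(x)|\geq\frac{2\sqrt{2}}{\sqrt{3}}|\textbf{u}| \quad\text{for all } \textbf{u}\in\mathbbm{R}^3,
\end{align}
and the case $\textbf{u}=0$ is trivial. Applying this with $\textbf{u}=M_\Lambda\hat{\textbf{d}}$ gives $\mathfrak{D}(\Lambda^{\dagger}(A))\geq\frac{1}{\sqrt{6}}\min_{|\hat{\textbf{d}}|=1}|M_\Lambda\hat{\textbf{d}}|$.

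\textbf{Step 3 (the minimum singular value).} Using $M_\Lambda=O_1M_{\Lambda_D}O_2$ with $O_1,O_2$ orthogonal, we get $\min_{|\hat{\textbf{d}}|=1}|M_\Lambda\hat{\textbf{d}}|=\lambda_{min}$, exactly as in the computation in Lemma \ref{Lem:IC_Pres_up_Bound}. Alternatively, one can first reduce to $\Lambda_D$ via Remark \ref{Rem:IC-Pres_Pow_Unitary_Inv}. Either way, $\tilde{\mathfrak{D}}(\Lambda)\geq\lambda_{min}/\sqrt{6}$. Since $|\textbf{t}|\geq0$ and $\lambda_{min}\geq0$, we have $\lambda_{min}\geq\max(0,\lambda_{min}-|\textbf{t}|)$, and the stated bound follows. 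In fact the argument gives the slightly stronger bound $\lambda_{min}/\sqrt{6}$.

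The only delicate point is Step 1, namely checking that the translation $\textbf{t}$ genuinely drops out of the numerator. It does, because the numerator depends only on $\Lambda(\rho)-\Lambda(\rho^{\prime})$, which is a traceless operator determined by $M_\Lambda\textbf{d}$ alone. Given this, the rest is a direct reuse of Theorem \ref{Th:SIC_IC_Pow}. If one insisted on an argument that visibly involves $\textbf{t}$, for instance by choosing $A$ adapted to the image ellipsoid of $\Lambda$, the $\max(0,\lambda_{min}-|\textbf{t}|)$ form would arise naturally, but the SIC choice above already suffices.
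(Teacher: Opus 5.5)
Your proof is correct and has the same skeleton as the paper's. Both arguments fix a SIC measurement, reduce to a minimum over unit directions, reuse the bound $\sum_x|\textbf{u}\cdot\textbf{r}(x)|\geq\frac{2\sqrt{2}}{\sqrt{3}}|\textbf{u}|$ from Theorem~\ref{Th:SIC_IC_Pow}, and finish with the smallest singular value.

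The one real difference is the translation vector. The paper writes the objective as $\frac{1}{4}\sum_x|(M_{\Lambda_D}(\hat{\textbf{d}})+\textbf{t})\cdot\textbf{r}(x)|$. It bounds this by $|M_{\Lambda_D}(\hat{\textbf{d}})+\textbf{t}|/\sqrt{6}$ and then uses the reverse triangle inequality to reach $\lambda_{min}-|\textbf{t}|$. Your Step~1 shows that this equality is not right as written. The numerator only sees $\Lambda(\rho)-\Lambda(\rho^{\prime})=\frac{1}{2}(M_\Lambda\textbf{d})\cdot\boldsymbol{\sigma}$, so $\textbf{t}$ cancels. The paper itself lets $\textbf{t}$ cancel in exactly this way in the proof of Lemma~\ref{Lem:IC_Pres_up_Bound}.

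Your route therefore repairs that step and also gives more: $\tilde{\mathfrak{D}}(\Lambda)\geq\lambda_{min}/\sqrt{6}$. This implies the stated bound and is strictly stronger whenever $\textbf{t}\neq0$ and $\lambda_{min}>0$. Combined with Lemma~\ref{Lem:IC_Pres_up_Bound}, it places $\tilde{\mathfrak{D}}(\Lambda)$ between $\lambda_{min}/\sqrt{6}$ and $\lambda_{min}$, so $\tilde{\mathfrak{D}}(\Lambda)>0$ exactly when $M_\Lambda$ is invertible. The paper's bound, by contrast, is vacuous once $|\textbf{t}|\geq\lambda_{min}$, which happens at late times in the example of Sec.~\ref{Sec:illustration}.

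You can drop your closing speculation. By your own Step~1, $\mathfrak{D}(\Lambda^{\dagger}(A))$ is independent of $\textbf{t}$ for every measurement $A$. Hence $\tilde{\mathfrak{D}}(\Lambda)$ depends on $M_\Lambda$ alone, and no choice of $A$ will make $\textbf{t}$ enter naturally.
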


\begin{proof}
    From Eq. \eqref{Eq:Def_IC_Pres_Power}, it follows immediately that
    \begin{align}
        \tilde{\mathfrak{D}}(\Lambda)
        &=
        \sup_{\{A(x)\}}
        \inf_{\substack{\rho,\rho^{\prime}\\\rho\neq\rho^{\prime}}}
        \frac{
        \sum_x\Big|
        \tr[\rho \Lambda^{\dagger}(A(x))]
        -
        \tr[\rho^{\prime} \Lambda^{\dagger}(A(x))]
        \Big|
        }{
        ||\rho-\rho^{\prime}||_1
        }\nonumber\\
        &\geq
        \inf_{\substack{\rho,\rho^{\prime}\\\rho\neq\rho^{\prime}}}
        \frac{
        \sum_x\Big|
        \tr[\rho \Lambda^{\dagger}(A^{\prime}(x))]
        -
        \tr[\rho^{\prime} \Lambda^{\dagger}(A^{\prime}(x))]
        \Big|
        }{
        ||\rho-\rho^{\prime}||_1
        },
    \end{align}
    where $A^{\prime}=\{A^{\prime}(x)\}$ is a qubit SIC measurement of the form given in Eq. \eqref{Eq:SIC_Bloch_rep}. Let us choose the qubit SIC measurement such that its Bloch vectors are of the form in Eqs. \eqref{Eq:SIC_vec12} and \eqref{Eq:Sic_vec34}. If $\textbf{d}^{\prime}=(M_{\Lambda_D}(\hat{\textbf{d}})+\textbf{t})$, $\{r(x)\}$ are the Bloch vectors of qubit SIC measurement and $\hat{\textbf{d}}$ is an arbitrary unit vector on the Bloch sphere then from Eq. \eqref{Eq:square_sum_exp} we have
     \begin{align}
        \sum_x
        \Big(
        (M_{\Lambda_D}(\hat{\textbf{d}})+\textbf{t}).\textbf{r}(x)\Big)^2=&\sum_x
        \Big(
        \textbf{d}^{\prime}.\textbf{r}(x)\Big)^2\nonumber\\
        =&\frac{4|\textbf{d}^{\prime}|^2}{3}\nonumber\\
        =&\frac{4|(M_{\Lambda_D}(\hat{\textbf{d}})+\textbf{t})|^2}{3}
    \end{align}

    Proceeding in the same manner as in Theorem~\ref{Th:SIC_IC_Pow}, which led to Eq. \eqref{Eq:IC_power_bound}, and using Eq. \eqref{Eq:Channel_decomp}, we get
    \begin{align}
        \tilde{\mathfrak{D}}(\Lambda)=\tilde{\mathfrak{D}}(\Lambda_D)
        &\geq
        \inf_{\substack{\rho,\rho^{\prime}\\\rho\neq\rho^{\prime}}}
        \frac{
        \sum_x\Big|
        \tr[\rho \Lambda_D^{\dagger}(A^{\prime}(x))]
        -
        \tr[\rho^{\prime} \Lambda_D^{\dagger}(A^{\prime}(x))]
        \Big|
        }{
        ||\rho-\rho^{\prime}||_1
        }\nonumber\\
        &=
        \min_{\hat{|\textbf{d}|}=1}
        \frac{1}{4}
        \sum_x
        \Big|
        (M_{\Lambda_D}(\hat{\textbf{d}})+\textbf{t})
        \cdot
        \textbf{r}(x)
        \Big|
        \nonumber\\
        &\geq
        \min_{\hat{|\textbf{d}|}=1}
        \frac{
        |M_{\Lambda_D}(\hat{\textbf{d}})+\textbf{t}|
        }{
        \sqrt{6}
        }.
    \end{align}

    Now, for any arbitrary unit vector $\hat{\textbf{d}}$, the triangle inequality implies
    \begin{align}
        |M_{\Lambda_D}(\hat{\textbf{d}})+\textbf{t}|
        &\geq|M_{\Lambda_D}(\hat{\textbf{d}})|-
        |\textbf{t}|
        \nonumber\\
        &=
        \sqrt{
        \lambda_1^2d_1^2+
        \lambda_2^2d_2^2+
        \lambda_3^2d_3^2
        }
        -
        |\textbf{t}|
        \nonumber\\
        &\geq
        |\lambda_3|
        \sqrt{
        d_1^2+d_2^2+d_3^2
        }
        -
        |\textbf{t}|
        \nonumber\\
        &=
        |\lambda_3|-|\textbf{t}|.
    \end{align}

    Since $|\lambda_1|\geq|\lambda_2|\geq|\lambda_3|$, we have $\lambda_{min}=|\lambda_3|$.

    Therefore,
    \begin{align}
        \tilde{\mathfrak{D}}(\Lambda)
        \geq
        \frac{
        |\lambda_{min}|-|\textbf{t}|
        }{
        \sqrt{6}
        }.
    \end{align}

    Since $\tilde{\mathfrak{D}}(\Lambda)\geq0$ by definition, we finally obtain
    \begin{align}
        \tilde{\mathfrak{D}}(\Lambda)
        \geq
        \frac{
        \max(0,|\lambda_{min}|-|\textbf{t}|)
        }{
        \sqrt{6}
        }.
    \end{align}
\end{proof}

Combining Lemmas \ref{Lem:IC_Pres_up_Bound} and \ref{Lem:IC_pres_lower_bound}, we can conclude the following theorem regarding bounds on the IC--preservability of quantum channels.

\begin{theorem}
    For any quantum channel $\Lambda$, the IC-preservability satisfies 
    \begin{align}
        \frac{
        \max(0,|\lambda_{min}|-|\textbf{t}|)
        }{
        \sqrt{6}
        }\leq\tilde{\mathfrak{D}}(\Lambda)\leq|\lambda_{min}|.
    \end{align}\label{Th:IC_Power_Bounds}
\end{theorem}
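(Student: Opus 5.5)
The theorem is a direct combination of Lemma \ref{Lem:IC_Pres_up_Bound} and Lemma \ref{Lem:IC_pres_lower_bound}, so the proof will just chain the two bounds. As in the paper, the statement is understood for qubit channels $\Lambda\in\mathscr{C}(\cH_Q,\cH_Q)$.

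\emph{Setup.} First fix the parametrization. By the singular value decomposition $M_\Lambda=O_1M_{\Lambda_D}O_2$, we write $\Lambda=\mathcal{U}_\Lambda\circ\Lambda_D\circ\mathcal{V}_\Lambda$, with signed singular values $\lambda_1,\lambda_2,\lambda_3$ and translation vector $\textbf{t}$ of $\Lambda_D$. The quantity $|\textbf{t}|$ must be the same one used in both lemmas. This holds because conjugating by a unitary rotates the Bloch-space translation vector, so its norm is unchanged. By Remark \ref{Rem:IC-Pres_Pow_Unitary_Inv}, $\tilde{\mathfrak{D}}(\Lambda)=\tilde{\mathfrak{D}}(\Lambda_D)$, so it suffices to work with $\Lambda_D$.

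\emph{Upper bound.} Invoke Lemma \ref{Lem:IC_Pres_up_Bound}, which gives $\tilde{\mathfrak{D}}(\Lambda)\leq\lambda_{min}=|\lambda_{min}|$. Its argument bounds the outcome-difference sum by the trace distance of the output states. That trace distance equals $|M_{\Lambda_D}\textbf{d}|$, and minimizing over unit $\textbf{d}$ gives $\min_i|\lambda_i|$.

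\emph{Lower bound.} Invoke Lemma \ref{Lem:IC_pres_lower_bound}, which gives $\tilde{\mathfrak{D}}(\Lambda)\geq\max(0,\lambda_{min}-|\textbf{t}|)/\sqrt{6}$. Its argument evaluates the supremum at a fixed qubit SIC measurement and reuses the variance identity from Theorem \ref{Th:SIC_IC_Pow}. It then controls $|M_{\Lambda_D}\hat{\textbf{d}}+\textbf{t}|$ from below by the triangle inequality, and uses $\tilde{\mathfrak{D}}\geq0$ (Proposition \ref{Prop:IC_pres_pow_positive}) to take the maximum with zero. Chaining the two displayed inequalities gives the claim.

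\emph{Main obstacle.} The only delicate point is checking that the lower bound is consistent. The translation $\textbf{t}$ drops out of state differences, so it should not really affect the numerator of the measure. I would therefore verify that the term $\textbf{t}\cdot\textbf{r}(x)$ is handled legitimately in Lemma \ref{Lem:IC_pres_lower_bound}, or note that the bound stays valid because $\max(0,\lambda_{min}-|\textbf{t}|)\le\lambda_{min}$. Otherwise, checking that the two bounds are compatible is immediate: $\max(0,\lambda_{min}-|\textbf{t}|)/\sqrt6\le\lambda_{min}$.
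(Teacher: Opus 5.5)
Your proposal is correct and is the paper's own argument: the theorem is just Lemma \ref{Lem:IC_Pres_up_Bound} and Lemma \ref{Lem:IC_pres_lower_bound} chained together. Your suspicion about $\textbf{t}$ is justified, because $\textbf{t}$ cancels in $\Lambda_D(\rho)-\Lambda_D(\rho^{\prime})$, so the paper's intermediate equality with $\frac{1}{4}\sum_x|(M_{\Lambda_D}(\hat{\textbf{d}})+\textbf{t})\cdot\textbf{r}(x)|$ does not hold in general; the corrected quantity $\min_{|\hat{\textbf{d}}|=1}\frac{1}{4}\sum_x|M_{\Lambda_D}(\hat{\textbf{d}})\cdot\textbf{r}(x)|\geq\lambda_{min}/\sqrt{6}$ is a stronger bound, so your second resolution (not the first) is the right one and the stated lower bound follows.
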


\subsection{Absolute output coherence of Quantum Channels and its relation with IC preservability}\label{Sec:Coh_Pres_Pow}

In this section, to establish a conceptual relationship between information completeness and quantum coherence, w define another important quantity, namely the absolute output coherence of a quantum channel as: 

\begin{definition}
    Given a qubit channel $\Lambda$, we define its \emph{absolute output coherence} as
    \begin{align}
        \mathfrak{C}(\Lambda):=
        \min_{E=\{\ket{e},\ket{e^\perp}\}}
        \max_{\rho}
    \mathcal{C}_{E}(\Lambda(\rho)),
        \label{Eq:Def_Coh_Pres_Power}
    \end{align}
    where $\mathcal{C}_{E}$ denotes a valid measure of quantum coherence\cite{Stretslov_2017_Coh_Res} w.r.t. an arbitrarily chosen incoherent basis $E=\{\ket{e},\ket{e^\perp}\}$.
\end{definition}

\begin{remark}
    \rm{$\mathfrak{C}$ refers to the minimum amount of coherence (w.r.t. respect of an arbitrarily chosen incoherent basis) that can be obtained from the output of a given qubit channel for a suitable input state. This definition can be generalized for quantum channels acting on an arbitrary finite-dimensional Hilbert space. Furthermore, just like the IC-preservability of a channel, from the above definition, it follows immediately that the absolute output coherence is also invariant under unitary pre and post-processing. Explicitly,
\begin{align}
    \mathfrak{C}(\Lambda)
    =
    \mathfrak{C}(\mathcal{U}\circ\Lambda\circ\mathcal{V}),
\end{align}
where $\mathcal{U}$ and $\mathcal{V}$ are unitary channels. Now, it is easily seen that  for unitary channels, $\mathfrak{C}$ is maximum and equal to that of identity channel. It is also easy to see that $\mathfrak{C}$ is zero for the completely dephasing channel w.r.t. an arbitrary basis. Furthermore, note that $\min$ function and $\max$ function do not commute as it can be easily seen that $\max_{\rho}\min_{E=\{\ket{e},\ket{e^\perp}\}}\mathcal{C}_{E}(\Lambda(\rho))=0$ by choosing $E$ as the eigen basis of $\Lambda(\rho)$. }
\end{remark}

 Note that in Eq. \eqref{Eq:Def_Coh_Pres_Power}, the coherence measure can be arbitrarily chosen, in general. From now on we restrict ourselves to distance-based coherence measurement defined in Eq. \eqref{Eq:distance_based_coherence_measure} where the distance between two arbitrary qubit states $\rho,\rho^{\prime}$ is chosen to be $||\rho-\rho^{\prime}||_1$. We now establish the bounds on the absolute output coherence of a quantum channel.

\begin{lemma}
    For a given qubit channel $\Lambda$, its absolute output coherence is lower-bounded as
    \begin{align}
        \mathfrak{C}(\Lambda)\geq|\lambda_2|
    \end{align}
    where the signed singular values associated with the given channel $\Lambda$ satisfies the ordering $|\lambda_{max}|=|\lambda_1|\geq|\lambda_2|\geq|\lambda_3|=|\lambda_{min}|$.\label{Lem:Coh_Pres_Pow_Low_Bound}
\end{lemma}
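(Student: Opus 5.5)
The plan is to reduce everything to Bloch-vector geometry, using the trace-distance coherence measure fixed just above the statement. Two reductions come first.

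\emph{Step 1 (coherence formula).} Let $E=\{\ket{e},\ket{e^\perp}\}$ be an incoherent basis, and let $\hat{\textbf{n}}$ be the unit Bloch vector of $\ket{e}\bra{e}$. The incoherent states w.r.t. $E$ are exactly the states whose Bloch vectors lie on the segment $\{c\,\hat{\textbf{n}}: c\in[-1,1]\}$. By Eq. \eqref{Eq:IC_power_def_denom}, the trace distance between two qubit states equals the Euclidean distance between their Bloch vectors. Hence, for a state with Bloch vector $\textbf{r}$, $\mathcal{C}_E$ is the distance from $\textbf{r}$ to this segment. Since $|\textbf{r}\cdot\hat{\textbf{n}}|\leq 1$, the closest point is the orthogonal projection $(\textbf{r}\cdot\hat{\textbf{n}})\hat{\textbf{n}}$. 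Therefore $\mathcal{C}_E(\rho)=|\Pi_{\hat{\textbf{n}}}\textbf{r}|$, where $\Pi_{\hat{\textbf{n}}}$ is the orthogonal projector onto the plane $P=\hat{\textbf{n}}^\perp$.

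\emph{Step 2 (diagonal form).} By the unitary invariance in the remark after the definition of $\mathfrak{C}$, and by Eq. \eqref{Eq:Channel_decomp}, we may replace $\Lambda$ by $\Lambda_D$, whose action is $\textbf{s}\mapsto D\textbf{s}+\textbf{t}$ with $D=\mathrm{diag}(\lambda_1,\lambda_2,\lambda_3)$. A unitary post-processing only rotates the basis $E$, and the outer minimum runs over all bases anyway. The task is then: for every unit $\hat{\textbf{n}}$, find a unit $\textbf{s}$ with $\max\big(|\Pi_{\hat{\textbf{n}}}(D\textbf{s}+\textbf{t})|,\,|\Pi_{\hat{\textbf{n}}}(-D\textbf{s}+\textbf{t})|\big)\geq|\lambda_2|$.

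\emph{Step 3 (removing the translation).} By the triangle inequality,
\begin{align}
2|\Pi_{\hat{\textbf{n}}}D\textbf{s}|\leq|\Pi_{\hat{\textbf{n}}}(D\textbf{s}+\textbf{t})|+|\Pi_{\hat{\textbf{n}}}(-D\textbf{s}+\textbf{t})|.
\end{align}
So one of the two inputs $\pm\textbf{s}$ gives coherence at least $|\Pi_{\hat{\textbf{n}}}D\textbf{s}|$. It therefore suffices to show that $\max_{|\textbf{s}|\leq 1}|\Pi_{\hat{\textbf{n}}}D\textbf{s}|\geq|\lambda_2|$.

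\emph{Step 4 (dimension count).} This is the key step. If $\lambda_2=0$ the claim is trivial, so assume $|\lambda_1|\geq|\lambda_2|>0$. The planes $S=\mathrm{span}\{\textbf{e}_1,\textbf{e}_2\}$ and $P$ are both two-dimensional in $\mathbb{R}^3$, so they share a unit vector $\textbf{w}=a\textbf{e}_1+b\textbf{e}_2\in P$. Set
\begin{align}
\textbf{s}=\frac{D^{-1}\textbf{w}}{|D^{-1}\textbf{w}|},\qquad D^{-1}\textbf{w}:=\Big(\frac{a}{\lambda_1},\frac{b}{\lambda_2},0\Big).
\end{align}
Then $D\textbf{s}=\textbf{w}/|D^{-1}\textbf{w}|$ lies in $P$, so $\Pi_{\hat{\textbf{n}}}D\textbf{s}=D\textbf{s}$. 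Moreover $|D^{-1}\textbf{w}|^2=a^2/\lambda_1^2+b^2/\lambda_2^2\leq 1/\lambda_2^2$, so $|\Pi_{\hat{\textbf{n}}}D\textbf{s}|\geq|\lambda_2|$. Since $\hat{\textbf{n}}$ was arbitrary, taking the minimum over $E$ gives $\mathfrak{C}(\Lambda)\geq|\lambda_2|$.

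I expect the main obstacle to be Step 4, namely seeing that an arbitrary projection plane always meets the span of the two largest semi-axes of the output ellipsoid. Some care is also needed in Step 1 to confirm that the nearest incoherent state is exactly the projection onto the axis, which holds because $|\textbf{r}|\leq1$.
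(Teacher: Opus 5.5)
Your proposal is correct and follows essentially the same route as the paper. Both arguments use the same projection formula for the trace-distance coherence, the same $\pm\textbf{s}$ triangle-inequality trick to discard $\textbf{t}$, and the same key step of choosing a vector in the $1$--$2$ plane whose image under $M_{\Lambda_D}$ lies in the plane $\hat{\textbf{e}}^\perp$. Your construction via $D^{-1}\textbf{w}$ with $\textbf{w}\in S\cap P$ is a concrete version of the paper's choice of $\hat{\textbf{x}}=(x_1,x_2,0)$ with $M_{\Lambda_D}(\hat{\textbf{x}})\perp\hat{\textbf{w}}$. The paper's form needs no separate $\lambda_2=0$ case, since that condition is a single homogeneous linear equation in $(x_1,x_2)$.
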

\begin{proof}
As mentioned above, the measure of coherence~\cite{Stretslov_2015,Stretslov_2017_Coh_Res} is chosen to be
\begin{align}
    \mathcal{C}_E(\Lambda(\rho))
    =
    \min_{\rho^{\prime}_E\in\mathcal{I}_E}
    \mathcal{D}(\Lambda(\rho),\rho^{\prime}_E),
\end{align}
where $\mathcal{I}_E$ denotes the set of incoherent states with respect to the basis $E=\{\ket{e},\ket{e^\perp}\}$, i.e.,
\begin{align}
    \rho^{\prime}_E
    =
    p\ket{e}\bra{e}
    +(1-p)\ket{e^\perp}\bra{e^\perp}.
\end{align}

In this work, we choose the distance measure $\mathcal{D}$ to be the trace distance. Explicitly,
\begin{align}
    \mathcal{D}(\Lambda(\rho),\rho^{\prime}_E)
    =
    ||\Lambda(\rho)-\rho^{\prime}_E||_1.
\end{align}

Consequently, the absolute output coherence of an arbitrary quantum channel can be written as
\begin{align}
    \mathfrak{C}(\Lambda):=
    \min_{E=\{\ket{e},\ket{e^\perp}\}}
    \max_{\rho}
    \min_{\rho^{\prime}_E\in\mathcal{I}}
    ||\Lambda(\rho)-\rho^{\prime}_E||_1.
    \label{Eq:Def_Coh_Pres_Power1}
\end{align}

Using Eq. \eqref{Eq:Bloch_vec_trans_chan}, the trace distance can be expressed as
\begin{align}
    ||\Lambda(\rho)-\rho^{\prime}_E||_1
    =
    |( M_{\Lambda_D}(\textbf{s}) + \textbf{t}) - \textbf{s}^{\prime}_E|,
\end{align}
where $\textbf{s}$ and $\textbf{s}^{\prime}_E$ denote the Bloch vectors corresponding to $\rho$ and $\rho^{\prime}_E$, respectively.

Now, if
\begin{align}
   \ket{e}\bra{e}
   &=
   \frac{\mathbbm{1}+\hat{\textbf{e}}\cdot\sigma}{2}, \\
   \ket{e^\perp}\bra{e^\perp}
   &=
   \frac{\mathbbm{1}-\hat{\textbf{e}}\cdot\sigma}{2},
\end{align}
where $\hat{\textbf{e}}$ is the Bloch vector associated with the reference basis defining incoherent states, then
\begin{align}
     \rho^{\prime}_E
    &=
    p\ket{e}\bra{e}
    +(1-p)\ket{e^\perp}\bra{e^\perp}\nonumber\\
    &=
    \frac{
    \mathbbm{1}
    +(2p-1)\hat{\textbf{e}}\cdot\sigma
    }{2},
\end{align}
with $p\in[0,1]$, implying that $2p-1\in[-1,1]$. Therefore,
\begin{align}
    \textbf{s}_E=(2p-1)\hat{\textbf{e}}.
\end{align}

Hence, the set of incoherent states forms a line passing through the center of the Bloch sphere.

The absolute output coherence can therefore be rewritten as
\begin{align}
     \mathfrak{C}(\Lambda):=
    \min_{\hat{\textbf{e}}}
    \max_{\substack{\textbf{s}\\|\textbf{s}|\leq1}}
    \min_{p\in[0,1]}
    |( M_{\Lambda_D}(\textbf{s}) + \textbf{t}) - (2p-1)\hat{\textbf{e}}|.
\end{align}

We now decompose $ M_{\Lambda_D}(\textbf{s})+\textbf{t}$ into components parallel and perpendicular to $\hat{\textbf{e}}$. Explicitly,
\begin{align}
    M_{\Lambda_D}(\textbf{s})+\textbf{t}
    =
    \bigl(( M_{\Lambda_D}(\textbf{s})+\textbf{t})\cdot\hat{\textbf{e}}\bigr)\hat{\textbf{e}}
    +
    \Proj_{\hat{\textbf{e}}^\perp}( M_{\Lambda_D}(\textbf{s})+\textbf{t}),
\end{align}
where $\Proj_{\hat{\textbf{e}}^\perp}$ denotes the projection onto the plane $\hat{\textbf{e}}^\perp$, orthogonal to $\hat{\textbf{e}}$.

Consequently,
\begin{align}
    |( M_{\Lambda_D}&(\textbf{s}) + \textbf{t}) - (2p-1)\hat{\textbf{e}}|\nonumber\\
    =&
    \sqrt{
    \Bigl(
    ( M_{\Lambda_D}(\textbf{s})+\textbf{t})\cdot\hat{\textbf{e}}
    -(2p-1)
    \Bigr)^2
    +
    |\Proj_{\hat{\textbf{e}}^\perp}( M_{\Lambda_D}(\textbf{s})+\textbf{t})|^2
    }.
\end{align}

The above expression is minimized when
\begin{align}
    (2p-1)
    =
    ( M_{\Lambda_D}(\textbf{s})+\textbf{t})\cdot\hat{\textbf{e}}.
\end{align}
It is always possible to find such a $p$ as $| M_{\Lambda_D}(\textbf{s}) + \textbf{t}|\leq 1$ for all $\textbf{s}$ with $|\textbf{s}|\leq 1$ as this is the condition of $\Lambda(\rho)$ being a valid quantum state.
Therefore,
\begin{align}
    \min_{p\in[0,1]}
    |( M_{\Lambda_D}(\textbf{s}) + \textbf{t}) - (2p-1)\hat{\textbf{e}}|
    =
    |\Proj_{\hat{\textbf{e}}^\perp}( M_{\Lambda_D}(\textbf{s})+\textbf{t})|.
\end{align}

Thus, the absolute output coherence simplifies to
\begin{align}
     \mathfrak{C}(\Lambda):=
    \min_{\hat{\textbf{e}}}
    \max_{\substack{\textbf{s}\\|\textbf{s}|\leq1}}
    |\Proj_{\hat{\textbf{e}}^\perp}( M_{\Lambda_D}(\textbf{s})+\textbf{t})|.
    \label{Eq:Def_Coh_Pres_Power2}
\end{align}
  
 Since the Bloch ball is symmetric, both $\textbf{s}$ and $-\textbf{s}$ are feasible solutions of Eq. \eqref{Eq:Def_Coh_Pres_Power2}. Using the triangle inequality, we obtain
\begin{align}
2|\Proj_{\hat{\textbf{e}}^\perp}( M_{\Lambda_D}&(\textbf{s}))|\nonumber\\
=&
|\Proj_{\hat{\textbf{e}}^\perp}( M_{\Lambda_D}(\textbf{s}))
+\Proj_{\hat{\textbf{e}}^\perp}(\textbf{t})
+\Proj_{\hat{\textbf{e}}^\perp}( M_{\Lambda_D}(\textbf{s}))
-\Proj_{\hat{\textbf{e}}^\perp}(\textbf{t})|\nonumber\\
\leq&
|\Proj_{\hat{\textbf{e}}^\perp}( M_{\Lambda_D}(\textbf{s}))
+\Proj_{\hat{\textbf{e}}^\perp}(\textbf{t})|
+
|\Proj_{\hat{\textbf{e}}^\perp}( M_{\Lambda_D}(\textbf{s}))
-\Proj_{\hat{\textbf{e}}^\perp}(\textbf{t})|\nonumber\\
=&
|\Proj_{\hat{\textbf{e}}^\perp}( M_{\Lambda_D}(\textbf{s}))
+\Proj_{\hat{\textbf{e}}^\perp}(\textbf{t})|
+
|\Proj_{\hat{\textbf{e}}^\perp}(M_{\Lambda_D}(-\textbf{s}))
+\Proj_{\hat{\textbf{e}}^\perp}(\textbf{t})|.
\end{align}

Consequently,
\begin{align}
|\Proj_{\hat{\textbf{e}}^\perp}&(M_{\Lambda_D}\textbf{s})|\nonumber\\
\leq&
\max\Big(
|\Proj_{\hat{\textbf{e}}^\perp}( M_{\Lambda_D}(\textbf{s}))
+\Proj_{\hat{\textbf{e}}^\perp}(\textbf{t})|,
|\Proj_{\hat{\textbf{e}}^\perp}(M_{\Lambda_D}(-\textbf{s}))
+\Proj_{\hat{\textbf{e}}^\perp}(\textbf{t})|
\Big).
\end{align}

Maximizing over all Bloch vectors satisfying $|\textbf{s}|\leq1$, we obtain
\begin{align}
 \max_{\substack{\textbf{s}\\|\textbf{s}|\leq1}}
 |\Proj_{\hat{\textbf{e}}^\perp}( M_{\Lambda_D}(\textbf{s})+\textbf{t})|
\ge
 \max_{\substack{\textbf{s}\\|\textbf{s}|\leq1}}
 |\Proj_{\hat{\textbf{e}}^\perp}( M_{\Lambda_D}(\textbf{s}))|.
\end{align}

Therefore,
\begin{align}
\mathfrak{C}(\Lambda)
\geq
\min_{\hat{\textbf{e}}}
\max_{\substack{\textbf{s}\\|\textbf{s}|\leq1}}
|\Proj_{\hat{\textbf{e}}^\perp}( M_{\Lambda_D}(\textbf{s}))|.
\label{Eq:Def_Coh_Pres_Power3}
\end{align}

Next, consider an arbitrary plane $W$ passing through the origin of the Bloch sphere, and let $\hat{\textbf{w}}$ denote the unit vector perpendicular to it. We can always find a unit vector of the form $\hat{\textbf{x}}=(x_1,x_2,0)$ such that projection of $\hat{\textbf{w}}$ in $1-2$ plane is orthogonal to $M_{\Lambda_D}(\hat{\textbf{x}})=(\lambda_1x_1,\lambda_2x_2,0)$. Therefore,
\begin{align}
\max_{\substack{\textbf{s}\\|\textbf{s}|\leq1}}
|\Proj_{W}( M_{\Lambda_D}(\textbf{s}))|
&\geq
|\Proj_{W}(M_{\Lambda_D}(\hat{\textbf{x}}))|\nonumber\\
&=
|M_{\Lambda_D}(\hat{\textbf{x}})-(M_{\Lambda_D}(\hat{\textbf{x}}).\hat{\textbf{w}})\hat{\textbf{w}}|\nonumber\\
&=
\sqrt{|M_{\Lambda_D}(\hat{\textbf{x}})|^2-(M_{\Lambda_D}(\hat{\textbf{x}}).\hat{\textbf{w}})^2}\nonumber\\
&=
\sqrt{\lambda_1^2x_1^2+\lambda_2^2x_2^2}\nonumber\\
&\geq
|\lambda_2|
\sqrt{x_1^2+x_2^2}\nonumber\\
&=
|\lambda_2|.\label{Eq:Coh_pres_pow_bound_arb_plane}
\end{align}

Since this holds for an arbitrary plane $W$, it holds for the minimum $\hat{\textbf{e}}$ in Eq. \eqref{Eq:Def_Coh_Pres_Power3}. Let us assume that the minimum happens for some $\hat{\textbf{e}}^{\prime}$. The plane perpendicular to it is denoted by $\hat{\textbf{e}}^{\prime\perp}$. As Eq. \eqref{Eq:Coh_pres_pow_bound_arb_plane} holds for this plane also, we have
\begin{align}
    \mathfrak{C}(\Lambda)
    &\geq
    \min_{\hat{\textbf{e}}}
    \max_{\substack{\textbf{s}\\|\textbf{s}|\leq1}}
    |\Proj_{\hat{\textbf{e}}^\perp}( M_{\Lambda_D}(\textbf{s}))|\nonumber\\
    &=\max_{\substack{\textbf{s}\\|\textbf{s}|\leq1}}
    |\Proj_{\hat{\textbf{e}}^{\prime\perp}}( M_{\Lambda_D}(\textbf{s}))|\nonumber\\
    &\geq
    |\lambda_2|.
    \label{Eq:Coh_Pres_Pow_Low_Bound}
\end{align}
\end{proof}

Here, we would like to mention that for an arbitrary ordering of the modulus of singular values $\lambda_1,\lambda_2$ and $\lambda_3$, Lemma \ref{Lem:Coh_Pres_Pow_Low_Bound}
still holds i.e., it is always lower bounded by the modulus of the singular value which is neither maximum nor minimum.

As we know $|\lambda_2|\geq|\lambda_{min}|$, from Theorem \ref{Th:IC_Power_Bounds} and Lemma \ref{Lem:Coh_Pres_Pow_Low_Bound}, we obtain the following theorem that establishes the connection between the IC-preservability and the absolute output coherence of a channel.

\begin{theorem}
    For a given qubit channel $\Lambda$, its absolute output coherence is lower bounded by its IC-preservability power. That is,
    \begin{align}
        \tilde{\mathfrak{D}}(\Lambda)\leq\mathfrak{C}(\Lambda).
    \end{align}\label{Th:IC_Coh_rel}
\end{theorem}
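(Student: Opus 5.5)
The plan is to chain the upper bound of Theorem \ref{Th:IC_Power_Bounds} with the lower bound of Lemma \ref{Lem:Coh_Pres_Pow_Low_Bound}, both written in terms of the signed singular values of the same qubit channel $\Lambda$. Both quantities are invariant under unitary pre- and post-processing (Remark \ref{Rem:IC-Pres_Pow_Unitary_Inv} and the remark following the definition of $\mathfrak{C}$). We may therefore pass to the diagonal representative $\Lambda_D$ from the decomposition $\Lambda=\mathcal{U}_{\Lambda}\circ\Lambda_D\circ\mathcal{V}_{\Lambda}$ in Eq. \eqref{Eq:Channel_decomp}. We order the moduli as $|\lambda_1|\geq|\lambda_2|\geq|\lambda_3|$, so that $\lambda_{min}=|\lambda_3|$. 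This ordering is exactly the one assumed in both Lemma \ref{Lem:IC_Pres_up_Bound} and Lemma \ref{Lem:Coh_Pres_Pow_Low_Bound}, so the two results refer to the same numbers.

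The argument then has three steps. First, Lemma \ref{Lem:IC_Pres_up_Bound} (equivalently, the right-hand inequality of Theorem \ref{Th:IC_Power_Bounds}) gives $\tilde{\mathfrak{D}}(\Lambda)\leq\lambda_{min}=|\lambda_3|$. Second, the ordering gives $|\lambda_3|\leq|\lambda_2|$. Third, Lemma \ref{Lem:Coh_Pres_Pow_Low_Bound}, with the coherence measure taken to be the trace-distance-based one fixed above, gives $|\lambda_2|\leq\mathfrak{C}(\Lambda)$. Chaining these,
\begin{align}
\tilde{\mathfrak{D}}(\Lambda)\leq|\lambda_3|\leq|\lambda_2|\leq\mathfrak{C}(\Lambda),
\end{align}
which is the claim.

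There is no serious obstacle, since the substance lies in the two lemmas. The one point to check is consistency: the singular values in Lemma \ref{Lem:IC_Pres_up_Bound} and Lemma \ref{Lem:Coh_Pres_Pow_Low_Bound} must be the same numbers, coming from the same $M_{\Lambda}$. Any sign ambiguity in $M_{\Lambda_D}=\pm D$ is harmless because only moduli appear. If the ordering were relabelled, the remark after Lemma \ref{Lem:Coh_Pres_Pow_Low_Bound} still bounds $\mathfrak{C}$ below by the middle modulus, and the middle modulus always dominates the minimum modulus.

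The statement as proved holds for the trace-distance coherence measure used in Lemma \ref{Lem:Coh_Pres_Pow_Low_Bound}. It does not cover an arbitrary coherence monotone in Eq. \eqref{Eq:Def_Coh_Pres_Power}, and this restriction should be stated alongside the theorem.
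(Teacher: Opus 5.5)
Your proposal is correct and is exactly the paper's argument: the paper obtains the theorem by chaining $\tilde{\mathfrak{D}}(\Lambda)\leq|\lambda_{min}|$ from Theorem \ref{Th:IC_Power_Bounds} with $|\lambda_{min}|\leq|\lambda_2|$ and with $|\lambda_2|\leq\mathfrak{C}(\Lambda)$ from Lemma \ref{Lem:Coh_Pres_Pow_Low_Bound}. Your caveat that the inequality depends on the trace-distance-based coherence measure, which the paper fixes just before Lemma \ref{Lem:Coh_Pres_Pow_Low_Bound}, is apt, since rescaling a valid coherence monotone by a small positive constant would break the inequality.
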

Next, we will provide an upper bound on the absolute output coherence of a quantum channel.
\begin{lemma}
    For a given qubit channel $\Lambda$, its absolute output coherence is upper-bounded as
    \begin{align}
        \mathfrak{C}(\Lambda)\leq\max_{i\in\{1,2,3\}}|\lambda_i|=:\lambda_{max}.
    \end{align}\label{Lem:Coh_Pres_Pow_Up_Bound}
\end{lemma}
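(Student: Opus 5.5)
The plan is to start from the simplified form of the absolute output coherence in Eq.~\eqref{Eq:Def_Coh_Pres_Power2}, obtained in the proof of Lemma~\ref{Lem:Coh_Pres_Pow_Low_Bound}:
\begin{align}
\mathfrak{C}(\Lambda)=\min_{\hat{\textbf{e}}}\max_{\substack{\textbf{s}\\|\textbf{s}|\leq1}}\big|\Proj_{\hat{\textbf{e}}^\perp}(M_{\Lambda_D}(\textbf{s})+\textbf{t})\big|.
\end{align}
Here we work with $\Lambda_D$ and its translation vector $\textbf{t}$, which is legitimate because $\mathfrak{C}$ is invariant under unitary pre- and post-processing. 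Since the outer operation is a minimum over incoherent bases, it suffices to exhibit one particular direction $\hat{\textbf{e}}$ for which the inner maximum is at most $\lambda_{max}$. No lower-bound machinery is needed.

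We choose $\hat{\textbf{e}}$ parallel to the translation vector: $\hat{\textbf{e}}=\textbf{t}/|\textbf{t}|$ if $\textbf{t}\neq0$, and an arbitrary unit vector if $\textbf{t}=0$. Then $\Proj_{\hat{\textbf{e}}^\perp}(\textbf{t})=0$. By linearity of the projection, the objective becomes $|\Proj_{\hat{\textbf{e}}^\perp}(M_{\Lambda_D}(\textbf{s}))|$. Physically, this means choosing the incoherent basis to contain the eigenbasis of the image of the maximally mixed state, so that the affine shift contributes no coherence.

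Next we bound this quantity. An orthogonal projection does not increase Euclidean length, so $|\Proj_{\hat{\textbf{e}}^\perp}(M_{\Lambda_D}(\textbf{s}))|\leq|M_{\Lambda_D}(\textbf{s})|$. Since $M_{\Lambda_D}$ is diagonal with entries $\lambda_1,\lambda_2,\lambda_3$, we have
\begin{align}
|M_{\Lambda_D}(\textbf{s})|=\sqrt{\lambda_1^2s_1^2+\lambda_2^2s_2^2+\lambda_3^2s_3^2}\leq\lambda_{max}|\textbf{s}|\leq\lambda_{max}
\end{align}
for all $|\textbf{s}|\leq1$. Taking the maximum over $\textbf{s}$, and then using that the minimum over $\hat{\textbf{e}}$ is no larger than the value at our chosen $\hat{\textbf{e}}$, gives $\mathfrak{C}(\Lambda)\leq\lambda_{max}$.

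The main point requiring care is that $\textbf{t}$ in Eq.~\eqref{Eq:Def_Coh_Pres_Power2} is the translation vector in the frame where $M_\Lambda$ has been diagonalised, i.e.\ after the rotations $O_1,O_2$. The choice of $\hat{\textbf{e}}$ must therefore be made in that frame, or equivalently taken as $\hat{\textbf{e}}$ rotated back by the unitary $\mathcal{U}_\Lambda$ for the original channel. Unitary invariance of $\mathfrak{C}$ makes this harmless. The sign convention $M_{\Lambda_D}=\pm D$ also does not matter, because only the moduli $|\lambda_i|$ enter the bound.
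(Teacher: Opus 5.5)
Your proposal is correct and follows essentially the same route as the paper: choose $\hat{\textbf{e}}$ parallel to $\textbf{t}$ in Eq.~\eqref{Eq:Def_Coh_Pres_Power2} so the translation drops out, bound the projected length by $|M_{\Lambda_D}(\textbf{s})|$, and maximize over $|\textbf{s}|\leq 1$ to get $\lambda_{max}$. Your explicit handling of the case $\textbf{t}=0$ (where $\textbf{t}/|\textbf{t}|$ is undefined) and your remark that $\hat{\textbf{e}}$ must be chosen in the diagonalizing frame are small points of care that the paper's version leaves implicit.
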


\begin{proof}
    Without loss of generality, we can assume that the signed singular values associated with the given channel $\Lambda$ satisfy the ordering $|\lambda_1|\geq|\lambda_2|\geq|\lambda_3|$. Therefore, $\lambda_{max}=\lambda_1$.
    Now, in Eq. \eqref{Eq:Def_Coh_Pres_Power2}, consider $\hat{\textbf{e}}=\frac{\textbf{t}}{|\textbf{t}|}=:\hat{\textbf{t}}$. Then, $\Proj_{\hat{\textbf{t}}^{\perp}}(\textbf{t})=0$. Thus we get
    \begin{align}
         \mathfrak{C}(\Lambda)=&
    \min_{\hat{\textbf{e}}}
    \max_{\substack{\textbf{s}\\|\textbf{s}|\leq1}}
    |\Proj_{\hat{\textbf{e}}^\perp}( M_{\Lambda_D}(\textbf{s})+\textbf{t})|\nonumber\\
    \leq&\max_{\substack{\textbf{s}\\|\textbf{s}|\leq1}}
    |\Proj_{\hat{\textbf{t}}^\perp}(M_{\Lambda_D}(\textbf{s}))|\nonumber\\
   =&
\max_{\substack{\textbf{s}\\|\textbf{s}|\leq1}}\sqrt{|M_{\Lambda_D}(\textbf{s})|^2-(M_{\Lambda_D}(\textbf{s}).\hat{\textbf{t}})^2}\nonumber\\
\leq&\max_{\substack{\textbf{s}\\|\textbf{s}|\leq1}}|M_{\Lambda_D}(\textbf{s})|\nonumber\\
=&
\max_{\substack{\textbf{s}\\|\textbf{s}|\leq1}}\sqrt{\lambda_1^2s_1^2+\lambda_2^2s_2^2+\lambda_3^2s_3^2}\nonumber\\
=&|\lambda_1|=\lambda_{max}
    \end{align}
    where the last equality occurs for $\textbf{s}=(1,0,0)^T$.
\end{proof}

Just like in the case of IC-preservability, from Lemmas \ref{Lem:Coh_Pres_Pow_Low_Bound} and \ref{Lem:Coh_Pres_Pow_Up_Bound}, we obtain the following theorem:
\begin{theorem}
    For any qubit quantum channel $\Lambda$, the absolute output coherence satisfies 
    \begin{align}
       |\lambda_2|\leq\mathfrak{C}(\Lambda)\leq|\lambda_1|,
    \end{align}\label{Th:Coh_Power_Bounds}
     where the signed singular values associated with the given channel $\Lambda$ satisfies the ordering $|\lambda_{max}|=|\lambda_1|\geq|\lambda_2|\geq|\lambda_3|= |\lambda_{min}|$.\label{Th:Coh_Power_Bounds}
\end{theorem}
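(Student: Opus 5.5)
The statement of Theorem \ref{Th:Coh_Power_Bounds} combines the two bounds already established, so the plan is to invoke them directly rather than redo any optimization.

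First, I will fix the labelling convention. By the unitary invariance of $\mathfrak{C}$ noted in the remark after its definition, together with the decomposition \eqref{Eq:Channel_decomp}, we may replace $\Lambda$ by $\Lambda_D$. A further relabelling of coordinate axes is a rotation in $SO(3)$ (possibly combined with a sign flip absorbed into $M_{\Lambda_D}$), hence again a unitary conjugation. This lets us assume without loss of generality that $|\lambda_1|\geq|\lambda_2|\geq|\lambda_3|$. This is exactly the ordering under which both Lemma \ref{Lem:Coh_Pres_Pow_Low_Bound} and Lemma \ref{Lem:Coh_Pres_Pow_Up_Bound} were proved, and it is the ordering in the theorem statement.

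Second, the lower bound $|\lambda_2|\leq\mathfrak{C}(\Lambda)$ is precisely Lemma \ref{Lem:Coh_Pres_Pow_Low_Bound}. Recall the argument's key points:
\begin{itemize}
\item The symmetry $\textbf{s}\mapsto-\textbf{s}$ of the Bloch ball removes the translation $\textbf{t}$, reducing \eqref{Eq:Def_Coh_Pres_Power2} to \eqref{Eq:Def_Coh_Pres_Power3}.
\item For any plane orthogonal to $\hat{\textbf{e}}$, a unit vector in the $1$-$2$ plane whose image $M_{\Lambda_D}(\hat{\textbf{x}})$ lies in that plane exists by dimension counting.
\item That image has length at least $|\lambda_2|$.
\end{itemize}

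Third, the upper bound $\mathfrak{C}(\Lambda)\leq|\lambda_1|$ is Lemma \ref{Lem:Coh_Pres_Pow_Up_Bound}. There one picks $\hat{\textbf{e}}=\hat{\textbf{t}}$ to kill the translation component and bounds the projected length by $|M_{\Lambda_D}(\textbf{s})|\leq|\lambda_1|$. If $\textbf{t}=0$, any $\hat{\textbf{e}}$ works and the same estimate holds, so this degenerate case needs only a one-line remark. Chaining the two inequalities gives the claim.

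The only step needing care is the consistency of orderings. Both lemmas must be applied to the same representative $\Lambda_D$ with the same labelling of singular values, and $\mathfrak{C}$ must genuinely be unchanged by the reduction to $\Lambda_D$. I would verify this explicitly through the unitary-invariance remark. The remaining logic is immediate.
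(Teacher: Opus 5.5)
Your proposal is correct and follows the paper's own route: the theorem is obtained by combining Lemma~\ref{Lem:Coh_Pres_Pow_Low_Bound} (the lower bound $|\lambda_2|$) with Lemma~\ref{Lem:Coh_Pres_Pow_Up_Bound} (the upper bound $|\lambda_1|$), under the without-loss-of-generality ordering $|\lambda_1|\geq|\lambda_2|\geq|\lambda_3|$ justified by the unitary invariance of $\mathfrak{C}$. Your extra remarks are clarifications the paper leaves implicit: that axis relabellings can be realized by proper rotations, and that the degenerate case $\textbf{t}=0$ is handled by any choice of $\hat{\textbf{e}}$.
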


If $\Lambda$ is a unital channel, absolute output coherence simplifies as follows:
\begin{proposition}
    For a given qubit unital channel $\Lambda$, the absolute output coherence simplifies as 
    \begin{align}
        \mathfrak{C}(\Lambda)=|\lambda_2|,
    \end{align}
    where the signed singular values associated with the given channel $\Lambda$ satisfies the ordering $|\lambda_{max}|=|\lambda_1|\geq|\lambda_2|\geq|\lambda_3|=|\lambda_{min}|$.\label{Prop:Coh_Pow_Unital}
\end{proposition}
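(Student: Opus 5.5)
Since Lemma \ref{Lem:Coh_Pres_Pow_Low_Bound} already gives $\mathfrak{C}(\Lambda)\geq|\lambda_2|$ for every qubit channel, only the reverse inequality $\mathfrak{C}(\Lambda)\leq|\lambda_2|$ needs proof when $\Lambda$ is unital. Unitality means $\Lambda(\Id/2)=\Id/2$, so $\textbf{t}=0$. This remains true for $\Lambda_D$ in the decomposition \eqref{Eq:Channel_decomp}, because unitary conjugations preserve $\Id$. So the simplified expression \eqref{Eq:Def_Coh_Pres_Power2} becomes
\begin{align}
    \mathfrak{C}(\Lambda)=\min_{\hat{\textbf{e}}}\max_{|\textbf{s}|\leq1}|\Proj_{\hat{\textbf{e}}^\perp}(M_{\Lambda_D}(\textbf{s}))|.
\end{align}
Here I am using the unitary invariance of $\mathfrak{C}$ stated in the remark after its definition.

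To bound the minimum from above, I choose the incoherent axis $\hat{\textbf{e}}=(1,0,0)$, the direction of the largest singular value. Then $\Proj_{\hat{\textbf{e}}^\perp}(M_{\Lambda_D}(\textbf{s}))=(0,\lambda_2s_2,\lambda_3s_3)$, with norm
\begin{align}
    \sqrt{\lambda_2^2s_2^2+\lambda_3^2s_3^2}\leq|\lambda_2|\sqrt{s_2^2+s_3^2}\leq|\lambda_2|,
\end{align}
since $|\lambda_3|\leq|\lambda_2|$ and $|\textbf{s}|\leq1$. The bound is attained at $\textbf{s}=(0,1,0)$. Hence $\mathfrak{C}(\Lambda)\leq|\lambda_2|$, and together with Lemma \ref{Lem:Coh_Pres_Pow_Low_Bound} this gives equality.

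There is no serious obstacle. The one point to check is that the reduction to \eqref{Eq:Def_Coh_Pres_Power2} is legitimate in the unital case. It relies on choosing $p$ with $2p-1=(M_{\Lambda_D}\textbf{s})\cdot\hat{\textbf{e}}$, which is possible because output Bloch vectors have norm at most one; the proof of Lemma \ref{Lem:Coh_Pres_Pow_Low_Bound} already establishes this.
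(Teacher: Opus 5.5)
Your proposal is correct and follows the paper's own proof essentially step for step. You use unitality to set $\textbf{t}=0$, choose $\hat{\textbf{e}}=(1,0,0)$ so that the projection onto the $2$--$3$ plane yields the upper bound $|\lambda_2|$, and combine this with the lower bound of Lemma~\ref{Lem:Coh_Pres_Pow_Low_Bound}.
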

\begin{proof}
    If $\Lambda$ is a unital channel, then $\textbf{t}=(0,0,0)$. Hence, absolute output coherence in Eq. \eqref{Eq:Def_Coh_Pres_Power2} simplifies to
    \begin{align}
        \mathfrak{C}(\Lambda)
=
\min_{\hat{\textbf{e}}}
\max_{\substack{\textbf{s}\\|\textbf{s}|\leq1}}
|\Proj_{\hat{\textbf{e}}^\perp}( M_{\Lambda_D}(\textbf{s}))|.
\label{Eq:Def_Coh_Pres_Power3}
    \end{align}
    Consider $\hat{\textbf{e}}=(1,0,0)$. So $\hat{\textbf{e}}^{\perp}$ is essentially the $2-3$ plane. If $\Proj_{2-3}$ denotes the projection on the $2-3$ plane then
    \begin{align}
        \mathfrak{C}(\Lambda)
=&
\min_{\hat{\textbf{e}}}
\max_{\substack{\textbf{s}\\|\textbf{s}|\leq1}}
|\Proj_{\hat{\textbf{e}}^\perp}( M_{\Lambda_D}(\textbf{s}))|\nonumber\\
\leq&\max_{\substack{\textbf{s}\\|\textbf{s}|\leq1}}
|\Proj_{2-3}( M_{\Lambda_D}(\textbf{s}))|\nonumber\\
=&
\max_{\substack{\textbf{s}\\|\textbf{s}|\leq1}}\sqrt{|M_{\Lambda_D}(\textbf{s})|^2-(M_{\Lambda_D}(\textbf{s}).\hat{\textbf{e}})^2}\nonumber\\
=&
\max_{\substack{\textbf{s}\\|\textbf{s}|\leq1}}\sqrt{\lambda_2^2s_2^2+\lambda_3^2s_3^2}\nonumber\\
=&
|\lambda_2|.
    \end{align}
    Combining this with Lemma \ref{Lem:Coh_Pres_Pow_Low_Bound}, for a unital channel, we have
\begin{align}
    \mathfrak{C}(\Lambda)=|\lambda_2|.
\end{align}
\end{proof}

Just like the Lemma \ref{Lem:Coh_Pres_Pow_Low_Bound}, Theorem \ref{Th:Coh_Power_Bounds}, and Proposition \ref{Prop:Coh_Pow_Unital} hold for any general ordering of the modulus of singular values $\lambda_1,\lambda_2$ and $\lambda_3$.

\subsection{Illustration with a two-level system interacting with bosonic thermal bath }\label{Sec:illustration}
For illustration purposes, we will calculate the above-mentioned quantities for a simple scenario where a two-level system interacts with a bosonic thermal bath under Markovian conditions. It is a very fundamental and yet important setting because it can be used to model radiative decay of a two-level atom interacting with a bath of simple harmonic oscillators at a temperature \cite {scully1997quantum}, a trapped ion transition coupled to a zero-temperature bath of optical modes in the vacuum state \cite {Leib_trapped_ion_2003}, phonons and laser light interacting with quantum dots\cite{nazir2016modelling, Vagov_Quant_Dot_2002}, a superconducting qubit interacting with environment\cite{Wallraff_2021_Circ_ED, Blais_2004_QED}. From Eq. \eqref{Eq:Opt_Mas}, the Bloch vector $\hat{\mathbf{r}}(t)=(r_1(t),r_2(t),r_3(t))$ of the two-level system at time t  satisfies\cite{breuer_2002_theory}:
\begin{align}
    \frac{dr_1(t)}{dt}=&-\frac{\gamma}{2}r_1(t)\\
    \frac{dr_2(t)}{dt}=&-\frac{\gamma}{2}r_2(t)\\
    \frac{dr_3(t)}{dt}=&-\gamma r_3(t)-\gamma_0
\end{align}
If $\{\Lambda_t\}$ is the dynamical map corresponding to this dynamics, for the initial state of the system being the ground state $\ket{g}$, from the above differential equations, we have\cite{prathik_2019_thermalization}
 \begin{align}
        M_{(\Lambda_{t})_{D}}=
        \begin{pmatrix}
        e^{-\gamma t/2} & 0 & 0 \\
        0 &  e^{-\gamma t/2} & 0 \\
        0 & 0 &  e^{-\gamma t}
        \end{pmatrix},
        \qquad
        \textbf{t}=
        \begin{pmatrix}
        0 \\
        0 \\
        \frac{\gamma_0}{\gamma}(e^{-\gamma t}-1)
        \end{pmatrix}.\label{Eq:Aff_trans_therm}
    \end{align}

    This corresponds to a special case of \emph{generalized amplitude damping channel}. We note that in this case $|\lambda_{max}|=|\lambda_1|=|\lambda_2|=e^{-\gamma t/2}$ and $|\lambda_{min}|=|\lambda_3|=e^{-\gamma t}$. So from Theorem \ref{Th:Coh_Power_Bounds} we can conclude that in this case
    \begin{align}
        \mathfrak{C}(\Lambda_t)=e^{-\gamma t/2}.
    \end{align}

    \begin{figure}[!h]
    \centering
    \includegraphics[height=140px, width =246px]{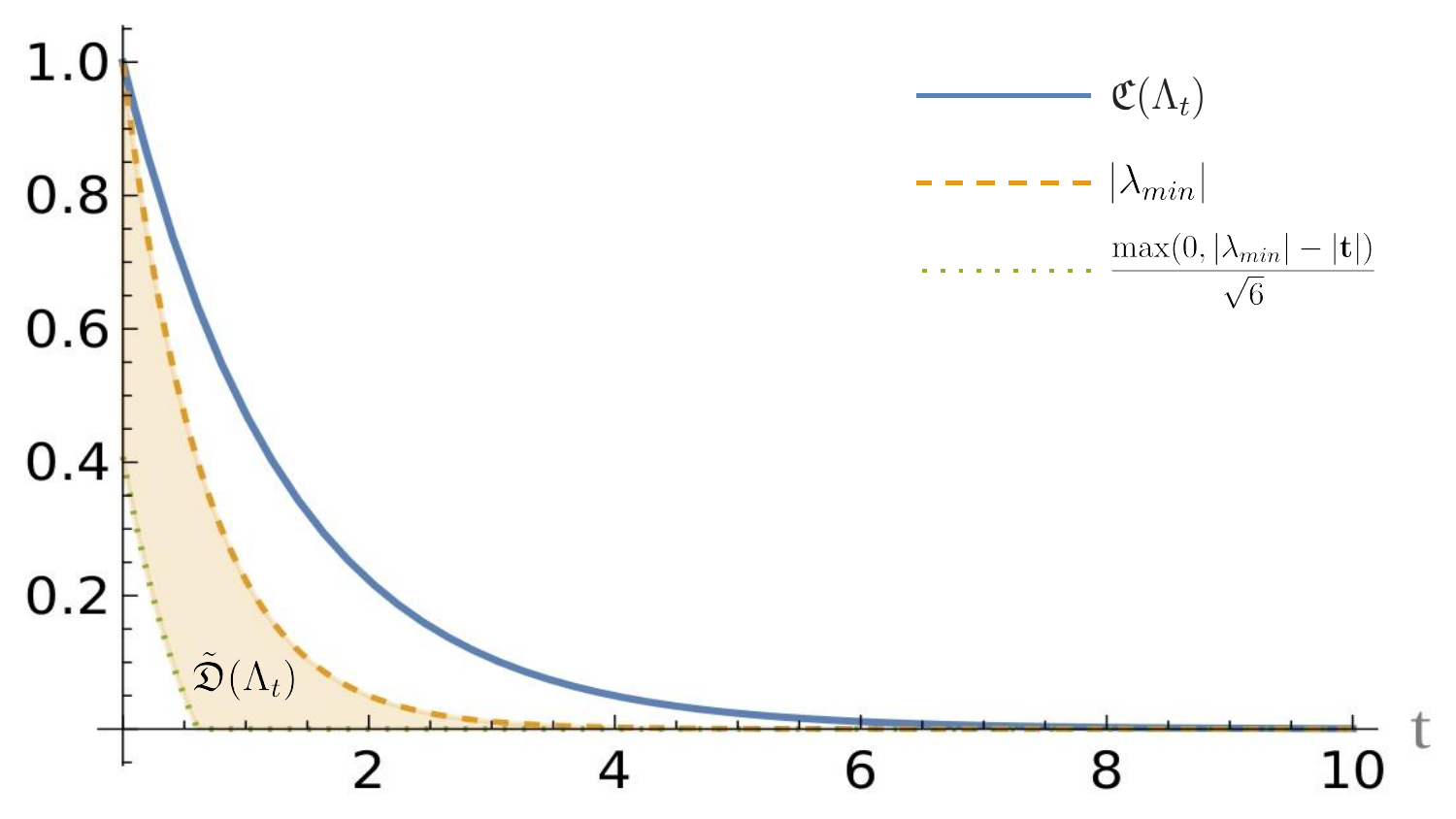}
     \caption{Setting $\gamma_0=1$ and $\gamma=1.5$, in this figure, we plot the absolute coherence output (solid curve) along with the upper (dashed curve) and the lower (dotted curve) on the IC-preservability for the channel $\Lambda$, resulting from the interaction of a two-level system with a bosonic thermal bath. The value of IC-preservability of $\Lambda_t$ lies in the shaded region.}\label {Fig:Coh_PLot} 
\end{figure}

From Fig. \ref{Fig:Coh_PLot}, we observe that $\mathfrak{C}(\Lambda_t)$ dominates the upper bound of $\tilde{\cD}(\Lambda_t)$ and both quantities monotonically tend to zero w.r.t. time. This fact shows that the system continuously loses both coherence and informational completeness w.r.t. time.

\section{Conclusion}\label{Sec:Conclusion}

In this work, we have tried to characterize and quantify the informational completeness of qubit measurements by introducing a faithful measure for it and then studying some of its properties. We have explicitly calculated its numerical value for an arbitrary qubit SIC-POVM and have demonstrated that the numerical value of this \emph{informational completeness} measure for any qubit minimal IC measurement is upper bounded by that of an SIC. Next, for an arbitrary channel, we have defined a quantity called \emph{informational completeness-preservability} that quantifies its ability to preserve informational completeness for any IC measurement when it acts on it in the Heisenberg picture. Just as in the case of \emph{informational completeness}, we proved certain useful properties of it. We then calculated bounds on the \emph{informational completeness-preservability} of a channel in terms of signed singular values associated with that quantum channel. Then, we have defined and studied \emph{absolute output coherence} of the given channel and have demonstrated that the \emph{absolute output coherence} of a qubit channel is lower bounded by its \emph{informational completeness-preservability}, thus establishing a conceptual relationship between informational completeness and quantum coherence. Finally, we have illustrated our results with examples from light-matter interaction scenarios.

Our work opens up several avenues for future research. The majority of the results in this work are proved for qubit systems, and therefore, it would be interesting to generalize the results for higher-dimensional systems. It is also important to investigate how the measure of \emph{informational completeness} relates to performance in some quantum information-theoretic tasks. It would also be interesting to study the further characterization of \emph{absolute output coherence} for higher-dimensional systems as well. 

\section{Acknowledgments}
AM acknowledges STARS (Grant No.
STARS/STARS-2/2023-0809), Government of India for support.

\bibliography{reference}

\begin{thebibliography}{42}%
\makeatletter
\providecommand \@ifxundefined [1]{%
 \@ifx{#1\undefined}
}%
\providecommand \@ifnum [1]{%
 \ifnum #1\expandafter \@firstoftwo
 \else \expandafter \@secondoftwo
 \fi
}%
\providecommand \@ifx [1]{%
 \ifx #1\expandafter \@firstoftwo
 \else \expandafter \@secondoftwo
 \fi
}%
\providecommand \natexlab [1]{#1}%
\providecommand \enquote  [1]{``#1''}%
\providecommand \bibnamefont  [1]{#1}%
\providecommand \bibfnamefont [1]{#1}%
\providecommand \citenamefont [1]{#1}%
\providecommand \href@noop [0]{\@secondoftwo}%
\providecommand \href [0]{\begingroup \@sanitize@url \@href}%
\providecommand \@href[1]{\@@startlink{#1}\@@href}%
\providecommand \@@href[1]{\endgroup#1\@@endlink}%
\providecommand \@sanitize@url [0]{\catcode `\\12\catcode `\$12\catcode `\&12\catcode `\#12\catcode `\^12\catcode `\_12\catcode `\%12\relax}%
\providecommand \@@startlink[1]{}%
\providecommand \@@endlink[0]{}%
\providecommand \url  [0]{\begingroup\@sanitize@url \@url }%
\providecommand \@url [1]{\endgroup\@href {#1}{\urlprefix }}%
\providecommand \urlprefix  [0]{URL }%
\providecommand \Eprint [0]{\href }%
\providecommand \doibase [0]{https://doi.org/}%
\providecommand \selectlanguage [0]{\@gobble}%
\providecommand \bibinfo  [0]{\@secondoftwo}%
\providecommand \bibfield  [0]{\@secondoftwo}%
\providecommand \translation [1]{[#1]}%
\providecommand \BibitemOpen [0]{}%
\providecommand \bibitemStop [0]{}%
\providecommand \bibitemNoStop [0]{.\EOS\space}%
\providecommand \EOS [0]{\spacefactor3000\relax}%
\providecommand \BibitemShut  [1]{\csname bibitem#1\endcsname}%
\let\auto@bib@innerbib\@empty
\bibitem [{\citenamefont {D’Ariano}\ \emph {et~al.}(2004)\citenamefont {D’Ariano}, \citenamefont {Perinotti},\ and\ \citenamefont {Sacchi}}]{Ariano_IC_2004}%
  \BibitemOpen
  \bibfield  {author} {\bibinfo {author} {\bibfnamefont {G.~M.}\ \bibnamefont {D’Ariano}}, \bibinfo {author} {\bibfnamefont {P.}~\bibnamefont {Perinotti}},\ and\ \bibinfo {author} {\bibfnamefont {M.~F.}\ \bibnamefont {Sacchi}},\ }\bibfield  {title} {\bibinfo {title} {Informationally complete measurements and group representation},\ }\href {https://doi.org/10.1088/1464-4266/6/6/005} {\bibfield  {journal} {\bibinfo  {journal} {Journal of Optics B: Quantum and Semiclassical Optics}\ }\textbf {\bibinfo {volume} {6}},\ \bibinfo {pages} {S487} (\bibinfo {year} {2004})}\BibitemShut {NoStop}%
\bibitem [{\citenamefont {Busch}(1991)}]{busch_1991_informationally}%
  \BibitemOpen
  \bibfield  {author} {\bibinfo {author} {\bibfnamefont {P.}~\bibnamefont {Busch}},\ }\bibfield  {title} {\bibinfo {title} {Informationally complete sets of physical quantities},\ }\href {https://link.springer.com/content/pdf/10.1007/BF00671008.pdf} {\bibfield  {journal} {\bibinfo  {journal} {International Journal of Theoretical Physics}\ }\textbf {\bibinfo {volume} {30}},\ \bibinfo {pages} {1217} (\bibinfo {year} {1991})}\BibitemShut {NoStop}%
\bibitem [{\citenamefont {D'Ariano}\ \emph {et~al.}(2005)\citenamefont {D'Ariano}, \citenamefont {Perinotti},\ and\ \citenamefont {Sacchi}}]{Ariano_2005_IC}%
  \BibitemOpen
  \bibfield  {author} {\bibinfo {author} {\bibfnamefont {G.~M.}\ \bibnamefont {D'Ariano}}, \bibinfo {author} {\bibfnamefont {P.}~\bibnamefont {Perinotti}},\ and\ \bibinfo {author} {\bibfnamefont {M.~F.}\ \bibnamefont {Sacchi}},\ }\bibfield  {title} {\bibinfo {title} {Informationally complete measurements on bipartite quantum systems: Comparing local with global measurements},\ }\href {https://doi.org/10.1103/PhysRevA.72.042108} {\bibfield  {journal} {\bibinfo  {journal} {Phys. Rev. A}\ }\textbf {\bibinfo {volume} {72}},\ \bibinfo {pages} {042108} (\bibinfo {year} {2005})}\BibitemShut {NoStop}%
\bibitem [{\citenamefont {Renes}\ \emph {et~al.}(2004)\citenamefont {Renes}, \citenamefont {Blume-Kohout},\ and\ \citenamefont {Caves}}]{Renes_SIC_Conj}%
  \BibitemOpen
  \bibfield  {author} {\bibinfo {author} {\bibfnamefont {J.~M.}\ \bibnamefont {Renes}}, \bibinfo {author} {\bibfnamefont {R.}~\bibnamefont {Blume-Kohout}},\ and\ \bibinfo {author} {\bibfnamefont {C.~M.}\ \bibnamefont {Caves}},\ }\bibfield  {title} {\bibinfo {title} {Symmetric informationally complete quantum measurements},\ }\href {https://pubs.aip.org/aip/jmp/article/45/6/2171/231563} {\bibfield  {journal} {\bibinfo  {journal} {Journal of Mathematical Physics}\ }\textbf {\bibinfo {volume} {45}},\ \bibinfo {pages} {2171} (\bibinfo {year} {2004})}\BibitemShut {NoStop}%
\bibitem [{\citenamefont {Samuel}\ and\ \citenamefont {Gedik}(2025)}]{samuel_2025_symmetric}%
  \BibitemOpen
  \bibfield  {author} {\bibinfo {author} {\bibfnamefont {S.~B.}\ \bibnamefont {Samuel}}\ and\ \bibinfo {author} {\bibfnamefont {Z.}~\bibnamefont {Gedik}},\ }\bibfield  {title} {\bibinfo {title} {Symmetric informationally complete positive operators-valued measures and the knaster’s conjecture},\ }\href {https://pubs.aip.org/aip/jmp/article/66/9/092201/3362817} {\bibfield  {journal} {\bibinfo  {journal} {Journal of Mathematical Physics}\ }\textbf {\bibinfo {volume} {66}} (\bibinfo {year} {2025})}\BibitemShut {NoStop}%
\bibitem [{\citenamefont {Wootters}\ and\ \citenamefont {Fields}(1989)}]{Wooters_1989}%
  \BibitemOpen
  \bibfield  {author} {\bibinfo {author} {\bibfnamefont {W.~K.}\ \bibnamefont {Wootters}}\ and\ \bibinfo {author} {\bibfnamefont {B.~D.}\ \bibnamefont {Fields}},\ }\bibfield  {title} {\bibinfo {title} {Optimal state-determination by mutually unbiased measurements},\ }\href {https://doi.org/https://doi.org/10.1016/0003-4916(89)90322-9} {\bibfield  {journal} {\bibinfo  {journal} {Annals of Physics}\ }\textbf {\bibinfo {volume} {191}},\ \bibinfo {pages} {363} (\bibinfo {year} {1989})}\BibitemShut {NoStop}%
\bibitem [{\citenamefont {Scott}(2006)}]{Scott_2006}%
  \BibitemOpen
  \bibfield  {author} {\bibinfo {author} {\bibfnamefont {A.~J.}\ \bibnamefont {Scott}},\ }\bibfield  {title} {\bibinfo {title} {Tight informationally complete quantum measurements},\ }\href {https://doi.org/10.1088/0305-4470/39/43/009} {\bibfield  {journal} {\bibinfo  {journal} {Journal of Physics A: Mathematical and General}\ }\textbf {\bibinfo {volume} {39}},\ \bibinfo {pages} {13507} (\bibinfo {year} {2006})}\BibitemShut {NoStop}%
\bibitem [{\citenamefont {Shang}\ \emph {et~al.}(2018)\citenamefont {Shang}, \citenamefont {Asadian}, \citenamefont {Zhu},\ and\ \citenamefont {G\"uhne}}]{Shang_2018}%
  \BibitemOpen
  \bibfield  {author} {\bibinfo {author} {\bibfnamefont {J.}~\bibnamefont {Shang}}, \bibinfo {author} {\bibfnamefont {A.}~\bibnamefont {Asadian}}, \bibinfo {author} {\bibfnamefont {H.}~\bibnamefont {Zhu}},\ and\ \bibinfo {author} {\bibfnamefont {O.}~\bibnamefont {G\"uhne}},\ }\bibfield  {title} {\bibinfo {title} {Enhanced entanglement criterion via symmetric informationally complete measurements},\ }\href {https://doi.org/10.1103/PhysRevA.98.022309} {\bibfield  {journal} {\bibinfo  {journal} {Phys. Rev. A}\ }\textbf {\bibinfo {volume} {98}},\ \bibinfo {pages} {022309} (\bibinfo {year} {2018})}\BibitemShut {NoStop}%
\bibitem [{\citenamefont {Brunner}\ \emph {et~al.}(2013)\citenamefont {Brunner}, \citenamefont {Navascu\'es},\ and\ \citenamefont {V\'ertesi}}]{Brunner_2013}%
  \BibitemOpen
  \bibfield  {author} {\bibinfo {author} {\bibfnamefont {N.}~\bibnamefont {Brunner}}, \bibinfo {author} {\bibfnamefont {M.}~\bibnamefont {Navascu\'es}},\ and\ \bibinfo {author} {\bibfnamefont {T.}~\bibnamefont {V\'ertesi}},\ }\bibfield  {title} {\bibinfo {title} {Dimension witnesses and quantum state discrimination},\ }\href {https://doi.org/10.1103/PhysRevLett.110.150501} {\bibfield  {journal} {\bibinfo  {journal} {Phys. Rev. Lett.}\ }\textbf {\bibinfo {volume} {110}},\ \bibinfo {pages} {150501} (\bibinfo {year} {2013})}\BibitemShut {NoStop}%
\bibitem [{\citenamefont {Andersson}\ \emph {et~al.}(2018)\citenamefont {Andersson}, \citenamefont {Badzi\k{a}g}, \citenamefont {Dumitru},\ and\ \citenamefont {Cabello}}]{Andersson_2018}%
  \BibitemOpen
  \bibfield  {author} {\bibinfo {author} {\bibfnamefont {O.}~\bibnamefont {Andersson}}, \bibinfo {author} {\bibfnamefont {P.}~\bibnamefont {Badzi\k{a}g}}, \bibinfo {author} {\bibfnamefont {I.}~\bibnamefont {Dumitru}},\ and\ \bibinfo {author} {\bibfnamefont {A.}~\bibnamefont {Cabello}},\ }\bibfield  {title} {\bibinfo {title} {Device-independent certification of two bits of randomness from one entangled bit and gisin's elegant bell inequality},\ }\href {https://doi.org/10.1103/PhysRevA.97.012314} {\bibfield  {journal} {\bibinfo  {journal} {Phys. Rev. A}\ }\textbf {\bibinfo {volume} {97}},\ \bibinfo {pages} {012314} (\bibinfo {year} {2018})}\BibitemShut {NoStop}%
\bibitem [{\citenamefont {Tavakoli}\ \emph {et~al.}(2020)\citenamefont {Tavakoli}, \citenamefont {Smania}, \citenamefont {Vértesi}, \citenamefont {Brunner},\ and\ \citenamefont {Bourennane}}]{Tavakoli_2020}%
  \BibitemOpen
  \bibfield  {author} {\bibinfo {author} {\bibfnamefont {A.}~\bibnamefont {Tavakoli}}, \bibinfo {author} {\bibfnamefont {M.}~\bibnamefont {Smania}}, \bibinfo {author} {\bibfnamefont {T.}~\bibnamefont {Vértesi}}, \bibinfo {author} {\bibfnamefont {N.}~\bibnamefont {Brunner}},\ and\ \bibinfo {author} {\bibfnamefont {M.}~\bibnamefont {Bourennane}},\ }\bibfield  {title} {\bibinfo {title} {Self-testing nonprojective quantum measurements in prepare-and-measure experiments},\ }\bibfield  {journal} {\bibinfo  {journal} {Science Advances}\ }\textbf {\bibinfo {volume} {6}},\ \href {https://doi.org/10.1126/sciadv.aaw6664} {10.1126/sciadv.aaw6664} (\bibinfo {year} {2020})\BibitemShut {NoStop}%
\bibitem [{\citenamefont {Mironowicz}\ and\ \citenamefont {Paw\l{}owski}(2019)}]{Miron_2019}%
  \BibitemOpen
  \bibfield  {author} {\bibinfo {author} {\bibfnamefont {P.}~\bibnamefont {Mironowicz}}\ and\ \bibinfo {author} {\bibfnamefont {M.}~\bibnamefont {Paw\l{}owski}},\ }\bibfield  {title} {\bibinfo {title} {Experimentally feasible semi-device-independent certification of four-outcome positive-operator-valued measurements},\ }\href {https://doi.org/10.1103/PhysRevA.100.030301} {\bibfield  {journal} {\bibinfo  {journal} {Phys. Rev. A}\ }\textbf {\bibinfo {volume} {100}},\ \bibinfo {pages} {030301(R)} (\bibinfo {year} {2019})}\BibitemShut {NoStop}%
\bibitem [{\citenamefont {Mitra}\ and\ \citenamefont {Ghai}(2026)}]{Mitra_2026}%
  \BibitemOpen
  \bibfield  {author} {\bibinfo {author} {\bibfnamefont {A.}~\bibnamefont {Mitra}}\ and\ \bibinfo {author} {\bibfnamefont {J.}~\bibnamefont {Ghai}},\ }\href {https://arxiv.org/abs/2512.07822} {\bibinfo {title} {Comparing quantum channels using hermitian-preserving trace-preserving linear maps: A physically meaningful approach}} (\bibinfo {year} {2026}),\ \Eprint {https://arxiv.org/abs/2512.07822} {arXiv:2512.07822 [quant-ph]} \BibitemShut {NoStop}%
\bibitem [{\citenamefont {Heinosaari}\ and\ \citenamefont {Ziman}(2011)}]{Heinosaari_book_QF}%
  \BibitemOpen
  \bibfield  {author} {\bibinfo {author} {\bibfnamefont {T.}~\bibnamefont {Heinosaari}}\ and\ \bibinfo {author} {\bibfnamefont {M.}~\bibnamefont {Ziman}},\ }\href {https://doi.org/10.1017/CBO9781139031103} {\emph {\bibinfo {title} {The mathematical language of quantum theory: from uncertainty to entanglement}}}\ (\bibinfo  {publisher} {Cambridge University Press},\ \bibinfo {year} {2011})\BibitemShut {NoStop}%
\bibitem [{\citenamefont {Watrous}(2018)}]{watrous_2018_theory}%
  \BibitemOpen
  \bibfield  {author} {\bibinfo {author} {\bibfnamefont {J.}~\bibnamefont {Watrous}},\ }\href {https://cs.uwaterloo.ca/~watrous/TQI/TQI.pdf} {\emph {\bibinfo {title} {The theory of quantum information}}}\ (\bibinfo  {publisher} {Cambridge university press},\ \bibinfo {year} {2018})\BibitemShut {NoStop}%
\bibitem [{\citenamefont {Uola}\ \emph {et~al.}(2015)\citenamefont {Uola}, \citenamefont {Budroni}, \citenamefont {G\"uhne},\ and\ \citenamefont {Pellonp\"a\"a}}]{Roope_2015_Rob}%
  \BibitemOpen
  \bibfield  {author} {\bibinfo {author} {\bibfnamefont {R.}~\bibnamefont {Uola}}, \bibinfo {author} {\bibfnamefont {C.}~\bibnamefont {Budroni}}, \bibinfo {author} {\bibfnamefont {O.}~\bibnamefont {G\"uhne}},\ and\ \bibinfo {author} {\bibfnamefont {J.-P.}\ \bibnamefont {Pellonp\"a\"a}},\ }\bibfield  {title} {\bibinfo {title} {One-to-one mapping between steering and joint measurability problems},\ }\href {https://doi.org/10.1103/PhysRevLett.115.230402} {\bibfield  {journal} {\bibinfo  {journal} {Phys. Rev. Lett.}\ }\textbf {\bibinfo {volume} {115}},\ \bibinfo {pages} {230402} (\bibinfo {year} {2015})}\BibitemShut {NoStop}%
\bibitem [{\citenamefont {Skrzypczyk}\ \emph {et~al.}(2019)\citenamefont {Skrzypczyk}, \citenamefont {\ifmmode \check{S}\else \v{S}\fi{}upi\ifmmode~\acute{c}\else \'{c}\fi{}},\ and\ \citenamefont {Cavalcanti}}]{Paul_2019_Rob}%
  \BibitemOpen
  \bibfield  {author} {\bibinfo {author} {\bibfnamefont {P.}~\bibnamefont {Skrzypczyk}}, \bibinfo {author} {\bibfnamefont {I.}~\bibnamefont {\ifmmode \check{S}\else \v{S}\fi{}upi\ifmmode~\acute{c}\else \'{c}\fi{}}},\ and\ \bibinfo {author} {\bibfnamefont {D.}~\bibnamefont {Cavalcanti}},\ }\bibfield  {title} {\bibinfo {title} {All sets of incompatible measurements give an advantage in quantum state discrimination},\ }\href {https://doi.org/10.1103/PhysRevLett.122.130403} {\bibfield  {journal} {\bibinfo  {journal} {Phys. Rev. Lett.}\ }\textbf {\bibinfo {volume} {122}},\ \bibinfo {pages} {130403} (\bibinfo {year} {2019})}\BibitemShut {NoStop}%
\bibitem [{\citenamefont {Uola}\ \emph {et~al.}(2020)\citenamefont {Uola}, \citenamefont {Bullock}, \citenamefont {Kraft}, \citenamefont {Pellonp\"a\"a},\ and\ \citenamefont {Brunner}}]{Roope_2020_Weight}%
  \BibitemOpen
  \bibfield  {author} {\bibinfo {author} {\bibfnamefont {R.}~\bibnamefont {Uola}}, \bibinfo {author} {\bibfnamefont {T.}~\bibnamefont {Bullock}}, \bibinfo {author} {\bibfnamefont {T.}~\bibnamefont {Kraft}}, \bibinfo {author} {\bibfnamefont {J.-P.}\ \bibnamefont {Pellonp\"a\"a}},\ and\ \bibinfo {author} {\bibfnamefont {N.}~\bibnamefont {Brunner}},\ }\bibfield  {title} {\bibinfo {title} {All quantum resources provide an advantage in exclusion tasks},\ }\href {https://doi.org/10.1103/PhysRevLett.125.110402} {\bibfield  {journal} {\bibinfo  {journal} {Phys. Rev. Lett.}\ }\textbf {\bibinfo {volume} {125}},\ \bibinfo {pages} {110402} (\bibinfo {year} {2020})}\BibitemShut {NoStop}%
\bibitem [{\citenamefont {Heinosaari}\ and\ \citenamefont {Miyadera}(2017)}]{Heinosaari_incomp_chan}%
  \BibitemOpen
  \bibfield  {author} {\bibinfo {author} {\bibfnamefont {T.}~\bibnamefont {Heinosaari}}\ and\ \bibinfo {author} {\bibfnamefont {T.}~\bibnamefont {Miyadera}},\ }\bibfield  {title} {\bibinfo {title} {Incompatibility of quantum channels},\ }\href {https://doi.org/10.1088/1751-8121/aa5f6b} {\bibfield  {journal} {\bibinfo  {journal} {Journal of Physics A: Mathematical and Theoretical}\ }\textbf {\bibinfo {volume} {50}},\ \bibinfo {pages} {135302} (\bibinfo {year} {2017})}\BibitemShut {NoStop}%
\bibitem [{\citenamefont {Buscemi}(2018)}]{Buscemi_2018_rev_data}%
  \BibitemOpen
  \bibfield  {author} {\bibinfo {author} {\bibfnamefont {F.}~\bibnamefont {Buscemi}},\ }\bibfield  {title} {\bibinfo {title} {Reverse data-processing theorems and computational second laws},\ }in\ \href {https://link.springer.com/chapter/10.1007/978-981-13-2487-1_6} {\emph {\bibinfo {booktitle} {Reality and Measurement in Algebraic Quantum Theory}}}\ (\bibinfo  {publisher} {Springer Singapore},\ \bibinfo {address} {Singapore},\ \bibinfo {year} {2018})\ pp.\ \bibinfo {pages} {135--159}\BibitemShut {NoStop}%
\bibitem [{\citenamefont {Buscemi}(2012)}]{Buscemi_2012_comparison}%
  \BibitemOpen
  \bibfield  {author} {\bibinfo {author} {\bibfnamefont {F.}~\bibnamefont {Buscemi}},\ }\bibfield  {title} {\bibinfo {title} {Comparison of quantum statistical models: equivalent conditions for sufficiency},\ }\href {https://link.springer.com/content/pdf/10.1007/s00220-012-1421-3.pdf} {\bibfield  {journal} {\bibinfo  {journal} {Communications in Mathematical Physics}\ }\textbf {\bibinfo {volume} {310}},\ \bibinfo {pages} {625} (\bibinfo {year} {2012})}\BibitemShut {NoStop}%
\bibitem [{\citenamefont {Buscemi}(2016)}]{Buscemi_2016_degradable}%
  \BibitemOpen
  \bibfield  {author} {\bibinfo {author} {\bibfnamefont {F.}~\bibnamefont {Buscemi}},\ }\bibfield  {title} {\bibinfo {title} {Degradable channels, less noisy channels, and quantum statistical morphisms: An equivalence relation},\ }\href {https://link.springer.com/content/pdf/10.1134/S0032946016030017.pdf} {\bibfield  {journal} {\bibinfo  {journal} {Problems of Information Transmission}\ }\textbf {\bibinfo {volume} {52}},\ \bibinfo {pages} {201} (\bibinfo {year} {2016})}\BibitemShut {NoStop}%
\bibitem [{\citenamefont {Yashin}\ \emph {et~al.}(2020)\citenamefont {Yashin}, \citenamefont {Kiktenko}, \citenamefont {Mastiukova},\ and\ \citenamefont {Fedorov}}]{yashin_2020_minimal}%
  \BibitemOpen
  \bibfield  {author} {\bibinfo {author} {\bibfnamefont {V.}~\bibnamefont {Yashin}}, \bibinfo {author} {\bibfnamefont {E.}~\bibnamefont {Kiktenko}}, \bibinfo {author} {\bibfnamefont {A.}~\bibnamefont {Mastiukova}},\ and\ \bibinfo {author} {\bibfnamefont {A.}~\bibnamefont {Fedorov}},\ }\bibfield  {title} {\bibinfo {title} {Minimal informationally complete measurements for probability representation of quantum dynamics},\ }\href {https://iopscience.iop.org/article/10.1088/1367-2630/abb963/pdf} {\bibfield  {journal} {\bibinfo  {journal} {New Journal of Physics}\ }\textbf {\bibinfo {volume} {22}},\ \bibinfo {pages} {103026} (\bibinfo {year} {2020})}\BibitemShut {NoStop}%
\bibitem [{\citenamefont {Appleby}\ \emph {et~al.}(2017)\citenamefont {Appleby}, \citenamefont {Flammia}, \citenamefont {McConnell},\ and\ \citenamefont {Yard}}]{Appleby_SIC_2017}%
  \BibitemOpen
  \bibfield  {author} {\bibinfo {author} {\bibfnamefont {M.}~\bibnamefont {Appleby}}, \bibinfo {author} {\bibfnamefont {S.}~\bibnamefont {Flammia}}, \bibinfo {author} {\bibfnamefont {G.}~\bibnamefont {McConnell}},\ and\ \bibinfo {author} {\bibfnamefont {J.}~\bibnamefont {Yard}},\ }\bibfield  {title} {\bibinfo {title} {Sics and algebraic number theory},\ }\href {https://link.springer.com/article/10.1007/s10701-017-0090-7} {\bibfield  {journal} {\bibinfo  {journal} {Foundations of Physics}\ }\textbf {\bibinfo {volume} {47}},\ \bibinfo {pages} {1042} (\bibinfo {year} {2017})}\BibitemShut {NoStop}%
\bibitem [{\citenamefont {Appleby}\ \emph {et~al.}(2018)\citenamefont {Appleby}, \citenamefont {Chien}, \citenamefont {Flammia},\ and\ \citenamefont {Waldron}}]{Appleby_SIC_2018}%
  \BibitemOpen
  \bibfield  {author} {\bibinfo {author} {\bibfnamefont {M.}~\bibnamefont {Appleby}}, \bibinfo {author} {\bibfnamefont {T.-Y.}\ \bibnamefont {Chien}}, \bibinfo {author} {\bibfnamefont {S.}~\bibnamefont {Flammia}},\ and\ \bibinfo {author} {\bibfnamefont {S.}~\bibnamefont {Waldron}},\ }\bibfield  {title} {\bibinfo {title} {Constructing exact symmetric informationally complete measurements from numerical solutions},\ }\href {https://doi.org/10.1088/1751-8121/aab4cd} {\bibfield  {journal} {\bibinfo  {journal} {Journal of Physics A: Mathematical and Theoretical}\ }\textbf {\bibinfo {volume} {51}},\ \bibinfo {pages} {165302} (\bibinfo {year} {2018})}\BibitemShut {NoStop}%
\bibitem [{\citenamefont {Appleby}\ \emph {et~al.}(2022)\citenamefont {Appleby}, \citenamefont {Bengtsson}, \citenamefont {Grassl}, \citenamefont {Harrison},\ and\ \citenamefont {McConnell}}]{Appleby_SIC_2022}%
  \BibitemOpen
  \bibfield  {author} {\bibinfo {author} {\bibfnamefont {M.}~\bibnamefont {Appleby}}, \bibinfo {author} {\bibfnamefont {I.}~\bibnamefont {Bengtsson}}, \bibinfo {author} {\bibfnamefont {M.}~\bibnamefont {Grassl}}, \bibinfo {author} {\bibfnamefont {M.}~\bibnamefont {Harrison}},\ and\ \bibinfo {author} {\bibfnamefont {G.}~\bibnamefont {McConnell}},\ }\bibfield  {title} {\bibinfo {title} {Sic-povms from stark units: Prime dimensions n2+ 3},\ }\href {https://pubs.aip.org/aip/jmp/article/63/11/112205/2846122} {\bibfield  {journal} {\bibinfo  {journal} {Journal of Mathematical Physics}\ }\textbf {\bibinfo {volume} {63}} (\bibinfo {year} {2022})}\BibitemShut {NoStop}%
\bibitem [{\citenamefont {Bengtsson}\ \emph {et~al.}(2025)\citenamefont {Bengtsson}, \citenamefont {Grassl},\ and\ \citenamefont {McConnell}}]{Bengtsson_SIC_2025}%
  \BibitemOpen
  \bibfield  {author} {\bibinfo {author} {\bibfnamefont {I.}~\bibnamefont {Bengtsson}}, \bibinfo {author} {\bibfnamefont {M.}~\bibnamefont {Grassl}},\ and\ \bibinfo {author} {\bibfnamefont {G.}~\bibnamefont {McConnell}},\ }\bibfield  {title} {\bibinfo {title} {Sic-povms from stark units: Dimensions n2+ 3= 4p, p prime},\ }\href {https://pubs.aip.org/aip/jmp/article/66/8/082202/3358437} {\bibfield  {journal} {\bibinfo  {journal} {Journal of Mathematical Physics}\ }\textbf {\bibinfo {volume} {66}} (\bibinfo {year} {2025})}\BibitemShut {NoStop}%
\bibitem [{\citenamefont {Scott}(2017)}]{Scott_SIC_num}%
  \BibitemOpen
  \bibfield  {author} {\bibinfo {author} {\bibfnamefont {A.~J.}\ \bibnamefont {Scott}},\ }\href {https://arxiv.org/abs/1703.03993} {\bibinfo {title} {Sics: Extending the list of solutions}} (\bibinfo {year} {2017}),\ \Eprint {https://arxiv.org/abs/1703.03993} {arXiv:1703.03993 [quant-ph]} \BibitemShut {NoStop}%
\bibitem [{\citenamefont {Braun}\ \emph {et~al.}(2014)\citenamefont {Braun}, \citenamefont {Giraud}, \citenamefont {Nechita}, \citenamefont {Pellegrini},\ and\ \citenamefont {{\v{Z}}nidari{\v{c}}}}]{braun_2014_universal}%
  \BibitemOpen
  \bibfield  {author} {\bibinfo {author} {\bibfnamefont {D.}~\bibnamefont {Braun}}, \bibinfo {author} {\bibfnamefont {O.}~\bibnamefont {Giraud}}, \bibinfo {author} {\bibfnamefont {I.}~\bibnamefont {Nechita}}, \bibinfo {author} {\bibfnamefont {C.}~\bibnamefont {Pellegrini}},\ and\ \bibinfo {author} {\bibfnamefont {M.}~\bibnamefont {{\v{Z}}nidari{\v{c}}}},\ }\bibfield  {title} {\bibinfo {title} {A universal set of qubit quantum channels},\ }\href {https://iopscience.iop.org/article/10.1088/1751-8113/47/13/135302/pdf} {\bibfield  {journal} {\bibinfo  {journal} {Journal of Physics A: Mathematical and Theoretical}\ }\textbf {\bibinfo {volume} {47}},\ \bibinfo {pages} {135302} (\bibinfo {year} {2014})}\BibitemShut {NoStop}%
\bibitem [{\citenamefont {King}\ and\ \citenamefont {Ruskai}(2001)}]{Ruskai_2001}%
  \BibitemOpen
  \bibfield  {author} {\bibinfo {author} {\bibfnamefont {C.}~\bibnamefont {King}}\ and\ \bibinfo {author} {\bibfnamefont {M.}~\bibnamefont {Ruskai}},\ }\bibfield  {title} {\bibinfo {title} {Minimal entropy of states emerging from noisy quantum channels},\ }\href {https://doi.org/10.1109/18.904522} {\bibfield  {journal} {\bibinfo  {journal} {IEEE Transactions on Information Theory}\ }\textbf {\bibinfo {volume} {47}},\ \bibinfo {pages} {192} (\bibinfo {year} {2001})}\BibitemShut {NoStop}%
\bibitem [{\citenamefont {{Beth Ruskai}}\ \emph {et~al.}(2002)\citenamefont {{Beth Ruskai}}, \citenamefont {Szarek},\ and\ \citenamefont {Werner}}]{Ruskai_2002}%
  \BibitemOpen
  \bibfield  {author} {\bibinfo {author} {\bibfnamefont {M.}~\bibnamefont {{Beth Ruskai}}}, \bibinfo {author} {\bibfnamefont {S.}~\bibnamefont {Szarek}},\ and\ \bibinfo {author} {\bibfnamefont {E.}~\bibnamefont {Werner}},\ }\bibfield  {title} {\bibinfo {title} {An analysis of completely-positive trace-preserving maps on m2},\ }\href {https://doi.org/https://doi.org/10.1016/S0024-3795(01)00547-X} {\bibfield  {journal} {\bibinfo  {journal} {Linear Algebra and its Applications}\ }\textbf {\bibinfo {volume} {347}},\ \bibinfo {pages} {159} (\bibinfo {year} {2002})}\BibitemShut {NoStop}%
\bibitem [{\citenamefont {Streltsov}\ \emph {et~al.}(2015)\citenamefont {Streltsov}, \citenamefont {Singh}, \citenamefont {Dhar}, \citenamefont {Bera},\ and\ \citenamefont {Adesso}}]{Stretslov_2015}%
  \BibitemOpen
  \bibfield  {author} {\bibinfo {author} {\bibfnamefont {A.}~\bibnamefont {Streltsov}}, \bibinfo {author} {\bibfnamefont {U.}~\bibnamefont {Singh}}, \bibinfo {author} {\bibfnamefont {H.~S.}\ \bibnamefont {Dhar}}, \bibinfo {author} {\bibfnamefont {M.~N.}\ \bibnamefont {Bera}},\ and\ \bibinfo {author} {\bibfnamefont {G.}~\bibnamefont {Adesso}},\ }\bibfield  {title} {\bibinfo {title} {Measuring quantum coherence with entanglement},\ }\href {https://doi.org/10.1103/PhysRevLett.115.020403} {\bibfield  {journal} {\bibinfo  {journal} {Phys. Rev. Lett.}\ }\textbf {\bibinfo {volume} {115}},\ \bibinfo {pages} {020403} (\bibinfo {year} {2015})}\BibitemShut {NoStop}%
\bibitem [{\citenamefont {Baumgratz}\ \emph {et~al.}(2014)\citenamefont {Baumgratz}, \citenamefont {Cramer},\ and\ \citenamefont {Plenio}}]{Baum_2014}%
  \BibitemOpen
  \bibfield  {author} {\bibinfo {author} {\bibfnamefont {T.}~\bibnamefont {Baumgratz}}, \bibinfo {author} {\bibfnamefont {M.}~\bibnamefont {Cramer}},\ and\ \bibinfo {author} {\bibfnamefont {M.~B.}\ \bibnamefont {Plenio}},\ }\bibfield  {title} {\bibinfo {title} {Quantifying coherence},\ }\href {https://doi.org/10.1103/PhysRevLett.113.140401} {\bibfield  {journal} {\bibinfo  {journal} {Phys. Rev. Lett.}\ }\textbf {\bibinfo {volume} {113}},\ \bibinfo {pages} {140401} (\bibinfo {year} {2014})}\BibitemShut {NoStop}%
\bibitem [{\citenamefont {Streltsov}\ \emph {et~al.}(2017)\citenamefont {Streltsov}, \citenamefont {Adesso},\ and\ \citenamefont {Plenio}}]{Stretslov_2017_Coh_Res}%
  \BibitemOpen
  \bibfield  {author} {\bibinfo {author} {\bibfnamefont {A.}~\bibnamefont {Streltsov}}, \bibinfo {author} {\bibfnamefont {G.}~\bibnamefont {Adesso}},\ and\ \bibinfo {author} {\bibfnamefont {M.~B.}\ \bibnamefont {Plenio}},\ }\bibfield  {title} {\bibinfo {title} {Colloquium: Quantum coherence as a resource},\ }\href {https://doi.org/10.1103/RevModPhys.89.041003} {\bibfield  {journal} {\bibinfo  {journal} {Rev. Mod. Phys.}\ }\textbf {\bibinfo {volume} {89}},\ \bibinfo {pages} {041003} (\bibinfo {year} {2017})}\BibitemShut {NoStop}%
\bibitem [{\citenamefont {Breuer}\ and\ \citenamefont {Petruccione}(2002)}]{breuer_2002_theory}%
  \BibitemOpen
  \bibfield  {author} {\bibinfo {author} {\bibfnamefont {H.-P.}\ \bibnamefont {Breuer}}\ and\ \bibinfo {author} {\bibfnamefont {F.}~\bibnamefont {Petruccione}},\ }\href {https://books.google.com/books?hl=en&lr=&id=0Yx5VzaMYm8C&oi=fnd&pg=PA8&dq=Breuer+Fettuccine+theory+of+open+quantum+systems&ots=RV3nF-lD3E&sig=E2INy0zVe4CHf8DuAGWZPc57nRc} {\emph {\bibinfo {title} {The theory of open quantum systems}}}\ (\bibinfo  {publisher} {OUP Oxford},\ \bibinfo {year} {2002})\BibitemShut {NoStop}%
\bibitem [{\citenamefont {Scully}\ and\ \citenamefont {Zubairy}(1997)}]{scully1997quantum}%
  \BibitemOpen
  \bibfield  {author} {\bibinfo {author} {\bibfnamefont {M.~O.}\ \bibnamefont {Scully}}\ and\ \bibinfo {author} {\bibfnamefont {M.~S.}\ \bibnamefont {Zubairy}},\ }\href@noop {} {\emph {\bibinfo {title} {Quantum optics}}}\ (\bibinfo  {publisher} {Cambridge university press},\ \bibinfo {year} {1997})\BibitemShut {NoStop}%
\bibitem [{\citenamefont {Leibfried}\ \emph {et~al.}(2003)\citenamefont {Leibfried}, \citenamefont {Blatt}, \citenamefont {Monroe},\ and\ \citenamefont {Wineland}}]{Leib_trapped_ion_2003}%
  \BibitemOpen
  \bibfield  {author} {\bibinfo {author} {\bibfnamefont {D.}~\bibnamefont {Leibfried}}, \bibinfo {author} {\bibfnamefont {R.}~\bibnamefont {Blatt}}, \bibinfo {author} {\bibfnamefont {C.}~\bibnamefont {Monroe}},\ and\ \bibinfo {author} {\bibfnamefont {D.}~\bibnamefont {Wineland}},\ }\bibfield  {title} {\bibinfo {title} {Quantum dynamics of single trapped ions},\ }\href {https://doi.org/10.1103/RevModPhys.75.281} {\bibfield  {journal} {\bibinfo  {journal} {Rev. Mod. Phys.}\ }\textbf {\bibinfo {volume} {75}},\ \bibinfo {pages} {281} (\bibinfo {year} {2003})}\BibitemShut {NoStop}%
\bibitem [{\citenamefont {Nazir}\ and\ \citenamefont {McCutcheon}(2016)}]{nazir2016modelling}%
  \BibitemOpen
  \bibfield  {author} {\bibinfo {author} {\bibfnamefont {A.}~\bibnamefont {Nazir}}\ and\ \bibinfo {author} {\bibfnamefont {D.~P.}\ \bibnamefont {McCutcheon}},\ }\bibfield  {title} {\bibinfo {title} {Modelling exciton--phonon interactions in optically driven quantum dots},\ }\href {https://iopscience.iop.org/article/10.1088/0953-8984/28/10/103002/pdf} {\bibfield  {journal} {\bibinfo  {journal} {Journal of Physics: Condensed Matter}\ }\textbf {\bibinfo {volume} {28}},\ \bibinfo {pages} {103002} (\bibinfo {year} {2016})}\BibitemShut {NoStop}%
\bibitem [{\citenamefont {Vagov}\ \emph {et~al.}(2002)\citenamefont {Vagov}, \citenamefont {Axt},\ and\ \citenamefont {Kuhn}}]{Vagov_Quant_Dot_2002}%
  \BibitemOpen
  \bibfield  {author} {\bibinfo {author} {\bibfnamefont {A.}~\bibnamefont {Vagov}}, \bibinfo {author} {\bibfnamefont {V.~M.}\ \bibnamefont {Axt}},\ and\ \bibinfo {author} {\bibfnamefont {T.}~\bibnamefont {Kuhn}},\ }\bibfield  {title} {\bibinfo {title} {Electron-phonon dynamics in optically excited quantum dots: Exact solution for multiple ultrashort laser pulses},\ }\href {https://doi.org/10.1103/PhysRevB.66.165312} {\bibfield  {journal} {\bibinfo  {journal} {Phys. Rev. B}\ }\textbf {\bibinfo {volume} {66}},\ \bibinfo {pages} {165312} (\bibinfo {year} {2002})}\BibitemShut {NoStop}%
\bibitem [{\citenamefont {Blais}\ \emph {et~al.}(2021)\citenamefont {Blais}, \citenamefont {Grimsmo}, \citenamefont {Girvin},\ and\ \citenamefont {Wallraff}}]{Wallraff_2021_Circ_ED}%
  \BibitemOpen
  \bibfield  {author} {\bibinfo {author} {\bibfnamefont {A.}~\bibnamefont {Blais}}, \bibinfo {author} {\bibfnamefont {A.~L.}\ \bibnamefont {Grimsmo}}, \bibinfo {author} {\bibfnamefont {S.~M.}\ \bibnamefont {Girvin}},\ and\ \bibinfo {author} {\bibfnamefont {A.}~\bibnamefont {Wallraff}},\ }\bibfield  {title} {\bibinfo {title} {Circuit quantum electrodynamics},\ }\href {https://doi.org/10.1103/RevModPhys.93.025005} {\bibfield  {journal} {\bibinfo  {journal} {Rev. Mod. Phys.}\ }\textbf {\bibinfo {volume} {93}},\ \bibinfo {pages} {025005} (\bibinfo {year} {2021})}\BibitemShut {NoStop}%
\bibitem [{\citenamefont {Blais}\ \emph {et~al.}(2004)\citenamefont {Blais}, \citenamefont {Huang}, \citenamefont {Wallraff}, \citenamefont {Girvin},\ and\ \citenamefont {Schoelkopf}}]{Blais_2004_QED}%
  \BibitemOpen
  \bibfield  {author} {\bibinfo {author} {\bibfnamefont {A.}~\bibnamefont {Blais}}, \bibinfo {author} {\bibfnamefont {R.-S.}\ \bibnamefont {Huang}}, \bibinfo {author} {\bibfnamefont {A.}~\bibnamefont {Wallraff}}, \bibinfo {author} {\bibfnamefont {S.~M.}\ \bibnamefont {Girvin}},\ and\ \bibinfo {author} {\bibfnamefont {R.~J.}\ \bibnamefont {Schoelkopf}},\ }\bibfield  {title} {\bibinfo {title} {Cavity quantum electrodynamics for superconducting electrical circuits: An architecture for quantum computation},\ }\href {https://doi.org/10.1103/PhysRevA.69.062320} {\bibfield  {journal} {\bibinfo  {journal} {Phys. Rev. A}\ }\textbf {\bibinfo {volume} {69}},\ \bibinfo {pages} {062320} (\bibinfo {year} {2004})}\BibitemShut {NoStop}%
\bibitem [{\citenamefont {Prathik~Cherian}\ \emph {et~al.}(2019)\citenamefont {Prathik~Cherian}, \citenamefont {Chakraborty},\ and\ \citenamefont {Ghosh}}]{prathik_2019_thermalization}%
  \BibitemOpen
  \bibfield  {author} {\bibinfo {author} {\bibfnamefont {J.}~\bibnamefont {Prathik~Cherian}}, \bibinfo {author} {\bibfnamefont {S.}~\bibnamefont {Chakraborty}},\ and\ \bibinfo {author} {\bibfnamefont {S.}~\bibnamefont {Ghosh}},\ }\bibfield  {title} {\bibinfo {title} {On thermalization of two-level quantum systems},\ }\href {https://iopscience.iop.org/article/10.1209/0295-5075/126/40003/pdf} {\bibfield  {journal} {\bibinfo  {journal} {Europhysics Letters}\ }\textbf {\bibinfo {volume} {126}},\ \bibinfo {pages} {40003} (\bibinfo {year} {2019})}\BibitemShut {NoStop}%
\end{thebibliography}%

\end{document}